\providecommand{\U}[1]{\protect\rule{.1in}{.1in}}
\newtheorem{theorem}{Theorem}
\newtheorem{assumption}{Assumption}
\newtheorem{corollary}{Corollary}
\newtheorem{example}{Example}
\newtheorem{lemma}{Lemma}
\newtheorem{remark}{Remark}
\newenvironment{proof}[1][Proof]{\noindent\textbf{#1.} }{\ \rule{0.5em}{0.5em}}
\def\@biblabel#1{\hspace*{-\labelsep}}
\begin{document}

\title{Identifying the Distribution of Treatment Effects \\under Support Restrictions}
\author{Ju Hyun Kim\thanks{University of North Carolina at Chapel Hill,
juhkim@email.unc.edu. I am greatly indebted to my advisor Bernard Salani\'{e}
for numerous discussions and his support throughout this project. I am also
grateful to Jushan Bai, Pierre-Andr\'{e} Chiappori, Serena Ng, and Christoph
Rothe for their encouragement and helpful comments. This paper has benefited
from discussions with Andrew Chesher, Xavier D'Haultfoeuille, Alfred Galichon,
Kyle Jurado, Shakeeb Khan, Toru Kitagawa, Dennis Kristensen, Seunghoon Na,
Salvador Navarro, Byoung Park, and Charles Zheng. While writing this paper, I
was generously supported by a Wueller pre-dissertation award fellowship from
the Economics Department at Columbia University. All errors are mine.}}
\date{\today }
\maketitle

\begin{abstract}
The distribution of treatment effects (DTE) is often of interest in the
context of welfare policy evaluation. In this paper, I consider partial
identification of the DTE under known marginal distributions and support
restrictions on the potential outcomes. Examples of such support restrictions
include monotone treatment response, concave treatment response, convex
treatment response, and the Roy model of self-selection. To establish
informative bounds on the DTE, I formulate the problem as an optimal
transportation linear program and develop a new dual representation to
characterize the identification region with respect to the known marginal
distributions. I use this result to derive informative bounds for concrete
economic examples. I also propose an estimation procedure and illustrate the
usefulness of my approach in the context of an empirical analysis of the
effects of smoking on infant birth weight. The empirical results show that
monotone treatment response has a substantial identifying power for the DTE
when the marginal distributions of the potential outcomes are given.\newline

\end{abstract}

\section{Introduction}

In this paper, I study partial identification of the distribution of treatment
effects (DTE) under a broad class of restrictions on potential outcomes. The
DTE is defined as follows: for any fixed $\delta\in$ $\mathbb{R},$%
\[
F_{\Delta}\left(  \delta\right)  =\Pr\left(  \Delta\leq\delta\right)  ,
\]
with the treatment effect $\Delta=Y_{1}-Y_{0}$ where $Y_{0}$ and $Y_{1}$
denote the potential outcomes without and with some treatment, respectively.
The question that I am interested in is how treatment effects or program
benefits are distributed across the population.

In the context of welfare policy evaluation, distributional aspects of the
effects are often of interest, e.g. "which individuals are severely affected
by the program?" or "how are those benefits distributed across the
population?". As Heckman et al. (1997) pointed out, the DTE is particularly
important when treatments produce nontransferable and nonredistributable
benefits such as outcomes in health interventions, academic achievement in
educational programs, and occupational skills in job training programs or when
some individuals experience severe welfare changes at the tails of the impact distribution.

Although most empirical research on program evaluation has focused on average
treatment effects (ATE) or marginal distributions of potential outcomes, these
parameters are limited in their ability to capture heterogeneity of the
treatment effects at the individual level. For example, consider two projects
with the same average benefits, one of which concentrates benefits among a
small group of people, while the other distributes benefits evenly across the
population. ATE cannot differentiate between the two projects because it shows
only the central tendency of treatment effects as a location parameter,
whereas the DTE captures information about the entire distribution. Marginal
distributions of $Y_{0}$ and $Y_{1}$ are also uninformative about parameters
on the individual specific heterogeneity in treatment effects including the
fraction of the population that benefits from a program $\Pr\left(  Y_{1}\geq
Y_{0}\right)  ,$ the fraction of the population that has gains or losses in a
specific range $\Pr\left(  \delta^{L}\leq Y_{1}-Y_{0}\leq\delta^{U}\right)  $,
the $q$-quantile of the impact distribution $\inf\left\{  \delta:F_{\Delta
}\left(  \delta\right)  >q\right\}  $, etc. See, e.g. Heckman et al. (1997),
Abbring and Heckman (2007), and Firpo and Ridder (2008), among others for more details.

Despite the importance of these parameters in economics, related empirical
research has been hampered by difficulties associated with identifying the
entire distribution of effects. The central challenge arises from a missing
data problem: under mutually exclusive treatment participation,
econometricians can observe either a treated outcome or an untreated outcome,
but both potential outcomes $Y_{0}$ and $Y_{1}$ are never simultaneously
observed for each agent. Therefore, the joint distribution of $Y_{0}$ and
$Y_{1}$\ is not typically exactly identified, which\ complicates
identification of the DTE, which is point-identified only under strong
assumptions about\ each individual's rank across the treatment status or
specifications on the joint distribution of $Y_{0}$ and $Y_{1}$, which are
often not justified by economic theory or plausible priors.

This paper relies on partial identification to avoid strong assumptions and
remain cautious of assumption-driven conclusions. In the related literature,
Manski (1997) established bounds on the DTE under monotone treatment response
(MTR), which assumes that the treatment effects are nonnegative. Fan and Park
(2009, 2010) and Fan and Wu (2010) adopted results from copula theory to
establish bounds on the DTE, given marginal distributions. Unfortunately, both
approaches deliver bounds that are often too wide to be informative in
practice. Since these two conditions are often plausible in practice, a
natural way to tighten the bounds is considering both MTR and given marginal
distributions of potential outcomes. However, methods of establishing
informative bounds on the DTE under these two restrictions have remained
unanswered. Specifically, in the existing copula approach it is technically
challenging to find out the particular joint distributions that achieve the
best possible bounds on the DTE under the two restrictions.

In this paper, I propose a novel approach to circumvent these difficulties
associated with identifying the DTE under these two restrictions.
Methodologically, my approach involves formulating the problem as an optimal
transportation linear program and embedding support restrictions on the
potential outcomes including MTR into the cost function. A key feature of the
optimal transportation approach is that it admits a dual formulation. This
makes it possible to derive the best possible bounds from the optimization
problem with respect to given marginal distributions but not the joint
distribution, which is an advantage over the copula approach. Specifically,
the linearity of support restrictions in the entire joint distribution allows
for the penalty formulation. Since support restrictions hold with probability
one, the corresponding multiplier on those constraints should be infinite. To
the best of my knowledge, the dual representation of such an optimization
problem with an infinite Lagrange multiplier has not been derived in the
literature. In this paper, I develop a dual representation for $\{0,1,\infty
\}$-valued costs by extending the existing result on duality for
$\{0,1\}$-valued costs.

My approach applies to general support restrictions on the potential outcomes
as well as MTR. Such support restrictions encompass shape restrictions on the
treatment response function that can be written as $g\left(  Y_{0}%
,Y_{1}\right)  \leq0$ with probability one for any continuous function
$g:\mathbb{R}\rightarrow\mathbb{R}$, including MTR, concave treatment
response, and convex treatment response.\footnote{Let $Y_{d}=f\left(
t_{d}\right)  $ where $Y_{d}$ is a potential outcome and $t_{d}$ is a level of
inputs for multi-valued treatment status $d.$\ Concave treatment response and
convex treatment response assume that the treatment response function $f$ is
concave and convex, respectively.} Moreover, considering support restrictions
opens the way to identify the DTE in the Roy model of self-selection and the
DTE conditional on some sets of potential outcomes.

Numerous examples in applied economics fit into this setting because marginal
distributions are point or partially identified under weak conditions and
support restrictions are often implied by economic theory and plausible
priors. The marginal distributions of the potential outcomes are
point-identified in randomized experiments or under unconfoundedness. Even if
selection depends on unobservables, they are point-identified for compliers
under the local average treatment effects assumptions (Imbens and Rubin
(1997), Abadie (2002)) and are partially identified in the presence of
instrumental variables (Kitagawa (2009)). Also, MTR has been defended as a
plausible restriction in empirical studies of returns to education (Manski and
Pepper (2000)), the effect of funds for low-ability pupils (Haan (2012)), the
impact of the National School Lunch Program on children's health (Gundersen et
al. (2011)), and various medical treatments (Bhattacharya et al. (2005,
2012)). Researchers sometimes have plausible information on the shape of
treatment response functions from economic theory or from empirical results in
previous studies. For example, based on diminishing marginal returns to
production, one may find it plausible to assume that the marginal effect of
improved maize seed adoption on productivity diminishes as the level of
adoption increases, holding other inputs fixed. Also, one may want to assume
that the marginal adverse effect of an additional cigarette on infant birth
weight diminishes as the number of cigarettes increases as shown in Hoderlein
and Sasaki (2013). In the empirical literature, concave treatment response has
been assumed for returns to schooling (Okumura (2010)) and convex treatment
response for the effect of education on smoking (Boes (2010)).\footnote{All of
these studies considered ATE\ or marginal distributions of potential outcomes
only.}

A considerable amount of the literature has used the Roy model to describe
people's self-selection ranging from immigration to the U.S. (Borjas (1987))
to college entrance (Heckman et al. (2011)). Also, heterogeneity in treatment
effects for unobservable subgroups defined by particular sets of potential
outcomes has been of central interest in various empirical studies.
Heterogeneous peer effects and tracking impacts (Duflo et al. (2011)) and
heterogeneous class size effects (Ding and Lehrer (2008)) by the level of
students' performance, and the heterogeneity in smoking effects by potential
infant's birth weight (Hoderlein and Sasaki (2013)) have also been discussed
in the literature focusing on heterogeneous average effects.

I\ apply my method to an empirical analysis of the effects of smoking on
infant birth weight. I propose an estimation procedure and illustrate the
usefulness of my approach by showing that MTR has a substantial identifying
power for the distribution of smoking effects given marginal distributions. As
a support restriction, I\ assume that smoking has nonpositive effects on
infant birth weight. Smoking not only has a direct impact on infant birth
weight, but is also associated with unobservable factors that affect infant
birth weight. To overcome the endogenous selection problem, I\ make use of the
tax increase in Massachusetts in January 1993 as a source of exogenous
variation. I point-identify marginal distributions of potential infant birth
weight with and without smoking for compliers, which indicate pregnant women
who changed their smoking status from smoking to nonsmoking in response to
this tax shock. To estimate the marginal distributions of potential infant
birth weight, I\ use the instrumental variables (IV) method presented in
Abadie et al. (2002). Furthermore, I\ estimate the DTE\ bounds using plug-in
estimators based on the estimates of marginal distribution functions. As a
by-product, I find that the average adverse effect of smoking is more severe
for women with a higher tendency to smoke and that smoking women with some
college and college graduates are less likely to give births to low birth
weight infants than other smoking women.

In the next section, I\ give a formal description of the basic setup,
notation, terms and assumptions throughout this paper and present concrete
examples of support restrictions. I review the existing method of identifying
the DTE given marginal distributions without support restrictions to
demonstrate its limits in the presence of support restrictions. I also briefly
discuss the optimal transportation approach to describe the key idea of my
identification strategy. Section 3 formally characterizes the identification
region of the DTE under general support restrictions and derives informative
bounds for economic examples from the characterization. Section 4 provides
numerical examples to assess the informativeness of my new bounds\ and
analyzes sources of identification gains. Section 5 illustrates\ the
usefulness of these bounds by applying DTE bounds derived in Section 3 to an
empirical analysis of the impact distribution of smoking on infant birth
weight. Section 6 concludes and discusses interesting extensions.

\section{Basic Setup, DTE\ Bounds and Optimal Transportation Approach}

In this section, I\ present the potential outcomes setup that this study is
based on, the notation, and the assumptions used throughout this study.
I\ demonstrate that the bounds on the DTE established without support
restrictions are not the best possible bounds in the presence of support
restrictions. Then I\ propose a new method to derive sharp bounds on the DTE
based on the optimal transportation framework.

\subsection{Basic Setup}

The setup that I\ consider is as follows: the econometrician observes a
realized outcome variable $Y$ and a treatment participation indicator $D$ for
each individual$,$ where $D=1$ indicates treatment participation while $D=0$
nonparticipation. An observed outcome $Y$ can be written as $Y=DY_{1}%
+(1-D)Y_{0}$. Only $Y_{1}$ is observed for the individual who takes the
treatment while only $Y_{0}$ is observed for the individual who does not take
the treatment, where $Y_{0}$ and$\ Y_{1}$ are the potential outcome without
and with treatment, respectively. Treatment effects $\Delta$ are defined as
$\Delta=Y_{1}-Y_{0}$ the difference of potential outcomes. The objective of
this study is to identify the distribution function of treatment effects
$F_{\Delta}\left(  \delta\right)  =\Pr\left(  Y_{1}-Y_{0}\leq\delta\right)  $
from observed pairs $\left(  Y,D\right)  $ for fixed $\delta\in\mathbb{R}$ $.$

To avoid notational confusion, I\ differentiate between the
\emph{distribution} and the \emph{distribution function}. Let $\mu_{0}$,
$\mu_{1} $ and $\pi$ denote marginal distributions of $Y_{0}$ and $Y_{1}$, and
their joint distribution, respectively. That is, for any measurable set
$A_{d}$\ in $\mathbb{R}$, $\mu_{d}\left(  A_{d}\right)  =\Pr\left\{  Y_{d}\in
A_{d}\right\}  $ for $d\in\left\{  0,1\right\}  $ and $\pi\left(  A\right)
=\Pr\left\{  \left(  Y_{0},Y_{1}\right)  \in A\right\}  $ for any measurable
set $A$ in $\mathbb{R}^{2}$. In addition, let $F_{0},$ $F_{1}$\ and $F$ denote
marginal distribution functions of $Y_{0}$ and $Y_{1},$ and their joint
distribution function, respectively. That is, $F_{d}\left(  y_{d}\right)
=\mu_{d}\left(  (-\infty,y_{d}]\right)  $ and $F\left(  y_{0},y_{1}\right)
=\pi\left(  (-\infty,y_{0}]\times(-\infty,y_{1}]\right)  $ for any $y_{d}%
\in\mathbb{R}$ and $d\in\left\{  0,1\right\}  $. Let $\mathcal{Y}_{0}$ and
$\mathcal{Y}_{1}$ denote the support of $Y_{0}$ and $Y_{1}$, respectively.

In this paper, the identification region of $F_{\Delta}\left(  \delta\right)
$ is obtained for fixed marginal distributions. When marginal distributions
are only partially identified, DTE bounds are obtained by taking the union of
the bounds over all possible pairs of marginal distributions. Marginal
distributions of potential outcomes are point-identified in randomized
experiments or under selection on observables. Furthermore, previous studies
have shown that even if the selection is endogenous, marginal distributions of
potential outcomes are point or partially identified under relatively weak
conditions. Imbens and Rubin (1997) and Abadie (2002) showed that marginal
distributions for compliers are point-identified under the local average
treatment effects (LATE) assumptions, and Kitagawa (2009) obtained the
identification region of marginal distributions under IV
conditions.\footnote{Note that the conditions considered in these studies do
not restrict dependence between two potential outcomes.}

I\ impose the following assumption on the fixed marginal distribution
functions throughout this paper:

\begin{assumption}
The marginal distribution functions $F_{0}$ and $F_{1}$ are both absolutely
continuous with respect to the Lebesgue measure on $\mathbb{R}$.
\end{assumption}

In this paper, I obtain \emph{sharp} bounds on the DTE. Sharp bounds are
defined as the best possible bounds on the collection of DTE\ values that are
compatible with the observations $(Y,D)$ and given restrictions. Let
$F_{\Delta}^{L}(\delta)$ and $F_{\Delta}^{U}(\delta)$ denote the lower and
upper bounds on the DTE $F_{\Delta}(\delta)$:
\[
F_{\Delta}^{L}(\delta)\leq F_{\Delta}(\delta)\leq F_{\Delta}^{U}(\delta).
\]
If there exists an underlying joint distribution function $F$ that has fixed
marginal distribution functions $F_{0}$ and $F_{1}$ and generates $F_{\Delta
}(\delta)=F_{\Delta}^{L}(\delta)$ for fixed $\delta\in\mathbb{R},$ then
$F_{\Delta}^{L}(\delta)$ is called the \emph{sharp} lower bound. The
\emph{sharp} upper bound can be also defined in the same way. Note that
throughout this study, sharp bounds indicate \emph{pointwise} sharp bounds in
the sense that the underlying joint distribution function $F$ achieving sharp
bounds is allowed to vary with the value of $\delta.\footnote{If the
underlying joint distribution function $F$ does not depend on $\delta$, then
the sharp bounds are called \emph{uniformly} sharp bounds. Uniformly sharp
bounds are outside of the scope of this paper. For more details on uniform
sharpness, see Firpo and Ridder (2008).}$

To identify the DTE, I\ consider support restrictions, which can be written
as
\[
\Pr\left(  \left(  Y_{0},Y_{1}\right)  \in C\right)  =1,
\]
for some closed set $C$ in $\mathbb{R}^{2}$. This class of restrictions
encompasses any restriction that can be written as
\begin{equation}
g\left(  Y_{0},Y_{1}\right)  \leq0\text{ with probability one},\label{h}%
\end{equation}
for any continuous function $g:\mathbb{R}\times\mathbb{R}\rightarrow
\mathbb{R}.$ For example, shape restrictions on the treatment response
function such as MTR, concave response, and convex response can be written in
the form (1). Furthermore, identifying the DTE under support
restrictions opens the way to identify other parameters such as the DTE
conditional on the treated and the untreated in the Roy model, and the DTE
conditional on potential outcomes.%
\begin{figure}
[ptb]
\begin{center}
\includegraphics[
natheight=2.979300in,
natwidth=8.687900in,
height=2.4102in,
width=6.9764in
]%
{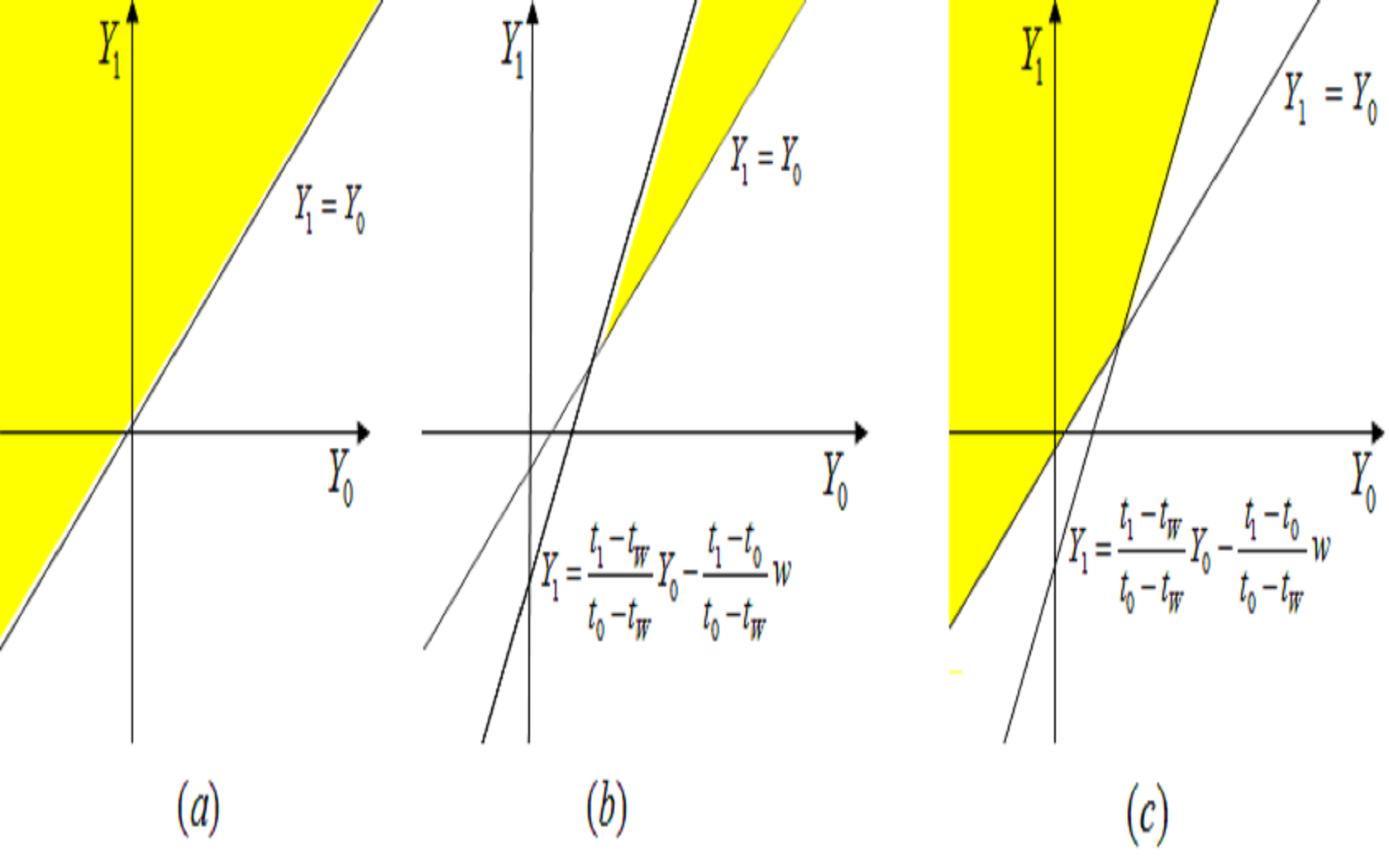}%
\caption{(a) MTR, (b) concave treatment response, (c) convex treatment
response}%
\label{shaperestrictions}%
\end{center}
\end{figure}

\begin{example}
(Monotone Treatment Response) MTR only requires that the potential outcomes be
weakly monotone in treatment with probability one:%
\[
\Pr\left(  Y_{1}\geq Y_{0}\right)  =1.
\]
MTR restricts the support of $(Y_{0},Y_{1})$ to the region above the straight
line $Y_{1}=Y_{0},$ as shown in Figure 1(a).
\end{example}%

\begin{figure}
[ptb]
\begin{center}
\includegraphics[
height=2.7259in,
width=6.6098in
]%
{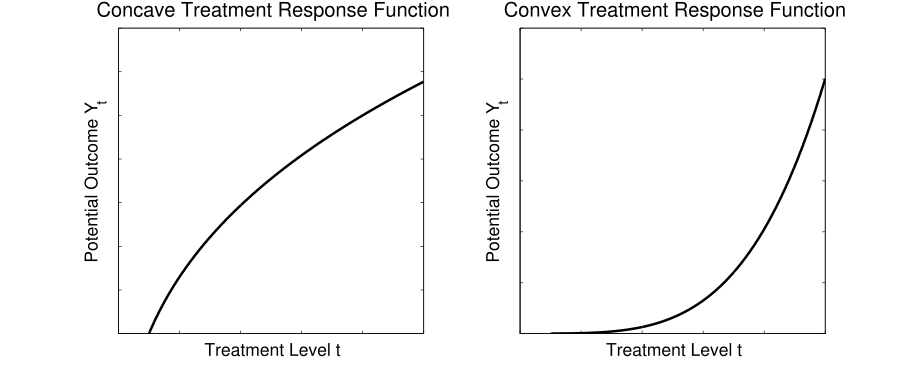}%
\caption{Concave treatment response and convex treatment response}%
\label{concaveconvex}%
\end{center}
\end{figure}

\begin{example}
(Concave/Convex Treatment Response) Consider panel data where the outcome
without treatment and an outcome either with the low-intensity treatment or
with the high-intensity treatment is observed for each
individual.\footnote{Various empirical studies are based on this structure,
e.g. Newhouse et al. (2007), Bandiera et al. (2008), and Suri (2011), among
others.} Let $W$ denote the observed outcome without treatment, while $Y_{0}$
and $Y_{1}$ denote potential outcomes under low-intensity treatment and
high-intensity treatment, respectively. Suppose that the treatment response
function is nondecreasing and that either $\left(  W,Y_{0}\right)  $ or
$\left(  W,Y_{1}\right)  $ is observed for each individual. Concavity and
convexity of the treatment response function imply $\Pr\left(  \frac{Y_{0}%
-W}{t_{0}-t_{W}}\geq\frac{Y_{1}-Y_{0}}{t_{1}-t_{0}},Y_{1}\geq Y_{0}\geq
W\right)  =1$ and $\Pr\left(  \frac{Y_{0}-W}{t_{0}-t_{W}}\geq\frac{Y_{1}%
-Y_{0}}{t_{1}-t_{0}},Y_{1}\geq Y_{0}\geq W\right)  =1$, respectively, where
$t_{d}$ is a level of input for each treatment status $d\in\left\{
0,1\right\}  $ while $t_{W}$\ is a level of input without the treatment and
$t_{W}<t_{0}<t_{1}$. \ Given $W=w,$ concavity and convexity of the treatment
response function restrict the support of $(Y_{0},Y_{1})$ to the region below
the straight line $Y_{1}=\frac{t_{1}-t_{W}}{t_{0}-t_{W}}Y_{0}-\frac
{t_{1}-t_{0}}{t_{0}-t_{W}}w$ and above the straight line $Y_{1}=Y_{0}$, and to
the region above two straight lines $Y_{1}=\frac{t_{1}-t_{W}}{t_{0}-t_{W}%
}Y_{0}-\frac{t_{1}-t_{0}}{t_{0}-t_{W}}w$ and $Y_{1}=Y_{0},$ respectively, as
shown in Figures 1(b) and (c).
\end{example}

\begin{example}
(Roy Model) In the Roy model, individuals self-select into treatment when
their benefits from the treatment are greater than nonpecuniary costs for
treatment participation. The extended Roy model assumes that the nonpecuniary
cost is deterministic with the following selection equation:%
\[
D=\boldsymbol{1}\left\{  Y_{1}-Y_{0}\geq\mu_{C}\left(  Z\right)  \right\}  ,
\]
where $\mu_{C}\left(  Z\right)  $ represents nonpecuniary costs with a vector
of observables $Z$. Then treated $(D=1)$ and untreated people $(D=0)$ are the
observed groups satisfying support restrictions $\left\{  Y_{1}-Y_{0}\geq
\mu_{C}\left(  Z\right)  \right\}  $ and $\left\{  Y_{1}-Y_{0}<\mu_{C}\left(
Z\right)  \right\}  $, respectively.
\end{example}

\begin{example}
(DTE conditional on Potential Outcomes) The conditional DTE for the
unobservable subgroup whose potential outcomes belong to a certain set $C$ is
written as
\[
\Pr\left\{  Y_{1}-Y_{0}\leq\delta|\left(  Y_{0},Y_{1}\right)  \in C\right\}  .
\]
For example, the distribution of the college premium for people whose
potential wage without college degrees is less than or equal to $\theta$ can
be written as
\[
\Pr\left\{  Y_{1}-Y_{0}\leq\delta|Y_{0}\leq\theta\right\}  ,
\]
where $Y_{0}$ and $Y_{1}$ denote the potential wage without and with college
degrees, respectively.
\end{example}

\subsection{DTE\ Bounds without Support Restrictions}

Prior to considering support restrictions, I\ briefly discuss bounds on the
DTE given marginal distributions without those restrictions.

\begin{lemma}
\label{Makarov}\textit{(Makarov (1981)}) Let
\begin{align*}
F_{\Delta}^{L}\left(  \delta\right)   & =\sup_{y}\max\left(  F_{1}\left(
y\right)  -F_{0}\left(  y-\delta\right)  ,0\right)  ,\\
F_{\Delta}^{U}\left(  \delta\right)   & =1+\inf_{y}\min\left(  F_{1}\left(
y\right)  -F_{0}\left(  y-\delta\right)  ,0\right)  .
\end{align*}
Then for any $\delta\in\mathbb{R},$
\[
F_{\Delta}^{L}\left(  \delta\right)  \leq F_{\Delta}\left(  \delta\right)
\leq F_{\Delta}^{U}\left(  \delta\right)  ,
\]
and both $F_{\Delta}^{L}\left(  \delta\right)  $ and $F_{\Delta}^{U}\left(
\delta\right)  $ are sharp.
\end{lemma}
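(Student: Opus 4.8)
The plan is to prove the Makarov bounds by exhibiting them as a consequence of the classical Fréchet--Hoeffding inequalities on joint distributions with fixed marginals, and then to establish sharpness by explicit construction. Let me sketch both the inequality and the sharpness.

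First I would derive the lower and upper bounds as valid inequalities. Write $F_\Delta(\delta) = \Pr(Y_1 - Y_0 \leq \delta) = \Pr(Y_1 \leq Y_0 + \delta)$. The key observation is that for any fixed real $y$, the event $\{Y_1 - Y_0 \leq \delta\}$ can be bracketed by events built from the one-dimensional sets $\{Y_1 \leq y\}$ and $\{Y_0 > y - \delta\}$. Concretely, I would use the elementary set inclusions relating $\{Y_1 - Y_0 \leq \delta\}$ to $\{Y_1 \leq y\} \cup \{Y_0 \geq y - \delta\}$ (for the upper direction) and to $\{Y_1 \leq y\} \cap \{Y_0 > y - \delta\}^c$-type complements (for the lower direction). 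Applying the Bonferroni-type bounds $\Pr(A \cap B) \geq \Pr(A) + \Pr(B) - 1$ and $\Pr(A \cup B) \leq \Pr(A) + \Pr(B)$, together with $\Pr(A \cap B) \geq 0$ and $\Pr(A \cup B) \leq 1$, yields for each $y$:
\[
\max\!\left(F_1(y) - F_0(y-\delta),\, 0\right) \leq F_\Delta(\delta) \leq 1 + \min\!\left(F_1(y) - F_0(y-\delta),\, 0\right).
\]
Taking the supremum over $y$ on the left and the infimum over $y$ on the right gives the stated inequalities, since these hold simultaneously for every $y$.

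The harder and more interesting part is sharpness: for fixed $\delta$, I must construct a joint distribution $\pi$ with marginals $\mu_0,\mu_1$ that attains the lower bound (and separately one that attains the upper bound). This is where the optimal-transportation/copula structure does the work. The natural approach is to recognize the supremum in $F_\Delta^L$ as the value of an optimization over couplings and to build the optimal coupling explicitly. Let $y^*$ (approximately) attain the supremum. The idea is to construct a coupling that places as much mass as possible \emph{inside} the region $\{Y_1 \leq y^*,\, Y_0 > y^* - \delta\}$ — which forces $Y_1 - Y_0 \leq \delta$ there — while arranging the remaining mass to \emph{avoid} the event $\{Y_1 - Y_0 \leq \delta\}$ as much as the marginal constraints permit. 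Because Assumption~1 gives absolutely continuous (hence continuous) $F_0, F_1$, I can use the quantile transforms $F_0^{-1}, F_1^{-1}$ to build such a coupling via a comonotone-type or piecewise rearrangement construction, and verify by direct computation that its induced $F_\Delta(\delta)$ equals the bound. The continuity of the marginals is precisely what guarantees the supremum/infimum are attained in a way that makes the constructed $\pi$ have the right marginals without atoms causing slack.

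I expect the main obstacle to be the sharpness construction rather than the inequality. Verifying that a candidate coupling has the prescribed marginals $\mu_0$ and $\mu_1$ while simultaneously concentrating or excluding mass relative to the diagonal band $\{y_1 - y_0 \leq \delta\}$ requires care: one must check that the "saturating" set on which mass is forced into the event has exactly the measure $F_1(y^*) - F_0(y^* - \delta)$ under \emph{both} marginals, and that the complementary mass can be coupled entirely outside the band. This is a measure-matching argument that leans essentially on absolute continuity (to avoid atom-splitting) and on the optimizing choice of $y^*$. Since this is a classical result of Makarov, I would cite the original construction for the detailed verification and present only the coupling schematically, noting that the optimal-transportation dual developed later in the paper provides an alternative, more systematic route to the same conclusion.
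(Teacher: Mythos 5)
Your proof takes essentially the same route as the paper: there the validity of both bounds is obtained by applying Fr\'{e}chet (Bonferroni) lower bounds to the two rectangles $\left\{ Y_{0}\geq y-\delta ,\,Y_{1}\leq y\right\}$ and $\left\{ Y_{0}<y-\delta ,\,Y_{1}>y\right\}$, exactly as in your displayed inequality, and sharpness is likewise deferred to Makarov (1981), with the paper supplementing the citation by writing down the explicit copulas that attain each bound under the absolute-continuity assumption. Your schematic coupling construction is precisely what those copulas implement, so there is no substantive difference in approach.
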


Henceforth, I\ call these bounds Makarov bounds. One way to bound the DTE is
to use joint distribution bounds since the DTE can be obtained from the joint
distribution. When the marginal distributions of $Y_{0}$ and $Y_{1}$ are
given, Fr\'{e}chet inequalities provide some information on their unknown
joint distribution as follows: for any measurable sets $A_{0}$ and $A_{1}$\ in
$\mathbb{R}$,
\[
\max\left\{  \mu_{0}\left(  A_{0}\right)  +\mu_{1}\left(  A_{1}\right)
-1,0\right\}  \leq\pi\left(  A_{0}\times A_{1}\right)  \leq\min\left\{
\mu_{0}\left(  A_{0}\right)  ,\mu_{1}\left(  A_{1}\right)  \right\}  .
\]
Consider the event $\left\{  Y_{0}\in A_{0},Y_{1}\in A_{1}\right\}  $ for any
interval $A_{d}=[a_{d},b_{d}]$ with $a_{d}<b_{d}$ \ and $d\in\left\{
0,1\right\}  .$ In \ Figure 3, $\pi\left(  A_{0}\times A_{1}\right)
$ corresponds to the probability of the shaded \textit{rectangular} region in
the support space of $\left(  Y_{0},Y_{1}\right)  .\footnote{If $A_{0}$ and
$A_{1}$ are given as the unions of multiple intervals, $\left\{  Y_{0}\in
A_{0},Y_{1}\in A_{1}\right\}  $ would correspond to multiple
\textit{rectangular} regions.}$ Note that since marginal distributions are
defined in the one dimensional space, they are informative on the joint
distribution for rectangular regions in the two-dimensional support
space of $\left(  Y_0,Y_1\right)  $, as illustrated in Figure 3.%
\begin{figure}
[ptb]
\begin{center}
\includegraphics[
natheight=3.301800in,
natwidth=4.416600in,
height=2.3385in,
width=3.1185in
]%
{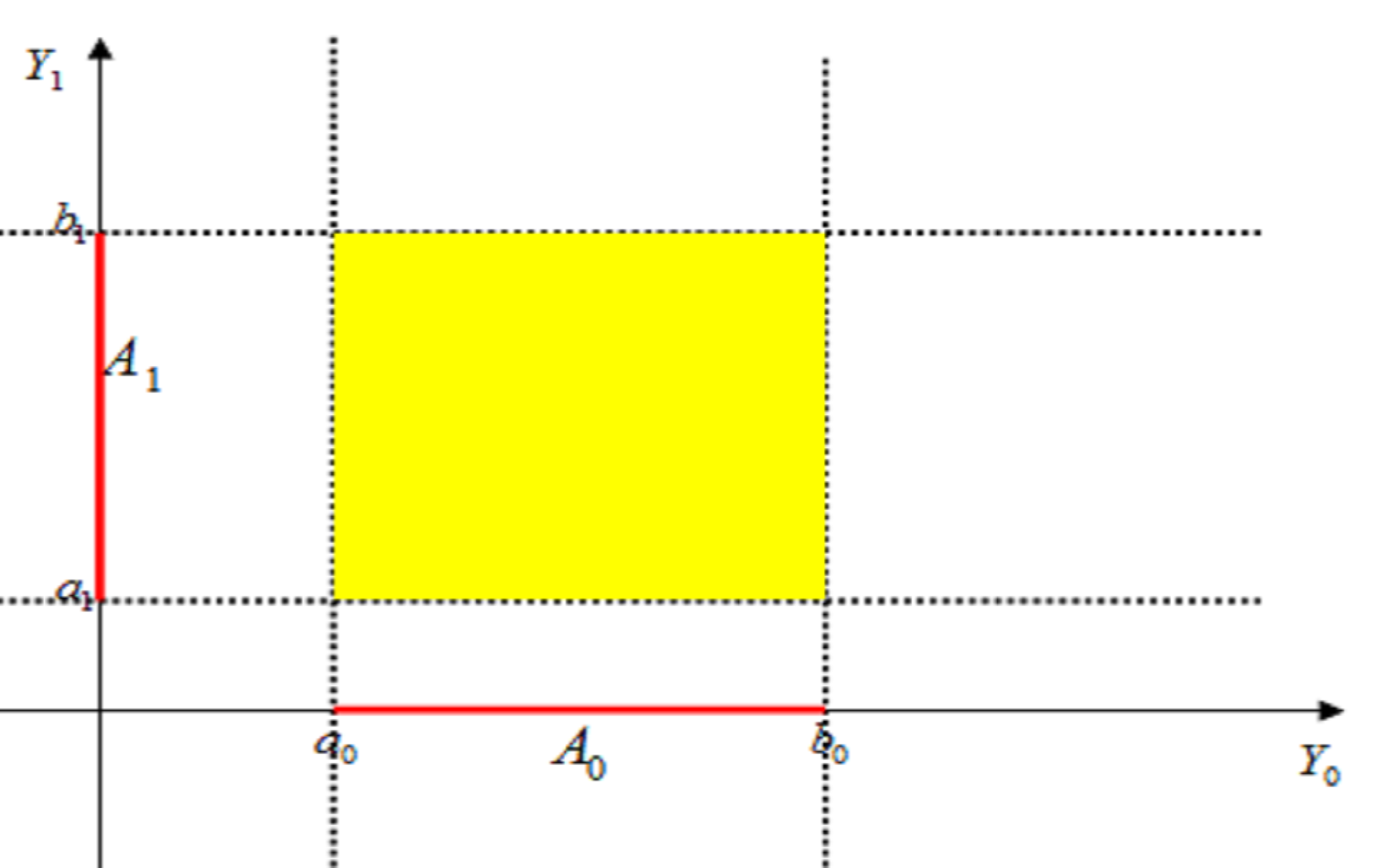}%
\caption{$\{Y_0\in A_0,Y_1\in A_1\}$}%
\label{A0A1}%
\end{center}
\end{figure}

Graphically, the DTE corresponds to the region below the straight line
$Y_{1}=Y_{0}+\delta$ in the support space as shown in Figure
4. Since the given marginal distributions are informative
on the joint distribution for rectangular regions in the support space, one
can bound the DTE by considering two rectangles $\left\{  Y_{0}\geq
y-\delta,Y_{1}\leq y\right\}  $ and $\left\{  Y_{0}<y^{\prime}-\delta
,Y_{1}>y^{\prime}\right\}  $ for any $\left(  y,y^{\prime}\right)
\in\mathbb{R}^{2}.$ Although the probability of each rectangle is not
point-identified, it can be bounded by Fr\'{e}chet inequalities.\footnote{Note
that Fr\'{e}chet lower bounds on $\Pr\left\{  Y_{0}\geq y^{\prime}%
-\delta,Y_{1}\leq y^{\prime}\right\}  $ and $\Pr\left\{  Y_{0}<y^{\prime
}-\delta,Y_{1}>y^{\prime}\right\}  $\ are sharp. They are both achieved when
$Y_{0}$ and $Y_{1}$ are perfectly positively dependent.} Since the DTE is
bounded from below by the Fr\'{e}chet lower bound on $\Pr\left\{  Y_{0}\geq
y-\delta,Y_{1}\leq y\right\}  $ for any $y\in\mathbb{R},$ the lower bound on
the DTE is obtained as follows:
\[
\underset{y}{\sup}\max\left(  F_{1}\left(  y\right)  -F_{0}\left(
y-\delta\right)  ,0\right)  \leq F_{\Delta}\left(  \delta\right)  .
\]
Similarly, the DTE is bounded from above by $1-\Pr\left\{  Y_{0}<y^{\prime
}-\delta,Y_{1}>y^{\prime}\right\}  $ for any $y^{\prime}\in\mathbb{R}$.
Therefore, the upper bound on the DTE is obtained by the Fr\'{e}chet lower
bound on $\Pr\left\{  Y_{0}<y^{\prime}-\delta,Y_{1}>y^{\prime}\right\}  $ as
follows:
\[
F_{\Delta}\left(  \delta\right)  \leq1-\sup_{y}\max\left(  F_{0}\left(
y-\delta\right)  -F_{1}\left(  y\right)  ,0\right)  .
\]
Makarov (1981) proved that those lower and upper bounds are
sharp.\footnote{One may wonder if multiple rectangles below $Y_{1}%
=Y_{0}+\delta$ that overlap one another could yield the more improved lower
bound. However, if the Fr\'{e}chet lower bound on another rectangle $\left\{
Y_{0}\geq y^{\prime\prime}-\delta,Y_{1}\leq y^{\prime\prime}\right\}  $ is
added and the Fr\'{e}chet upper bound on the intersection of the two
rectangles is subtracted, it is smaller than or equal to the lower bound
obtained from the only one rectangle.}%

\begin{figure}
[ptb]
\begin{center}
\includegraphics[
natheight=4.728800in,
natwidth=7.604300in,
height=2.3929in,
width=3.8303in
]%
{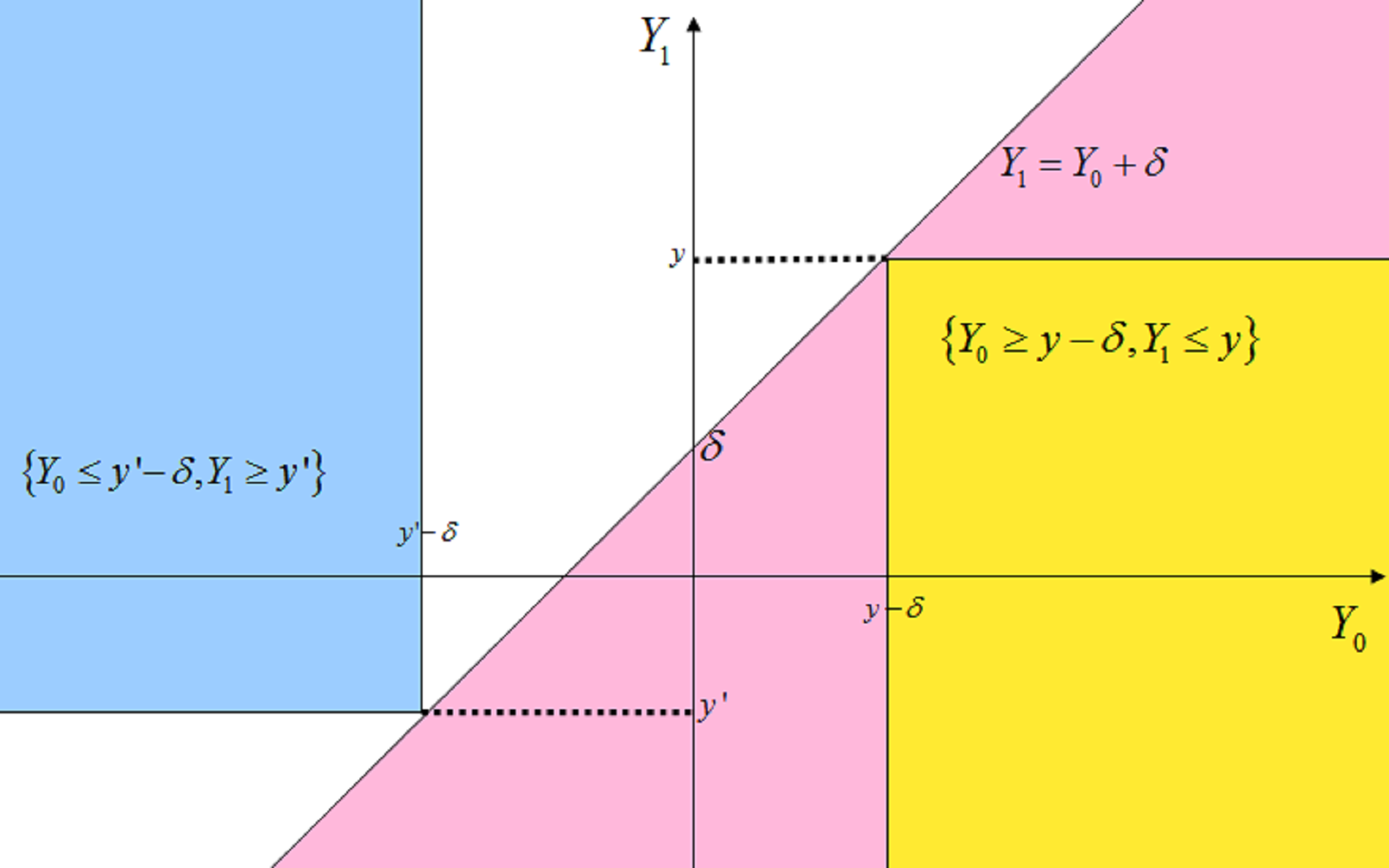}%
\caption{Makarov bounds}%
\label{Makarovbounds}%
\end{center}
\end{figure}

If the marginal distributions of $Y_{0}$ and $Y_{1}$ are both absolutely
continuous with respect to the Lebesgue measure on $\mathbb{R}$, then the
Makarov upper bound and lower bound are achieved when $F\left(  y_{0}%
,y_{1}\right)  =C_{s}^{L}\left(  F_{0}\left(  y_{0}\right)  ,F_{1}\left(
y_{1}\right)  \right)  $ and when $F\left(  y_{0},y_{1}\right)  =C_{t}%
^{U}\left(  F_{0}\left(  y_{0}\right)  ,F_{1}\left(  y_{1}\right)  \right)  $
respectively, where%
\begin{align*}
s  & =F_{\Delta}^{U}\left(  \delta\right)  \text{ and }t=F_{\Delta}^{L}\left(
\delta^{-}\right)  ,\\
C_{s}^{U}\left(  u,v\right)   & =\left\{
\begin{array}
[c]{ll}%
\min\left(  u+s-1,v\right)  , & \text{ }1-s\leq u\leq1,0\leq v\leq s,\\
\max\left(  u+v-1,0\right)  , & \text{ elsewhere,}%
\end{array}
\right. \\
C_{t}^{L}\left(  u,v\right)   & =\left\{
\begin{tabular}
[c]{ll}%
$\min\left(  u,v-t\right)  ,$ & $0\leq u\leq1-t,t\leq v\leq1,$\\
$\max\left(  u+v-1,0\right)  ,$ & elsewhere.
\end{tabular}
\right.
\end{align*}
Note that both $C_{s}^{U}\left(  u,v\right)  $ and $C_{t}^{L}\left(
u,v\right)  $ depend on $\delta$, through $s$ and $t$,
respectively.\footnote{To be precise, when the distribution of $Y_{1}-Y_{0}$
is discontinuous, the Makarov lower bound is attained only for the left limit
of the DTE. That is, $F_{\Delta}\left(  \delta^{-}\right)  =F_{\Delta}%
^{L}\left(  \delta^{-}\right)  =$ $t$ under $C_{t}^{L}$, while under
$C_{s}^{U},$ $F_{\Delta}\left(  \delta\right)  =F_{\Delta}^{U}\left(
\delta\right)  =$ $s$ for the right-continuous distribution function
$F_{\Delta}$. Note that even if both marginal distributions of $Y_{1}$ and
$Y_{0}$ are continuous, the distribution of $Y_{1}-Y_{0}$ may not be
continuous. Hence, typically the lower bound on the DTE is established only
for the left limit of the DTE $\Pr\left[  Y_{1}-Y_{0}<\delta\right]  .$ See
Nelsen (2006)\ for details.} Since the joint distribution achieving Makarov
bounds varies with $\delta,$ Makarov bounds are only \textit{pointwise} sharp,
not \textit{uniformly}.\ To address this issue, Firpo and Ridder (2008)
proposed joint bounds on the DTE for multiple values of $\delta$, which are
tighter than Makarov bounds. However, their improved bounds are not sharp and
sharp bounds on the functional $F_{\Delta}$ are an open question. For details,
see Frank et al. (1997), Nelsen (2006) and Firpo and Ridder (2008).%
\begin{figure}
[ptb]
\begin{center}
\includegraphics[
natheight=4.437300in,
natwidth=7.990000in,
height=2.2468in,
width=4.0222in
]%
{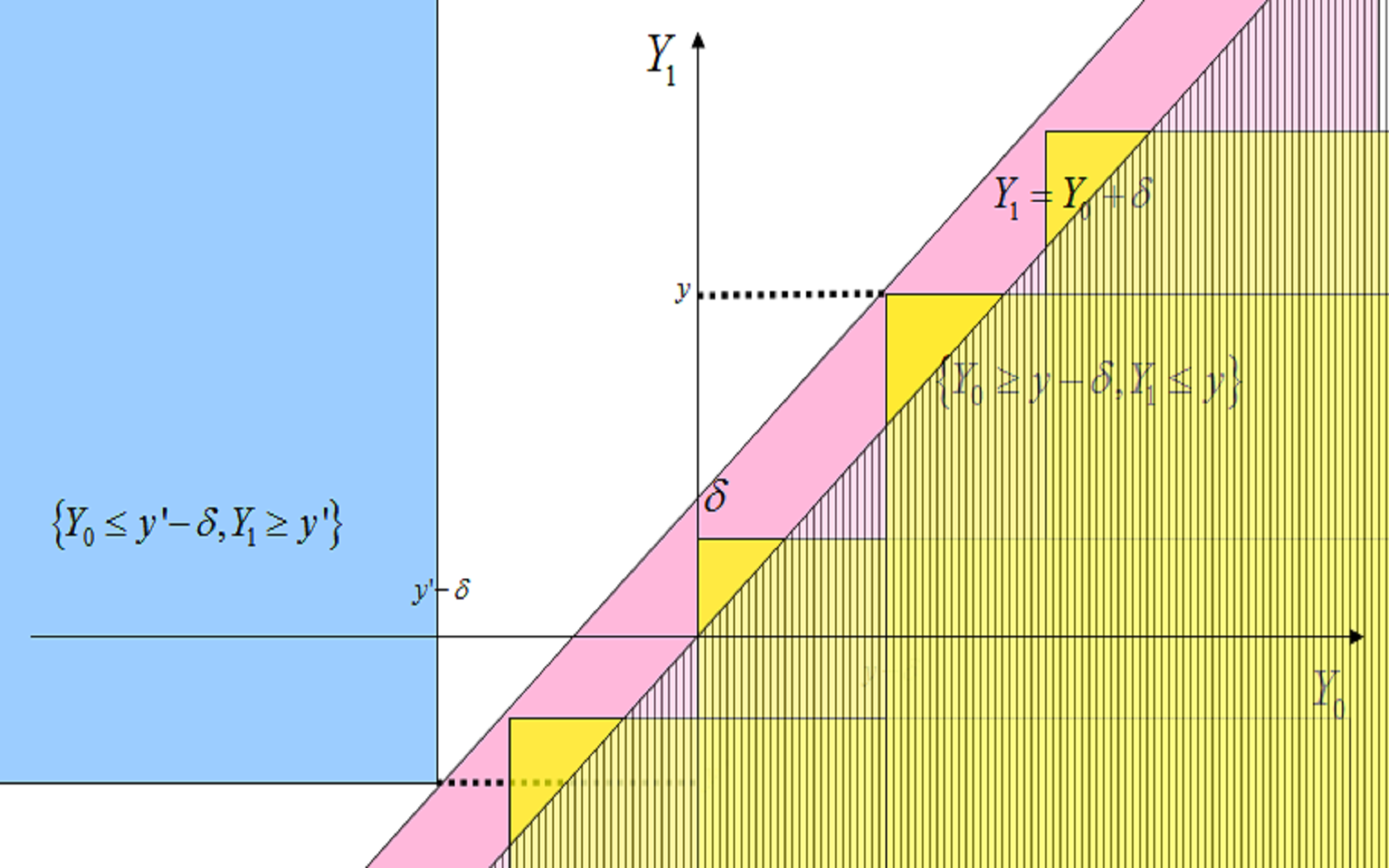}%
\caption{Makarov bounds are not best possible under MTR}%
\label{underMTR}%
\end{center}
\end{figure}

Although Makarov bounds are sharp when no other restrictions are imposed, they
are often too wide to be informative in practice and not sharp in the presence
of additional restrictions on the set of possible pairs of potential outcomes.
Figure 5 illustrates that if the support is restricted to the
region above the straight line $Y_{1}=Y_{0}$ by MTR, the Makarov lower bound
is not the best possible anymore. The lower bound can be improved under MTR
because MTR allows \emph{multiple} mutually exclusive rectangles to be placed
below the straight line $Y_{1}=Y_{0}+\delta$.

Methods of establishing sharp bounds under this class of restrictions and
fixed marginal distributions have remained unanswered in the literature. The
central difficulty lies in finding out the particular joint distributions
achieving sharp bounds among all joint distributions that have the given
marginal distributions and satisfy support restrictions. The next subsection
shows that an optimal transportation approach circumvents this difficulty
through its dual formulation.

\subsection{Optimal Transportation Approach}

An optimal transportation problem was first formulated by Monge (1781) who
studied the most efficient way to move a given distribution of mass to another
distribution in a different location. Much later Monge's problem was
rediscovered and developed by Kantorovich. The optimal transportation problem
of Monge-Kantorovich type is posed as follows. Let $c\left(  y_{0}%
,y_{1}\right)  $ be a nonnegative lower semicontinuous function on
$\mathbb{R}^{2}$ and define $\Pi\left(  \mu_{0},\mu_{1}\right)  $ to be the
set of joint distributions on $\mathbb{R}^{2}$ that have $\mu_{0}$ and
$\mu_{1}$ as marginal distributions. The optimal transportation problem solves%
\begin{equation}
\underset{\pi\in\Pi\left(  \mu_{0},\mu_{1}\right)  }{\inf}\int c\left(
y_{0},y_{1}\right)  d\pi.\label{2.0}%
\end{equation}
The objective function in the minimization problem is linear in the joint
distribution $\pi$ and the constraint is that the joint distribution $\pi$
should have fixed marginal distributions $\mu_{0}$ and $\mu_{1}$. $c\left(
y_{0},y_{1}\right)  $ and $\int c\left(  y_{0},y_{1}\right)  d\pi$ are called
the \emph{cost function} and the \emph{total cost}, respectively. Kantorovich
(1942) developed a dual formulation for the problem (2), which is a
key feature of the optimal transportation approach.

\begin{lemma}
\label{duality} (Kantorovich duality) Let $c:\mathbb{R}\times\mathbb{R}%
\rightarrow\left[  0,\infty\right]  $ be a lower semicontinuous function and
$\Phi_{c}$ the set of all functions $\left(  \varphi,\psi\right)  \in
L^{1}\left(  d\mu_{0}\right)  $ $\times L^{1}\left(  d\mu_{1}\right)  $ with
\begin{equation}
\varphi\left(  y_{0}\right)  +\psi\left(  y_{1}\right)  \leq c\left(
y_{0},y_{1}\right) \label{2.1}%
\end{equation}
Then,%
\begin{equation}
\underset{\pi\in\Pi\left(  \mu_{0},\mu_{1}\right)  }{\inf}\int c\left(
y_{0},y_{1}\right)  d\pi=\underset{\left(  \varphi,\psi\right)  \in\Phi_{c}%
}{\sup}\left(  \int\varphi\left(  y_{0}\right)  d\mu_{0}+\int\psi\left(
y_{1}\right)  d\mu_{1}\right)  .\label{2.2}%
\end{equation}
Also, the infimum in the left-hand side of (4) and the supremum in the
right-hand side of (4) are both attainable, and the value of the
supremum in the right-hand side does not change if one restricts $\left(
\varphi,\psi\right)  $ to be bounded and continuous.
\end{lemma}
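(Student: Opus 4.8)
The plan is to establish the equality in (4) by proving the two inequalities separately and then disposing of the two attainment claims. The easy direction (weak duality) is immediate. Fix any admissible pair $(\varphi,\psi)\in\Phi_{c}$ and any coupling $\pi\in\Pi(\mu_{0},\mu_{1})$. Since $\pi$ has marginals $\mu_{0}$ and $\mu_{1}$, one may rewrite $\int\varphi\,d\mu_{0}+\int\psi\,d\mu_{1}=\int[\varphi(y_{0})+\psi(y_{1})]\,d\pi$, and the pointwise constraint (3) gives $\varphi(y_{0})+\psi(y_{1})\le c(y_{0},y_{1})$, so the right-hand side is at most $\int c\,d\pi$. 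Taking the supremum over $\Phi_{c}$ on the left and the infimum over $\Pi(\mu_{0},\mu_{1})$ on the right yields the inequality $\sup\le\inf$ in (4). The substance of the lemma is the reverse inequality together with the two existence statements.

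Before that I would dispose of primal attainment, which isolates the only place the topology of $\mathbb{R}^{2}$ enters softly. The marginals $\mu_{0},\mu_{1}$, being Borel probability measures on $\mathbb{R}$, are tight, hence $\Pi(\mu_{0},\mu_{1})$ is tight and, by Prokhorov's theorem, weakly sequentially precompact; it is also weakly closed. Since $c$ is nonnegative and lower semicontinuous, it is an increasing limit of bounded continuous functions, so $\pi\mapsto\int c\,d\pi$ is weakly lower semicontinuous by monotone convergence. The direct method then produces a minimizer, so the infimum on the left of (4) is attained.

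For the reverse inequality my plan is to apply Fenchel--Rockafellar duality on the Banach space $E=C_{b}(\mathbb{R}^{2})$ with the sup norm. Writing $(\varphi\oplus\psi)(y_{0},y_{1}):=\varphi(y_{0})+\psi(y_{1})$, I would introduce two convex functionals on $E$: let $\Theta(u)=-\int\varphi\,d\mu_{0}-\int\psi\,d\mu_{1}$ if $u=\varphi\oplus\psi$ for some $\varphi,\psi\in C_{b}(\mathbb{R})$ and $+\infty$ otherwise, and let $\Xi(u)=0$ if $u\le c$ everywhere and $+\infty$ otherwise. Then $\inf_{u\in E}[\Theta(u)+\Xi(u)]$ is exactly $-1$ times the value of the dual problem restricted to bounded continuous potentials. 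Computing Legendre--Fenchel transforms against a functional $\ell\in E^{*}$ represented by a measure $\pi$, I expect $\Xi^{*}(\pi)=\int c\,d\pi$ when $\pi\ge 0$ (approximating $c$ from below by elements of $C_{b}$, where lower semicontinuity is used) and $+\infty$ otherwise, while $\Theta^{*}(-\pi)$ vanishes precisely when $\pi$ has marginals $\mu_{0},\mu_{1}$ and is $+\infty$ otherwise. The Fenchel--Rockafellar identity $\inf_{u\in E}[\Theta(u)+\Xi(u)]=\max_{\ell\in E^{*}}[-\Theta^{*}(-\ell)-\Xi^{*}(\ell)]$ then reads $-\sup_{C_{b}}=-\inf_{\Pi}\int c\,d\pi$, which is the desired equality. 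The qualification hypothesis is easy: at the constant function $u_{0}\equiv -1$, $\Theta(u_{0})$ is finite and, since $c\ge 0$, $\Xi$ is identically $0$ on a sup-norm ball around $u_{0}$ and hence continuous there.

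The hard part---and the reason this is not a one-line invocation of convex duality---is that on the noncompact space $\mathbb{R}^{2}$ the dual of $C_{b}(\mathbb{R}^{2})$ is strictly larger than the space of finite Borel measures, so a priori the maximizer returned by Fenchel--Rockafellar is only a bounded, finitely additive set function. I would resolve this by a compact-exhaustion and tightness argument: prove the identity first on compact $X\times Y$, where the Riesz representation theorem identifies $(C(X\times Y))^{*}$ cleanly with signed Radon measures and the transforms above are exact, and then recover $\mathbb{R}^{2}$ by restricting $\mu_{0},\mu_{1}$ to large compact sets, passing to the limit, and using tightness to ensure that no mass escapes and that the marginal constraints survive in the limit. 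Dual attainment on $\mathbb{R}^{2}$ I would obtain by replacing a near-optimal pair with its $c$-transform pair $(\varphi^{cc},\varphi^{c})$, which does not decrease the objective and supplies enough equicontinuity and normalization to extract a convergent subsequence; this $c$-concavity step is the genuinely technical core. Finally, the reduction to bounded continuous potentials is essentially free once the $C_{b}$ version is in hand: since bounded continuous functions lie in $L^{1}(d\mu_{0})\times L^{1}(d\mu_{1})$, weak duality gives $\sup_{\Phi_{c}}\le\inf_{\Pi}\int c\,d\pi=\sup_{C_{b}}\le\sup_{\Phi_{c}}$, forcing equality throughout.
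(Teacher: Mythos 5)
The paper never proves this lemma itself: it is quoted from the optimal transportation literature (Kantorovich (1942); Villani (2003), Theorem 1.3), so the relevant comparison is with Villani's proof, and your sketch follows that proof nearly step for step. Weak duality by integrating the constraint against a coupling; primal attainment by tightness of $\Pi\left(\mu_{0},\mu_{1}\right)$, Prokhorov's theorem, and weak lower semicontinuity of $\pi\mapsto\int c\,d\pi$ obtained from monotone approximation of the lower semicontinuous $c$ by bounded continuous functions; Fenchel--Rockafellar duality for $\Theta$ and $\Xi$ on $C_{b}$ with the qualification condition checked at a negative constant (and note, though you omit it, that $\Theta$ is well defined because the decomposition $u=\varphi\oplus\psi$ is unique up to an additive constant that cancels against probability marginals); the recognition that $\left(C_{b}\left(\mathbb{R}^{2}\right)\right)^{*}$ contains finitely additive functionals, handled by proving the compact case first and exhausting; and the sandwich $\sup_{\Phi_{c}}\le\inf_{\Pi}=\sup_{C_{b}}\le\sup_{\Phi_{c}}$ for the bounded-continuous restriction. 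All of that is sound and is exactly the standard route.

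The genuine gap is the dual-attainment step. Your plan---replace a maximizing sequence by its $c$-transform pairs and extract a limit by equicontinuity---requires the cost to supply a modulus of continuity, since the only available estimate is $\left\vert \varphi^{c}\left(y\right)-\varphi^{c}\left(y^{\prime}\right)\right\vert \le\sup_{x}\left\vert c\left(x,y\right)-c\left(x,y^{\prime}\right)\right\vert$. That works for real-valued continuous costs (say on compacts, or with moment bounds), but the lemma is stated for lower semicontinuous $c$ valued in $\left[0,\infty\right]$, and the costs this paper actually feeds into the lemma are $\{0,1\}$- and $\{0,1,\infty\}$-valued indicator-type costs, for which the $c$-transform of a bounded function is in general discontinuous and Arzel\`{a}--Ascoli gives nothing. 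Moreover this is not a repairable detail at the stated level of generality: attainment of the supremum by a pair in $L^{1}\left(d\mu_{0}\right)\times L^{1}\left(d\mu_{1}\right)$ is known to fail for some lower semicontinuous costs taking the value $+\infty$ (a counterexample of Ambrosio and Pratelli), which is precisely why Villani's Theorem 1.3 asserts only attainment of the infimum and invariance of the supremum under restriction to bounded continuous pairs; the supremum-attainment clause is an addition made by this paper's statement. So your proposal does establish the duality identity, primal attainment, and the $C_{b}$-restriction claim, but the dual-attainment clause would require either stronger hypotheses on $c$ or a relaxed ($c$-concave, possibly $-\infty$-valued) formulation; as written, that step of your plan fails exactly on the class of costs for which the paper invokes the lemma.
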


\begin{remark}
\label{infinitec} Note that the cost function $c\left(  y_{0},y_{1}\right)  $
may be infinite for some $\left(  y_{0},y_{1}\right)  \in\mathbb{R}^{2}.$
Since $c$ is a nonnegative function, the integral $\int c\left(  y_{0}%
,y_{1}\right)  d\pi\in\left[  0,\infty\right]  $ is well-defined.
\end{remark}

This dual formulation provides a key to solve the optimization problem
(2); I can overcome the difficulty associated with picking the
maximizer joint distribution in the set $\Pi\left(  \mu_{0},\mu_{1}\right)  $
by solving optimization with respect to given marginal distributions. The dual
functions $\varphi\left(  y_{0}\right)  $ and $\psi\left(  y_{1}\right)  $ are
Lagrange multipliers corresponding to the constraints $\pi\left(  y_{0}%
\times\mathbb{R}\right)  =\mu_{0}\left(  y_{0}\right)  $ and $\pi\left(
\mathbb{R}\times y_{1}\right)  =\mu_{1}\left(  y_{1}\right)  ,$ respectively,
for each $y_{0}$ and $y_{1}$ in $\mathcal{Y}_{0}$ and $\mathcal{Y}_{1}$.
Henceforth they are both assumed to be bounded and continuous without loss of
generality. By the condition (3), each pair $\left(  \varphi
,\psi\right)  $ in $\Phi_{c}$ satisfies
\begin{align}
\varphi\left(  y_{0}\right)   & \leq\underset{y_{1}\in\mathbb{R}}{\inf
}\left\{  c\left(  y_{0},y_{1}\right)  -\psi\left(  y_{1}\right)  \right\}
,\label{2.3}\\
\psi\left(  y_{1}\right)   & \leq\underset{y_{0}\in\mathbb{R}}{\inf}\left\{
c\left(  y_{0},y_{1}\right)  -\varphi\left(  y_{0}\right)  \right\}
.\nonumber
\end{align}
At the optimum for $\left(  y_{0},y_{1}\right)  $\ in the support of the
optimal joint distribution, the inequality in (3) holds with equality
and there exists a pair of dual functions $\left(  \varphi,\psi\right)  $ that
satisfies both inequalities in (5) with equalities.

In recent years, this dual formulation has turned out to be powerful and
useful for various problems related to the equilibrium and decentralization in
economics. See Ekeland (2005, 2010), Carlier (2010), Chiappori et al. (2010),
Chernozhukov et al. (2010), and Galichon and Salani\'{e} (2012). In
econometrics, Galichon and Henry (2009) and Ekeland et al. (2010) showed that
the dual formulation yields a test statistic for a set of theoretical
restrictions in partially identified economic models. They set the cost
function as an indicator for incompatibility of the structure with the data
and derived a \ Kolmogorov Smirnov type test statistic from a well known dual
representation theorem; see Lemma 3 below. Similarly, Galichon and
Henry (2011) showed that the identified set of structural parameters in game
theoretic models with pure strategy equilibria can be formulated as an optimal
transportation problem using the $\{0,1\}$-valued cost function.

Establishing sharp bounds on the DTE is also an optimal transportation problem
with an indicator function as the cost function. The DTE can be written as the
integration of an indicator function with respect to the joint distribution
$\pi$ as follows:%
\[
F_{\Delta}\left(  \delta\right)  =\Pr\left(  Y_{1}-Y_{0}<\delta\right)
=\int\boldsymbol{1}\left\{  y_{1}-y_{0}<\delta\right\}  d\pi.
\]
Since marginal distributions of potential outcomes are given as $\mu_{0}$ and
$\mu_{1},$ establishing sharp bounds reduces to picking a particular joint
distribution maximizing or minimizing the DTE from all possible joint
distributions having $\mu_{0}$ and $\mu_{1}$ as their marginal distributions.
Then the DTE is bounded as follows:
\[
\underset{\pi\in\Pi\left(  \mu_{0},\mu_{1}\right)  }{\inf}\int\boldsymbol{1}%
\left\{  y_{1}-y_{0}<\delta\right\}  d\pi\leq F_{\Delta}\left(  \delta\right)
\leq\underset{\pi\in\Pi\left(  \mu_{0},\mu_{1}\right)  }{\sup}\int
\boldsymbol{1}\left\{  y_{1}-y_{0}\leq\delta\right\}  d\pi,
\]
where $\Pi\left(  \mu_{0},\mu_{1}\right)  $ is the set of joint distributions
that have $\mu_{0}$ and $\mu_{1}$ as marginal distributions. For the indicator
function, the Kantorovich duality lemma for $\{0,1\}-$valued costs in Villani
(2003) can be applied as follows:

\begin{lemma}
\label{01}(Kantorovich duality for $\{0,1\}$-valued costs) The sharp lower
bound on the DTE has the following dual representation:%
\begin{align}
& \underset{\pi\in\Pi\left(  \mu_{0},\mu_{1}\right)  }{\inf}\int
\boldsymbol{1}\left\{  y_{1}-y_{0}<\delta\right\}  d\pi\label{2.4}\\
& =\sup_{A\subset\mathbb{R}}\left\{  \mu_{0}\left(  A\right)  -\mu_{1}\left(
A^{D}\right)  ;\text{ }A\text{ is closed}\right\} \nonumber
\end{align}
where%
\[
A^{D}=\left\{  y_{1}\in\mathbb{R}|\exists y_{0}\in A\text{ s.t. }y_{1}%
-y_{0}\geq\delta\right\}  .
\]
Similarly, the sharp upper bound on the DTE can be written as follows:%
\begin{align*}
& \underset{\pi\in\Pi\left(  \mu_{0},\mu_{1}\right)  }{\sup}\int
\boldsymbol{1}\left\{  y_{1}-y_{0}\leq\delta\right\}  d\pi\\
& =1-\underset{F\in\Pi\left(  F_{0},F_{1}\right)  }{\inf}\int\boldsymbol{1}%
\left\{  y_{1}-y_{0}>\delta\right\}  d\pi\\
& =1-\sup_{A\subset\mathbb{R}}\left\{  \mu_{0}\left(  A\right)  -\mu
_{1}\left(  A^{E}\right)  ;\text{ }A\text{ is closed}\right\}
\end{align*}
where
\[
A^{E}=\left\{  y_{1}\in\mathbb{R}|\exists y_{0}\in A\text{ s.t. }y_{1}%
-y_{0}\leq\delta\right\}  .
\]

\end{lemma}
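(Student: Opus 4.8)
The plan is to read off the sharp lower bound as the optimal transportation problem \eqref{2.0} with the $\{0,1\}$-valued cost $c(y_0,y_1)=\mathbf{1}\{y_1-y_0<\delta\}$ and then to evaluate its Kantorovich dual. Because $c$ is nonnegative and is the indicator of the \emph{open} set $\{y_1-y_0<\delta\}$, it is lower semicontinuous, so Lemma \ref{duality} applies directly and gives
\[
\inf_{\pi\in\Pi(\mu_0,\mu_1)}\int \mathbf{1}\{y_1-y_0<\delta\}\,d\pi
=\sup_{(\varphi,\psi)\in\Phi_c}\left(\int\varphi\,d\mu_0+\int\psi\,d\mu_1\right).
\]
Everything then reduces to showing that this generic dual supremum coincides with the set functional in \eqref{2.4}, i.e. that for a $\{0,1\}$-valued cost the optimal dual potentials may be taken in the indicator form $(\varphi,\psi)=(\mathbf{1}_A,-\mathbf{1}_{A^D})$.

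First I would establish the inequality ``$\geq$''. Fix any closed $A\subset\mathbb{R}$ and set $\varphi=\mathbf{1}_A$ and $\psi=-\mathbf{1}_{A^D}$; both are bounded, hence lie in $L^1(\mu_0)\times L^1(\mu_1)$, and $A^D=[\inf A+\delta,\infty)$ is closed and therefore measurable. I would verify the feasibility constraint \eqref{2.1}, namely $\varphi(y_0)+\psi(y_1)\le c(y_0,y_1)$, by cases: the only case in which the left-hand side is positive is $y_0\in A$ and $y_1\notin A^D$, and there the very definition of $A^D$ forces $y_1-y_0<\delta$, so $c(y_0,y_1)=1$; in every other case the left-hand side is $\le 0\le c$. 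This pair has dual value $\mu_0(A)-\mu_1(A^D)$, so the dual supremum, hence (by the equality above) the primal infimum, is at least $\sup_A\{\mu_0(A)-\mu_1(A^D)\}$ over closed $A$.

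The substance of the lemma is the reverse inequality, and this is where I expect the main difficulty: I must show that no feasible dual pair beats the indicator pairs, i.e. that an optimal $(\varphi,\psi)$ can be taken $\{0,1\}$- and $\{-1,0\}$-valued. The intended mechanism is to exploit $c\in\{0,1\}$ by normalizing the potentials through their $c$-transforms and then decomposing via super-level sets $A=\{\varphi\ge t\}$ (closed once $\varphi$ is taken upper semicontinuous), with $A^D$ being exactly the cost-free image $\{y_1:\exists y_0\in A,\ y_1-y_0\ge\delta\}$. The cleaner conceptual route, which I would ultimately rely on, is the Strassen/Hall-type duality for $\{0,1\}$-valued costs (Villani (2003)): interpreting $\{y_1-y_0\ge\delta\}$ as the set of cost-free pairs, the minimal mass that cannot be matched for free equals the maximal ``deficiency'' $\sup_A\{\mu_0(A)-\mu_1(A^D)\}$ of source mass over its reachable image. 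The delicate points to control are the measurability and closedness of $A^D$ (guaranteed because the cost-zero relation $\{y_1-y_0\ge\delta\}$ is closed), the reduction of continuous potentials to indicators, and the role of the strict inequality $<\delta$, which keeps the cost-one set open and the cost-zero relation closed so that the attainment asserted in Lemma \ref{duality} is preserved. As a consistency check I would note that here $A^D=[\inf A+\delta,\infty)$, so the set supremum collapses under Assumption 1 to $\sup_y\max(F_1(y)-F_0(y-\delta),0)$, recovering the Makarov lower bound of Lemma \ref{Makarov}.

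Finally, the upper bound I would obtain by symmetry rather than a fresh argument. Writing $\sup_\pi\int\mathbf{1}\{y_1-y_0\le\delta\}\,d\pi=1-\inf_\pi\int\mathbf{1}\{y_1-y_0>\delta\}\,d\pi$ puts the problem in exactly the preceding form, now with cost-free relation $\{y_1-y_0\le\delta\}$. Applying the lower-bound result to the cost $\mathbf{1}\{y_1-y_0>\delta\}$ replaces $A^D$ by $A^E=\{y_1:\exists y_0\in A,\ y_1-y_0\le\delta\}$ and yields $1-\sup_A\{\mu_0(A)-\mu_1(A^E)\}$, as claimed.
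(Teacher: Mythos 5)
Your proposal is correct and takes essentially the same route as the paper: the paper's entire proof of this lemma is the citation to Villani (2003, pp.~44--46), and your argument---casting the bound as a transportation problem with the lower semicontinuous indicator cost, checking the easy inequality via the feasible pair $\left(  \boldsymbol{1}_{A},-\boldsymbol{1}_{A^{D}}\right)  $, and deferring the reverse (hard) inequality to Villani's Strassen-type duality for $\{0,1\}$-valued costs---is precisely that cited result, with the upper bound obtained by the same complementation $\boldsymbol{1}\left\{  y_{1}-y_{0}\leq\delta\right\}  =1-\boldsymbol{1}\left\{  y_{1}-y_{0}>\delta\right\}  $ used in the paper. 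Nothing further is needed.
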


\begin{proof}
See pp. $44-46$ of Villani (2003).
\end{proof}

In the following discussion, I\ focus on the lower bound on the DTE since the
procedure to obtain the upper bound is similar.

\begin{remark}
\label{superlevel} In the proof of Lemma 3, Villani (2003) showed that
at the optimum, $A=\left\{  x\in\mathbb{R}|\varphi\left(  x\right)  \geq
s\right\}  $ for some $s\in\left[  0,1\right]  $. Since the function $\varphi$
is continuous, if $\varphi$ is nondecreasing then $A=[a,\infty)$ for some
$a\in\left[  \mathbb{-\infty},\mathbb{\infty}\right]  $ where $A=\phi$ if
$a=\mathbb{\infty}.$ In contrast, if $\varphi$ is nonincreasing, then
$A=(-\infty,a]$ where $A=\phi$ if $a=-\mathbb{\infty}$
\end{remark}

Remember that for any $\left(  y_{0},y_{1}\right)  $ in the support of the
optimal joint distribution, $\varphi$ and $\psi$ satisfy%
\begin{equation}
\varphi\left(  y_{0}\right)  =\underset{y_{1}\in\mathbb{R}}{\inf}\left\{
\boldsymbol{1}\left\{  y_{1}-y_{0}<\delta\right\}  -\psi\left(  y_{1}\right)
\right\}  .\label{2.4.5}%
\end{equation}
Pick $\left(  y_{0}^{\prime},y_{1}^{\prime}\right)  $ and $\left(
y_{0}^{\prime\prime},y_{1}^{\prime\prime}\right)  $\ with $y_{0}^{\prime
\prime}>y_{0}^{\prime}$\ in the support of the optimal joint distribution.
Then,%
\begin{align}
\varphi\left(  y_{0}^{\prime}\right)   & =\boldsymbol{1}\left\{  y_{1}%
^{\prime}-y_{0}^{\prime}<\delta\right\}  -\psi\left(  y_{1}^{\prime}\right)
\label{2.4.6}\\
& \leq\boldsymbol{1}\left\{  y_{1}^{\prime\prime}-y_{0}^{\prime}%
<\delta\right\}  -\psi\left(  y_{1}^{\prime\prime}\right) \nonumber\\
& \leq\boldsymbol{1}\left\{  y_{1}^{\prime\prime}-y_{0}^{\prime\prime}%
<\delta\right\}  -\psi\left(  y_{1}^{\prime\prime}\right) \nonumber\\
& =\varphi\left(  y_{0}^{\prime\prime}\right)  .\nonumber
\end{align}

The inequality in the second line of (8) is obvious from
(7) and the inequality in the third line of (8) holds
because $\boldsymbol{1}\left\{  y_{1}-y_{0}<\delta\right\}  $ is nondecreasing
in $y_{0}$. Since $\varphi$ is nondecreasing on the set $\left\{  y_{0}%
\in\mathcal{Y}_{0}|\exists y_{1}\in\mathcal{Y}_{1}\text{ s.t. }\left(
y_{0},y_{1}\right)  \in Supp\left(  \pi\right)  \right\}  $, by Remark
2 $A$ can be written as $[a,\infty)$ for some $a\in\left[
\mathbb{-\infty},\mathbb{\infty}\right]  .$%

\begin{figure}
[ptb]
\begin{center}
\includegraphics[
natheight=3.135800in,
natwidth=4.750400in,
height=2.2234in,
width=3.352in
]%
{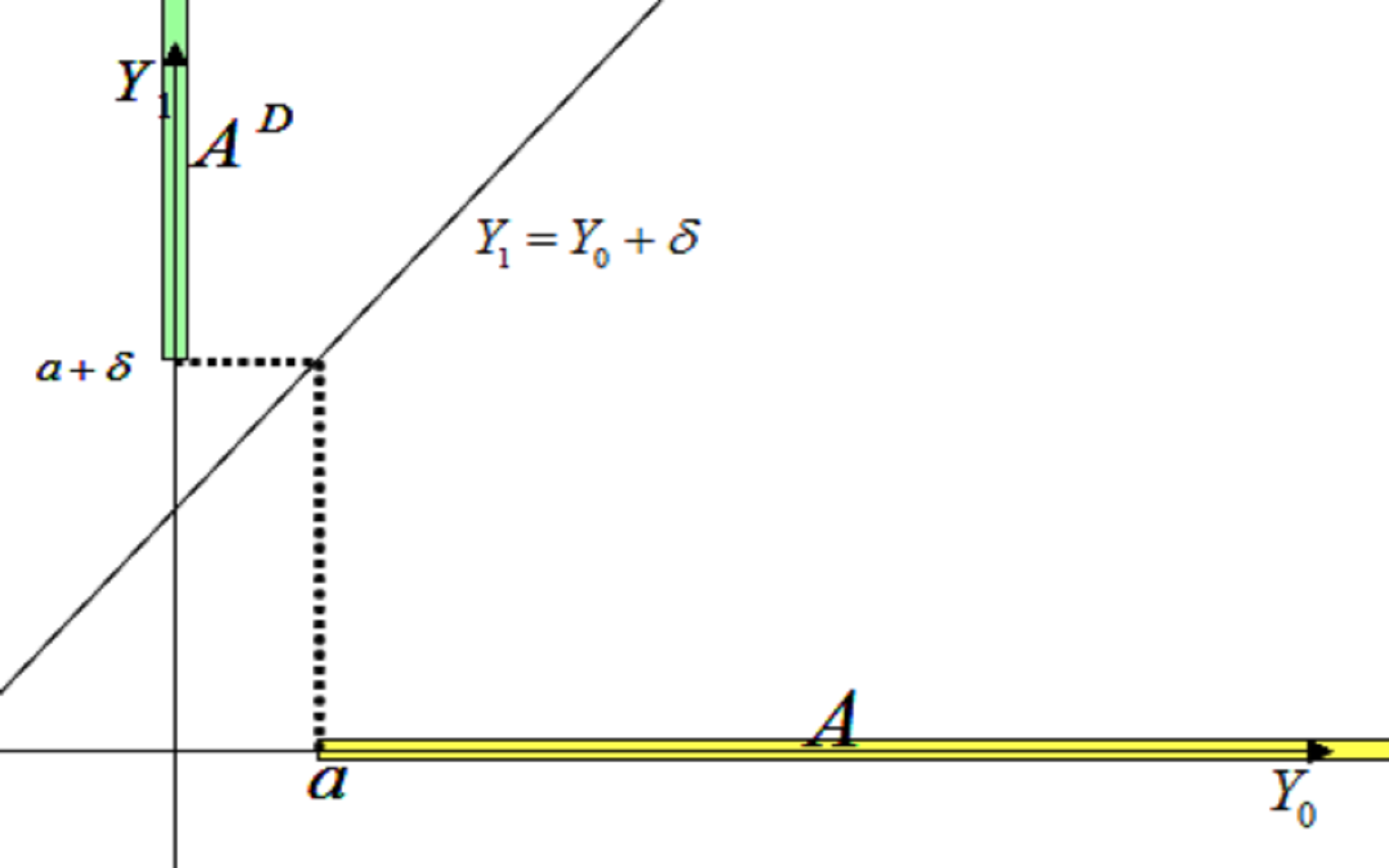}%
\caption{$A^{D}$ for $A=[a,\infty)$}%
\label{AAD}%
\end{center}
\end{figure}

As shown in Figure 6, $A^{D}=\phi$ for $A=\phi,$ and $A^{D}%
=[a+\delta,\infty)$ for $A=[a,\infty)$ with $a\in\left(  -\infty
,\infty\right)  $. Then, $\mu_{0}\left(  A\right)  -\mu_{1}\left(
A^{D}\right)  =0$ for $A=\phi$, while $\mu_{0}\left(  A\right)  -\mu
_{1}\left(  A^{D}\right)  =F_{1}\left(  a+\delta\right)  -F_{0}\left(
a\right)  $ for $A=[a,\infty).$ Therefore, the RHS in (6) reduces to%
\[
\underset{a\in\mathbb{R}}{\sup}\max\left[  F_{1}\left(  a+\delta\right)
-F_{0}\left(  a\right)  ,0\right]  ,
\]
which is equal to the Makarov lower bound. One can derive the Makarov upper
bound in the same way.

Now consider the support restriction $\Pr\left(  \left(  Y_{0},Y_{1}\right)
\in C\right)  =1$. Note that this restriction is linear in the entire joint
distribution $\pi,$ since it can be rewritten as $\int\boldsymbol{1}%
_{C}\left(  y_{0},y_{1}\right)  d\pi=1$. The linearity makes it possible to
handle this restriction with penalty. In particular, since support
restrictions hold with probability one, the corresponding penalty is infinite.
Therefore, one can embed $1-\boldsymbol{1}_{C}\left(  y_{0},y_{1}\right)  $
into the cost function with an infinite multiplier $\lambda=\infty$ as
follows:%
\begin{equation}
\underset{\pi\in\Pi\left(  \mu_{0},\mu_{1}\right)  }{\inf}\int\left\{
\boldsymbol{1}\left\{  y_{1}-y_{0}<\delta\right\}  +\lambda\left(
1-\boldsymbol{1}_{C}\left(  y_{0},y_{1}\right)  \right)  \right\}
d\pi\label{3.1}%
\end{equation}
The minimization problem (9) is well defined with $\lambda=\infty$ as
noted in Remark 1. Note that for $\lambda=\infty,$ any joint
distribution which violates the restriction $\Pr\left(  \left(  Y_{0}%
,Y_{1}\right)  \in C\right)  =1$ would cause infinite total costs in
(9) and it is obviously excluded from the potential optimal joint
distribution candidates. The optimal joint distribution should thus satisfy
the restriction $\Pr\left(  \left(  Y_{0},Y_{1}\right)  \in C\right)  =1$ to
avoid infinite costs by not permitting any positive probability density for
the region\ outside of the set $C$. Similarly, the upper bound on the DTE is
written as%
\begin{align}
& \underset{\pi\in\Pi\left(  \mu_{0},\mu_{1}\right)  }{\sup}\int\left\{
\boldsymbol{1}\left\{  y_{1}-y_{0}\leq\delta\right\}  -\lambda\left(
1-\boldsymbol{1}_{C}\left(  y_{0},y_{1}\right)  \right)  \right\}
d\pi\label{3.3.3.4}\\
& =1-\underset{\pi\in\Pi\left(  \mu_{0},\mu_{1}\right)  }{\inf}\int\left\{
\boldsymbol{1}\left\{  y_{1}-y_{0}>\delta\right\}  +\lambda\left(
1-\boldsymbol{1}_{C}\left(  y_{0},y_{1}\right)  \right)  \right\}
d\pi.\nonumber
\end{align}

To the best of my knowledge, this is the first paper that allows for $\left\{
0,1,\infty\right\}  $-valued costs. Although the econometrics literature based
on the optimal transportation approach has used Lemma 3 for
$\{0,1\}-$valued costs, the problem (9) cannot be solved using Lemma
3. In the next section, I\ develop a dual representation for
(9) in order to characterize sharp bounds on\ the DTE.

\section{Main Results}

This section characterizes sharp DTE bounds under general support restrictions
by developing a dual representation for problems (9) and
(10). I\ use this characterization to derive sharp DTE bounds for
various economic examples. Also, I provide intuition regarding improvement of
the identification region via graphical illustrations.

\subsection{Characterization}

The following theorem is the main result of the paper.

\begin{theorem}
\label{01infinity}\bigskip\ The sharp lower and upper bounds on the DTE under
$\Pr\left(  \left(  Y_{0},Y_{1}\right)  \in C\right)  =1$ are characterized as
follows: for any $\delta\in\mathbb{R},$%
\[
F_{\Delta}^{L}\left(  \delta\right)  \leq F_{\Delta}\left(  \delta\right)
\leq F_{\Delta}^{U}\left(  \delta\right)  ,
\]
where%
\begin{align}
F_{\Delta}^{L}\left(  \delta\right)   & =\underset{\left\{  A_{k}\right\}
_{k=-\infty}^{\infty}}{\sup}\sum\limits_{k=-\infty}^{\infty}\max\left\{
\mu_{0}\left(  A_{k}\right)  -\mu_{1}\left(  A_{k}^{C}\right)  ,0\right\}
,\label{dual01infinity}\\
F_{\Delta}^{U}\left(  \delta\right)   & =1-\underset{\left\{  B_{k}\right\}
_{k=-\infty}^{\infty}}{\sup}\sum\limits_{k=-\infty}^{\infty}\max\left\{
\mu_{0}\left(  B_{k}\right)  -\mu_{1}\left(  B_{k}^{C}\right)  ,0\right\}
,\nonumber
\end{align}
where%
\begin{align*}
& \left\{  A_{k}\right\}  _{k=-\infty}^{\infty}\text{ and }\left\{
B_{k}\right\}  _{k=-\infty}^{\infty}\text{ are both monotonically decreasing
sequences of open sets,}\\
A_{k}^{C}  & =%
\begin{tabular}
[c]{l}%
$\left\{  y_{1}\in\mathbb{R}|\exists y_{0}\in A_{k}\text{ s.t. }y_{1}%
-y_{0}\geq\delta\text{ and }\left(  y_{0},y_{1}\right)  \in C\right\}  $\\
$\cup\left\{  y_{1}\in\mathbb{R}|\exists y_{0}\in A_{k+1}\text{ s.t. }%
y_{1}-y_{0}<\delta\text{ and }\left(  y_{0},y_{1}\right)  \in C\right\}  $,
\end{tabular}
\\
B_{k}^{C}  & =%
\begin{tabular}
[c]{l}%
$\left\{  y_{1}\in\mathbb{R}|\exists y_{0}\in B_{k}\text{ s.t. }y_{1}%
-y_{0}\leq\delta\text{ and }\left(  y_{0},y_{1}\right)  \in C\right\}  $\\
$\cup\left\{  y_{1}\in\mathbb{R}|\exists y_{0}\in B_{k+1}\text{ s.t. }%
y_{1}-y_{0}>\delta\text{ and }\left(  y_{0},y_{1}\right)  \in C\right\}
\text{ for any integer }k.$%
\end{tabular}
\end{align*}

\end{theorem}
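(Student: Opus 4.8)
The plan is to extend Villani's Kantorovich-duality argument for $\{0,1\}$-valued costs (Lemma 3) to the $\{0,1,\infty\}$-valued cost, treating the lower bound in detail since the upper bound is symmetric. First I would record that the penalized cost
\[
c(y_0,y_1)=\boldsymbol{1}\{y_1-y_0<\delta\}+\infty\cdot(1-\boldsymbol{1}_C(y_0,y_1))
\]
is nonnegative and lower semicontinuous: $C$ closed makes $1-\boldsymbol{1}_C$ lower semicontinuous, and $\{y_1-y_0<\delta\}$ is open so its indicator is lower semicontinuous. Hence Lemma 2 applies, with infinite values admissible by Remark 1, and the sharp lower bound equals $\sup_{(\varphi,\psi)\in\Phi_c}(\int\varphi\,d\mu_0+\int\psi\,d\mu_1)$. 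I would then replace $\psi$ by the $c$-transform $\psi(y_1)=\inf_{y_0}\{c(y_0,y_1)-\varphi(y_0)\}$, which never decreases the objective; because $c=\infty$ off $C$, the infimum effectively ranges only over $y_0$ with $(y_0,y_1)\in C$ and splits into the $y_1-y_0\geq\delta$ case (cost $0$) and the $y_1-y_0<\delta$ case (cost $1$), giving
\[
-\psi(y_1)=\max\Big\{\sup_{(y_0,y_1)\in C,\,y_1-y_0\geq\delta}\varphi(y_0),\ \sup_{(y_0,y_1)\in C,\,y_1-y_0<\delta}\varphi(y_0)-1\Big\}.
\]

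The heart of the argument is to show an optimal $\varphi$ may be taken integer-valued and lower semicontinuous. Since the finite values of $c$ lie in $\{0,1\}$, the feasibility constraint couples the values of $\varphi$ across $C$-reachable pairs only by integer gaps, so rounding $\varphi$ to $\mathbb{Z}$ preserves feasibility without lowering the objective; this is exactly the step that generalizes Villani's reduction to indicator-valued potentials. Setting $A_k=\{y_0:\varphi(y_0)\geq k\}$ then produces a monotonically decreasing sequence of open sets. Feeding these into the displayed formula for $\psi$, I would verify that $\{y_1:-\psi(y_1)\geq k\}$ is precisely the set $A_k^C$ of the statement: the first supremum reaching level $k$ means some $y_0\in A_k$ with $(y_0,y_1)\in C$ and $y_1-y_0\geq\delta$ (the first part of $A_k^C$), while the second supremum reaching $k$, i.e.\ $\varphi(y_0)\geq k+1$, means some $y_0\in A_{k+1}$ with $(y_0,y_1)\in C$ and $y_1-y_0<\delta$ (the second part). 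The shift from $A_k$ to $A_{k+1}$ in the second part encodes exactly the unit cost of crossing $y_1-y_0<\delta$.

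With this dictionary between the potentials and the sequence $\{A_k\}$, I would apply the layer-cake (Abel) summation to the integer-valued functions $\varphi$ and $-\psi$:
\[
\int\varphi\,d\mu_0+\int\psi\,d\mu_1=\sum_{k}\big(\mu_0(A_k)-\mu_1(A_k^C)\big),
\]
where only countably many layers contribute after a harmless normalization of the value range. Optimizing over $\varphi$ is equivalent to optimizing over all nested open sequences $\{A_k\}$, and layers with nonpositive net contribution can be deleted by setting $A_k=\emptyset$ (which contributes $\mu_0(\emptyset)-\mu_1(\emptyset^C)=0$), exactly as admitting $A=\emptyset$ produces the $\max\{\cdot,0\}$ in the single-set Makarov bound. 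This yields $F_{\Delta}^{L}(\delta)=\sup_{\{A_k\}}\sum_k\max\{\mu_0(A_k)-\mu_1(A_k^C),0\}$. The upper bound follows by running the same steps on $1-\inf_\pi\int[\boldsymbol{1}\{y_1-y_0>\delta\}+\infty(1-\boldsymbol{1}_C)]\,d\pi$ from (10), with $B_k^C$ in place of $A_k^C$.

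I expect the main obstacle to be the integer-valued, lower-semicontinuous reduction of $\varphi$ together with the accompanying integrability bookkeeping. One must show the rounding neither breaks the constraint $\varphi+\psi\leq c$ nor the requirement $(\varphi,\psi)\in L^{1}(d\mu_0)\times L^{1}(d\mu_1)$, and must control the boundary levels where the inner suprema defining $\psi$ are taken over an empty index set, so that $\psi$ threatens to be $-\infty$; handling these correctly is what makes the countable sum over layers well defined and forces the \emph{open} sets here (rather than the closed sets of Lemma 3) in the final statement.
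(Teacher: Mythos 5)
Your plan follows the same route as the paper's Appendix~A proof: verify that the penalized $\{0,1,\infty\}$-valued cost is lower semicontinuous so that Lemma 2 applies, reduce the dual potentials to integer-valued functions whose nested level sets become the $A_k$, translate the feasibility constraint into the reachable sets $A_k^C$ (with the index shift to $A_{k+1}$ encoding the unit cost of crossing $\{y_1-y_0<\delta\}$), sum over layers to get $\sum_k\left(\mu_0(A_k)-\mu_1(A_k^C)\right)$, and prune nonpositive terms to produce the $\max\{\cdot,0\}$. The dictionary you set up between the level sets of $-\psi$ and the sets $A_k^C$ is exactly the paper's construction of $A_{k,D}^{+}$ and $A_{k,D}^{-}$.

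The genuine gap is in the step you yourself call the heart of the argument: the reduction to integer-valued potentials. ``Rounding $\varphi$ to $\mathbb{Z}$'' does not do the job. A deterministic rounding that preserves feasibility (say $\lceil\varphi\rceil$ paired with $\lfloor\psi\rfloor$, which stays in $\Phi_c$ because an integer strictly below $c+1$ with $c\in\{0,1,\infty\}$ is at most $c$) can strictly \emph{decrease} the dual objective: if $\varphi$ is already integer-valued, the gain $\int(\lceil\varphi\rceil-\varphi)\,d\mu_0$ is zero while the loss $\int(\psi-\lfloor\psi\rfloor)\,d\mu_1$ is strictly positive whenever $\psi$ is non-integer on a set of positive $\mu_1$-measure; so such a map cannot establish that restricting the supremum to integer-valued pairs is without loss. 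What the paper does instead (Lemma A.1) is write each feasible $(\varphi,\psi)$ as a continuous \emph{convex combination} over $s\in[0,1]$ of integer-valued pairs built from the level sets $\{\varphi>s+k\}$, $\{\varphi\le-(s+k)\}$ and the corresponding sets for $\psi$ at the \emph{same} offset $s$, verify case by case that each such pair is itself feasible --- this is precisely where the $\{0,1,\infty\}$ structure enters: the rounded sum equals $1$ only where $\varphi+\psi>0$, hence $c\ge 1$, and equals $2$ only where $\varphi+\psi>1$, hence $c=\infty$ --- and then invoke convexity of $\Phi_c$ and linearity of $J$ to conclude that some offset $s$ performs at least as well as $(\varphi,\psi)$. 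A smaller wrinkle: in your pruning step you cannot set a single intermediate $A_k=\emptyset$, since this breaks the nestedness of the sequence and silently changes $A_{k-1}^C$, which depends on $A_k$; the paper instead removes a nonpositive term either by shifting indices ($\tilde A_j=A_{j+1}$ for $j\ge p$) or by emptying the entire tail, both of which preserve monotonicity and weakly increase every remaining term.
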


\begin{proof}
See Appendix A.
\end{proof}

Theorem 1 is obtained by applying Kantorovich duality in Lemma
2 to the optimal transportation problems (9) and
(10). Note that the sharpness of the bounds is also confirmed by
Lemma 2. Since characterization of the upper bound is similar to
that of the lower bound, I\ maintain the focus of the discussion on the lower
bound. The minimization problem (9) can be written in the dual
formulation as follows: for $\lambda=\infty,$
\begin{align*}
& \underset{\pi\in\Pi\left(  \mu_{0},\mu_{1}\right)  }{\inf}\int\left\{
\boldsymbol{1}\left\{  y_{1}-y_{0}<\delta\right\}  +\lambda\left(
1-\boldsymbol{1}_{C}\left(  y_{0},y_{1}\right)  \right)  \right\}  d\pi\\
& =\underset{\left(  \varphi,\psi\right)  \in\Phi_{c}}{\sup}\left(
\int\varphi\left(  y_{0}\right)  d\mu_{0}+\int\psi\left(  y_{1}\right)
d\mu_{1}\right)  ,
\end{align*}
where
\[
\Phi_{c}=\left\{  \left(  \varphi,\psi\right)  ;\text{ }\varphi\left(
y_{0}\right)  +\psi\left(  y_{1}\right)  \leq\boldsymbol{1}\left\{
y_{1}-y_{0}<\delta\right\}  +\lambda\left(  1-\boldsymbol{1}_{C}\left(
y_{0},y_{1}\right)  \right)  \text{ with }\lambda=\infty\right\}  .
\]
Note that at the optimum $\varphi\left(  y_{0}\right)  +\psi\left(
y_{1}\right)  =\boldsymbol{1}\left\{  y_{1}-y_{0}<\delta\right\}  $ for any
$\left(  y_{0},y_{1}\right)  $ in the support of the optimal joint
distribution. Therefore, dual functions $\varphi$ and $\psi$ can be written as
follows: for any $\left(  y_{0},y_{1}\right)  $ in the support of the optimal
joint distribution,%
\[
\varphi\left(  y_{0}\right)  =\underset{y_{1}:\left(  y_{0},y_{1}\right)  \in
C}{\inf}\left\{  \boldsymbol{1}\left\{  y_{1}-y_{0}<\delta\right\}
-\psi\left(  y_{1}\right)  \right\}  .
\]

In my proof of Theorem 1, $A_{k}$ is defined as $A_{k}=\left\{
x\in\mathbb{R}:\varphi(x)>s+k\right\}  $ for the function $\varphi$, some
$s\in\lbrack0,1],$ and each integer $k.$ Since the dual function $\varphi$ is
continuous, if $\varphi$ is nondecreasing then $A_{k} $ $=(a_{k},\infty)$ for
some $a_{k}\in\left[  -\infty,\infty\right]  .$ Note that $A_{k}=\phi$ for
$a_{k}=\infty.$ Also, since $\left\{  A_{k}\right\}  _{k=-\infty}^{\infty}$ is
a monotonically decreasing sequence of open sets, $a_{k}\leq a_{k+1}$ for
every integer $k.$ In contrast, if $\varphi$ is nonincreasing at the optimum
then $A_{k}=(-\infty,a_{k})$ for $a_{k}\in\left[  -\infty,\infty\right]  $ and
$a_{k+1}\leq a_{k}$ for each integer $k$. Note that $A_{k}=\phi$ for
$a_{k}=-\infty$. In the next subsection, I\ will show that the function
$\varphi$ is monotone for economic examples considered in this paper and that
sharp DTE bounds in each example are readily derived from monotonicity of
$\varphi$.

\begin{remark}
\label{continuity} (Robustness of the sharp bounds) My sharp DTE\ bounds are
robust for support restrictions in the sense that they do not rely too heavily
on the small deviation of the restriction. I\ can verify this by showing that
sharp bounds under $\Pr\left(  \left(  Y_{0},Y_{1}\right)  \in C\right)  \geq
p$ converge to those under $\Pr\left(  \left(  Y_{0},Y_{1}\right)  \in
C\right)  =1,$ as $p$ goes to one. The sharp lower bound under $\Pr\left(
\left(  Y_{0},Y_{1}\right)  \in C\right)  \geq p$ can be obtained with a
multiplier $\widetilde{\lambda}_{p}\geq0$ as follows:%
\begin{equation}
F_{\Delta}^{L}\left(  \delta\right)  =\underset{\pi\in\Pi\left(  \mu_{0}%
,\mu_{1}\right)  }{\inf}\int\left\{  \boldsymbol{1}\left\{  y_{1}-y_{0}%
<\delta\right\}  +\widetilde{\lambda}_{p}\left(  1-\boldsymbol{1}_{C}\left(
y_{0},y_{1}\right)  \right)  \right\}  d\pi.\label{con1}%
\end{equation}
Obviously, $\widetilde{\lambda}_{0}=0$. Furthermore, $\widetilde{\lambda}%
_{p}\leq\widetilde{\lambda}_{q}$ for $0\leq p<q\leq1$ since $F_{\Delta}%
^{L}\left(  \delta\right)  $ is nondecreasing in $p.$ The proof of Theorem
1 can be easily adapted to the more general case in which the
multiplier is given as a positive integer. If $\widetilde{\lambda}_{p}=2K$ in
(12)\ for some positive integer $K$, then the dual representation
reduces to%
\[
\underset{\left\{  A_{k}\right\}  _{k=-\infty}^{\infty}}{\sup}\sum
\limits_{-\left(  K-1\right)  }^{K}\max\left\{  \mu_{0}\left(  A_{k}\right)
-\mu_{1}\left(  A_{k}^{C}\right)  ,0\right\}  ,
\]
where $\left\{  A_{k}\right\}  _{k=-\left(  K-1\right)  }^{K}$ is
monotonically decreasing. As $K$ goes to infinity, this obviously converges to
the dual representation for the infinite Lagrange multiplier, which is given
in (11).
\end{remark}

\subsection{Economic Examples}

In this subsection, I\ derive sharp bounds on the DTE for concrete economic
examples from the general characterization in Theorem 1. \ As
economic examples, MTR, concave treatment response, convex treatment response,
and the Roy model of self-selection are discussed.

\subsubsection{Monotone Treatment Response}

Since the seminal work of Manski (1997), it has been widely recognized that
MTR has an interesting identifying power for treatment effects parameters.
MTR only requires that the potential outcomes be weakly monotone in treatment
with probability one:%
\[
\Pr\left(  Y_{1}\geq Y_{0}\right)  =1.
\]

His\ bounds on the DTE under MTR are obtained as follows: for $\delta<0,$
$F_{\Delta}\left(  \delta\right)  =0,$ and for $\delta\geq0,$%
\[
\Pr\left(  Y-y_{0}^{L}\leq\delta|D=1\right)  p+\Pr\left(  y_{1}^{U}%
-Y\leq\delta|D=0\right)  \left(  1-p\right)  \leq F_{\Delta}\left(
\delta\right)  \leq1,
\]
where $p=\Pr\left(  D=1\right)  ,$ and $y_{0}^{L}$ \ is the support infimum of
$Y_{0}$ while $y_{1}^{U}$ \ is the support supremum of $Y_{1}.$ He did not
impose any other condition such as given marginal distributions of $Y_{0}
$\ and $Y_{1}$. Note that MTR has no identifying power on the DTE\ in the
binary treatment setting without additional information. Since MTR restricts
only the lowest possible value of $Y_{1}-Y_{0}$ as zero, the upper bound is
trivially obtained as one for any $\delta\geq0$. Similarly, MTR is
uninformative for the\ lower bound, since MTR does not restrict the highest
possible value of $Y_{1}-Y_{0}$.\footnote{Note that $Y_{1}$ is observed for
the treated and $Y_{0}$ is observed for the untreated groups. For the treated,
the highest possible value is $Y-Y_{0}^{L}$, while it is $Y_{1}^{U}-Y$ for the
untreated. The lower bound is achieved when $\Pr(Y_{0}=y_{0}^{L}|D=1)=1$ and
$(Y_{1}=y_{1}^{U}|D=0)=1.$} Furthermore, when the support of each potential
outcome is given as $\mathbb{R}$, they yield completely uninformative upper
and lower bounds $\left[  0,1\right]  .$

However, I\ show that given marginal distribution functions $F_{0}$ and
$F_{1}$, MTR has substantial identifying power for the lower bound on the DTE.

\begin{corollary}
\label{MTR} Suppose that $\Pr\left(  Y_{1}=Y_{0}\right)  =0$. Under MTR, sharp
bounds on the DTE are given as follows: for any $\delta\in\mathbb{R}, $%
\[
F_{\Delta}^{L}\left(  \delta\right)  \leq F_{\Delta}\left(  \delta\right)
\leq F_{\Delta}^{U}\left(  \delta\right)  ,
\]
where%
\begin{align*}
F_{\Delta}^{U}\left(  \delta\right)   & =\left\{
\begin{array}
[c]{cc}%
1+\underset{y\in\mathbb{R}}{\inf}\left\{  \min\left(  F_{1}\left(  y\right)
-F_{0}\left(  y-\delta\right)  \right)  ,0\right\}  , & \text{for }\delta
\geq0,\\
0, & \text{for }\delta<0.
\end{array}
\right.  ,\\
F_{\Delta}^{L}\left(  \delta\right)   & =\left\{
\begin{array}
[c]{cc}%
\underset{\left\{  a_{k}\right\}  _{k=-\infty}^{\infty}\in\mathcal{A}_{\delta
}}{\sup}\sum\limits_{k=-\infty}^{\infty}\max\left\{  F_{1}\left(
a_{k+1}\right)  -F_{0}\left(  a_{k}\right)  ,0\right\}  , & \text{for }%
\delta\geq0,\\
0, & \text{for }\delta<0,
\end{array}
\right.  ,\\
\text{where }\mathcal{A}_{\delta}  & =\left\{  \left\{  a_{k}\right\}
_{k=-\infty}^{\infty};0\leq a_{k+1}-a_{k}\leq\delta\text{ for each integer
}k\right\}  .
\end{align*}

\end{corollary}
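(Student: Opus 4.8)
The plan is to specialize Theorem 1 to the closed set $C = \{(y_0,y_1) : y_1 \geq y_0\}$ and to exploit the first-order stochastic dominance $F_1 \leq F_0$ that MTR forces, treating the lower/upper bounds and the signs of $\delta$ separately. The two cases $\delta < 0$ are immediate: under MTR one has $Y_1 - Y_0 \geq 0$, so $\{Y_1 - Y_0 \leq \delta\}$ is null and both bounds are $0$; the hypothesis $\Pr(Y_1 = Y_0) = 0$ upgrades this to $Y_1 - Y_0 > 0$ almost surely, which is what makes the strict and weak indicators in Theorem 1 agree and keeps the boundary $\delta = 0$ consistent.

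For the lower bound with $\delta \geq 0$ I would first show the dual potential $\varphi$ can be taken nondecreasing, adapting display (8) to the support constraint. For $(y_0',y_1'),(y_0'',y_1'')$ in the optimal support with $y_0'' > y_0'$, the point $y_1''$ satisfies $y_1'' \geq y_0'' > y_0'$, so $(y_0',y_1'') \in C$ is admissible in the infimum defining $\varphi(y_0')$; combining $\varphi(y_0') \leq \mathbf 1\{y_1'' - y_0' < \delta\} - \psi(y_1'')$ with monotonicity of $\mathbf 1\{y_1'' - \cdot < \delta\}$ in its argument yields $\varphi(y_0') \leq \varphi(y_0'')$. By Remark 2 each $A_k = (a_k,\infty)$ with $a_k \leq a_{k+1}$. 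Computing $A_k^C$, its first piece is $(a_k+\delta,\infty)$, and because $y_0 > a_{k+1}$ with $y_1 \geq y_0$ keeps $y_1$ bounded below by $a_{k+1}$, its second piece is $(a_{k+1},\infty)$; hence $A_k^C = (\min(a_k+\delta,a_{k+1}),\infty)$ and $\mu_0(A_k) - \mu_1(A_k^C) = F_1(\min(a_k+\delta,a_{k+1})) - F_0(a_k)$.

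It then remains to show the supremum is attained on $\mathcal A_\delta$, i.e.\ that one may assume $a_{k+1} - a_k \leq \delta$, so that $\min(a_k+\delta,a_{k+1}) = a_{k+1}$ and the objective becomes exactly $\sum_k \max\{F_1(a_{k+1}) - F_0(a_k),0\}$. Here I would argue by gap-filling: whenever a gap exceeds $\delta$, insert the knot $a' = a_k+\delta \in (a_k,a_{k+1})$ and re-index. The term at $a_k$ keeps its value $F_1(a_k+\delta) - F_0(a_k)$, while the new knot contributes the nonnegative $\max\{F_1(\min(a'+\delta,a_{k+1})) - F_0(a'),0\}$, so the sum cannot decrease; iterating (with a routine limiting argument when infinitely many or arbitrarily large gaps occur) gives $\sup_{\text{all}} = \sup_{\mathcal A_\delta}$. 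I expect this to be the main obstacle, since one must check the insertion preserves the monotone-decreasing structure, that the relabelled bi-infinite sum is still a feasible dual object, and that the limit loses no mass; it is also precisely the step that encodes the paper's core insight that MTR lets several rectangles stack below the line $Y_1 = Y_0 + \delta$.

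For the upper bound with $\delta \geq 0$ I would show MTR does not improve the Makarov upper bound. One inclusion is free: MTR only shrinks the admissible couplings, so $\sup_\pi \Pr(Y_1 - Y_0 \leq \delta)$ cannot exceed $1 + \inf_y \min(F_1(y) - F_0(y-\delta),0)$. For the reverse I would run Theorem 1 with $\varphi$ nonincreasing, so $B_k = (-\infty,b_k)$ with $b_{k+1} \leq b_k$; the decisive asymmetry with the lower bound is that in the second piece of $B_k^C$ the constraint $y_1 - y_0 > \delta$ together with $y_0 \in B_{k+1}$ unbounded below lets $y_0 \to -\infty$, so that piece is all of $\mathbb R$ whenever $B_{k+1} \neq \emptyset$, forcing the corresponding term to vanish. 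Hence only one nonempty set survives, for which $B_k^C = (-\infty,b_k+\delta)$ and the term is $\max\{F_0(b_k) - F_1(b_k+\delta),0\}$; writing $y = b_k+\delta$ collapses the dual to $\sup_y \max\{F_0(y-\delta) - F_1(y),0\}$, i.e.\ $1-$the Makarov upper bound. The delicate point, and the second place I expect trouble, is justifying that $\varphi$ is nonincreasing, because the argument used for the lower bound fails: raising $y_0$ can render the optimal partner $y_1'$ infeasible under $y_1 \geq y_0$. I would close this either through the cyclical-monotonicity structure of the optimal coupling or, more robustly, by verifying directly that the Makarov-upper copula $C^U_s$ from Section 2 places all its mass on $\{y_1 \geq y_0\}$ when $\delta \geq 0$ and $F_1 \leq F_0$, which exhibits the matching MTR coupling and gives sharpness via Theorem 1.
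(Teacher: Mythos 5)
Your lower-bound argument is essentially the paper's own proof: the two-inequality chain showing $\varphi$ is nondecreasing on the optimal support, the half-line level sets $A_k=(a_k,\infty)$, the computation $A_k^C=\left(\min\{a_k+\delta,a_{k+1}\},\infty\right)$, and the knot-insertion (gap-filling) step that reduces the supremum to $\mathcal{A}_\delta$ all appear, in that order, in Part 1 of the proof of Corollary 1 in Appendix A.

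The genuine gap is in the upper bound, at exactly the step you flag as delicate. Neither of your proposed fixes closes it. The cyclical-monotonicity route is only a gesture, and the concrete route --- verifying that the Makarov-upper copula $C_s^U$ puts all its mass on $\{y_1\geq y_0\}$ whenever $\delta\geq 0$ and $F_1\leq F_0$ --- is false. Counterexample: let $Y_0\sim U[0,1]$ and let $Y_1$ have density $1$ on $[0.2,0.7]$ and density $1/2$ on $[2,3]$, so that $F_1\leq F_0$. For $\delta=1$ one computes $\inf_y\left\{F_1(y)-F_0(y-1)\right\}=-1/2$, hence $s=F_\Delta^U(1)=1/2$. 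The comonotone block of $C_s^U$ pairs the $u$-quantile of $Y_0$ with the $(u-1/2)$-quantile of $Y_1$ for $u\in[1/2,1]$, i.e.\ it pairs $Y_0=u$ with $Y_1=F_1^{-1}(u-1/2)=u-0.3<Y_0$, so $C_s^U$ violates MTR on a set of probability $1/2$. The bound $s=1/2$ \emph{is} still attained under MTR in this example --- by the comonotone coupling $Y_1=F_1^{-1}(F_0(Y_0))$, which gives $\Delta=0.2$ on the lower half and $\Delta=1+Y_0>1$ on the upper half --- which is what Corollary 1 asserts; but $C_s^U$ is not a witness, so your argument establishes neither the monotonicity of $\varphi$ nor sharpness.

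The paper resolves this differently, and this is also where the hypothesis $\Pr(Y_1=Y_0)=0$ actually enters (not, as you suggest, to reconcile strict and weak indicators at $\delta=0$). Under MTR plus this hypothesis, $\Pr(Y_1>Y_0)=1$, so every point $(y_0',y_1')$ in the support of the optimal coupling satisfies $y_1'>y_0'$ strictly. For any $h$ with $0<h<y_1'-y_0'$, the partner $y_1'$ remains feasible in the infimum defining $\varphi(y_0'+h)$, and since the cost indicator is nonincreasing in $y_0$, this yields $\varphi(y_0'+h)\leq\varphi(y_0')$ --- but only \emph{locally}, for small $h$. The paper then invokes a separate local-to-global monotonicity lemma (Lemma A.2 in Appendix A) to conclude that the continuous function $\varphi$ is nonincreasing on all of $\mathbb{R}$. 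Once that is in place, your computation of $B_k^C$ (the second piece equals $\mathbb{R}$ whenever $B_{k+1}\neq\emptyset$, so only one set survives and the dual collapses to $\sup_b\max\{F_0(b)-F_1(b+\delta),0\}$) goes through exactly as in the paper.
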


\begin{proof}
See Appendix A.
\end{proof}

The identifying power of MTR on the lower bound has an interesting
graphical interpretation. As shown in Figure 7(a), the DTE under MTR corresponds to the probability of the region between
two straight lines $Y_{1}=Y_{0}$ and $Y_{1}=Y_{0}+\delta$. Given marginal
distributions, the Makarov lower bound is obtained by picking $y^{\ast}%
\in\mathbb{R}$ such that a rectangle $[y^{\ast}-\delta,\infty)\times
(-\infty,y^{\ast}]$ yields the maximum Fr\'{e}chet lower bound among all
rectangles below the straight line $Y_{1}=Y_{0}+\delta.$ As shown in Figure 7(b), under MTR the probability of any
rectangle $[y-\delta,\infty)\times(-\infty,y]$ below the straight line
$Y_{1}=Y_{0}+\delta$ is equal to that of the triangle between two straight
lines $Y_{1}=Y_{0}+\delta$ and $Y_{1}=Y_{0}.$ Now one can draw \emph{multiple}
mutually disjoint triangles between two straight lines $Y_{1}=Y_{0} $ and
$Y_{1}=Y_{0}+\delta$ as in Figure 7(c). Since
the probability of each triangle is equal to the probability of the rectangle
extended to the right and bottom sides, the lower bound on each triangle is
obtained by applying the Fr\'{e}chet lower bound to the extended rectangle.
Then the improved lower bound is obtained by summing the Fr\'{e}chet \ lower
bounds on the triangles.%
\begin{figure}
[ptb]
\begin{center}
\includegraphics[
natheight=4.937200in,
natwidth=5.521000in,
height=4.9926in,
width=5.578in
]%
{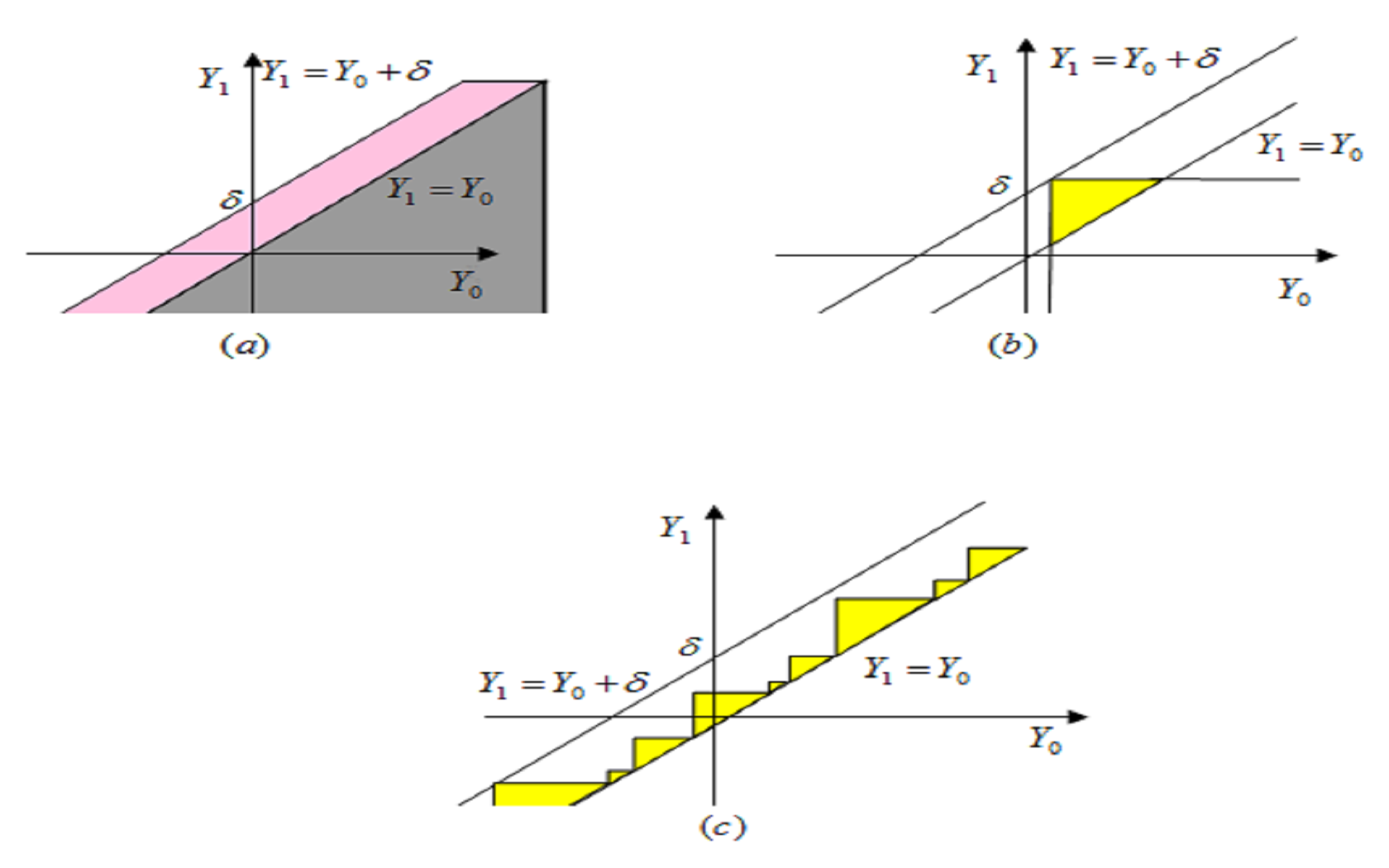}%
\caption{Improved lower bound under MTR}%
\label{improvedlowerboundunderMTR}%
\end{center}
\end{figure}

One of the key benefits of my characterization based on the optimal
transportation approach is that it guarantees sharpness of the bounds. To show
sharpness of given bounds in a copula approach, one should show what
dependence structures achieve the bounds under fixed marginal distributions.
This is technically difficult under MTR. However, the optimal transportation
approach gets around this challenge by focusing on a dual representation
involving given marginal distributions only.

Now I\ provide a sketch of the procedure to derive the lower bound under MTR
from Theorem 1. The proof of deriving the lower bound from
Theorem 1 proceeds in two stpng.

The first step is to show that the dual function $\varphi$ is nondecreasing so
that one can put $A_{k}=(a_{k},\infty)$ for $a_{k}\in\left[  -\infty
,\infty\right]  $ at the optimum. For any $\left(  y_{0},y_{1}\right)  $ in
the support of the optimal joint distribution, the dual function $\varphi$ for
the lower bound is written as%
\[
\varphi\left(  y_{0}\right)  =\underset{y_{1}\geq y_{0}}{\inf}\left\{
\boldsymbol{1}\left\{  y_{1}-y_{0}<\delta\right\}  -\psi\left(  y_{1}\right)
\right\}  .
\]
For any $\left(  y_{0}^{\prime},y_{1}^{\prime}\right)  $ and $\left(
y_{0}^{\prime\prime},y_{1}^{\prime\prime}\right)  $ with $y_{0}^{\prime\prime
}>y_{0}^{\prime}$ in the support of the optimal joint distribution,
\begin{align*}
\varphi\left(  y_{0}^{\prime}\right)   & =\boldsymbol{1}\left\{  y_{1}%
^{\prime}-y_{0}^{\prime}<\delta\right\}  -\psi\left(  y_{1}^{\prime}\right) \\
& \leq\boldsymbol{1}\left\{  y_{1}^{\prime\prime}-y_{0}^{\prime}%
<\delta\right\}  -\psi\left(  y_{1}^{\prime\prime}\right) \\
& \leq\boldsymbol{1}\left\{  y_{1}^{\prime\prime}-y_{0}^{\prime\prime}%
<\delta\right\}  -\psi\left(  y_{1}^{\prime\prime}\right) \\
& =\varphi\left(  y_{0}^{\prime\prime}\right)  .
\end{align*}
The first inequality in the second line follows from $y_{1}^{\prime\prime}\geq
y_{0}^{\prime\prime}>y_{0}^{\prime}$ The second inequality in the third line
is satisfied because $\boldsymbol{1}\left\{  y_{1}-y_{0}<\delta\right\}  $ is
nondecreasing in $y_{0}.$ Consequently, $\varphi$ is nondecreasing and thus
$A_{k}=(a_{k},\infty)$ for $a_{k}\in\lbrack-\infty,\infty]$ at the optimum.

$A_{k}^{D}$ is obtained from $A_{k}$ as follows: for $\delta>0$ and
$A_{k}=(a_{k},\infty)$ and $A_{k+1}=(a_{k+1},\infty),$%
\begin{align*}
A_{k}^{D}  & =\left\{  y_{1}\in\mathbb{R}|\exists y_{0}>a_{k}\text{ s.t.
}\delta\leq y_{1}-y_{0}\right\}  \cup\left\{  y_{1}\in\mathbb{R}|\exists
y_{0}>a_{k+1}\text{ s.t. }0\leq y_{1}-y_{0}<\delta\right\} \\
& =\left(  a_{k}+\delta,\infty\right)  \cup\left(  a_{k+1},\infty\right) \\
& =\left(  \min\left\{  a_{k}+\delta,a_{k+1}\right\}  ,\infty\right)  .
\end{align*}%
\begin{figure}
[ptb]
\begin{center}
\includegraphics[
natheight=3.416900in,
natwidth=4.854200in,
height=2.591in,
width=3.6685in
]%
{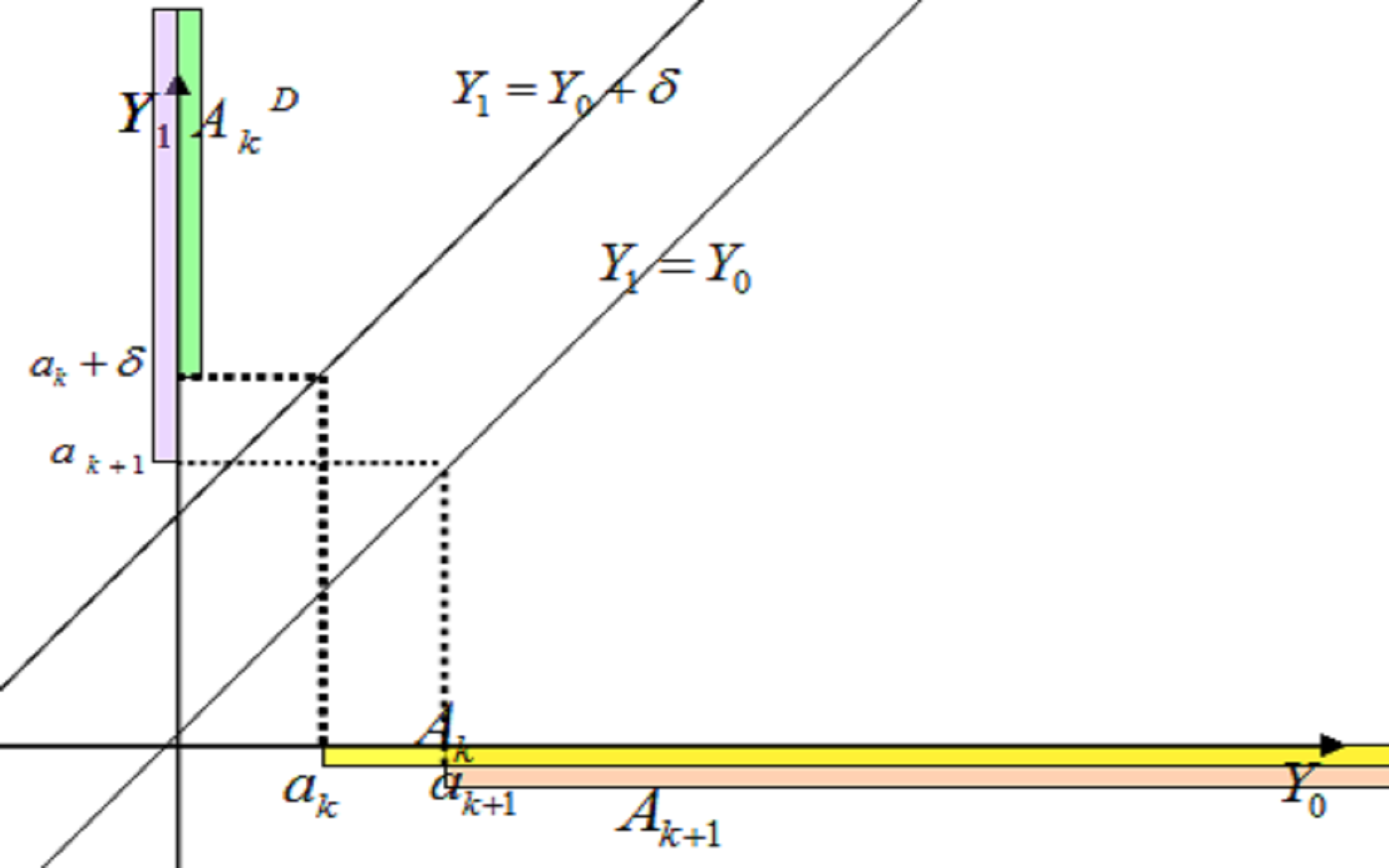}%
\caption{$A_{k}^{D}$ for $A_{k}=\left(  a_{k},\infty\right)  $ and
$A_{k+1}=\left(  a_{k+1},\infty\right)  .$}%
\label{AkDforAk}%
\end{center}
\end{figure}
\begin{figure}
[ptb]
\begin{center}
\includegraphics[
natheight=2.874600in,
natwidth=5.979300in,
height=2.3281in,
width=4.811in
]%
{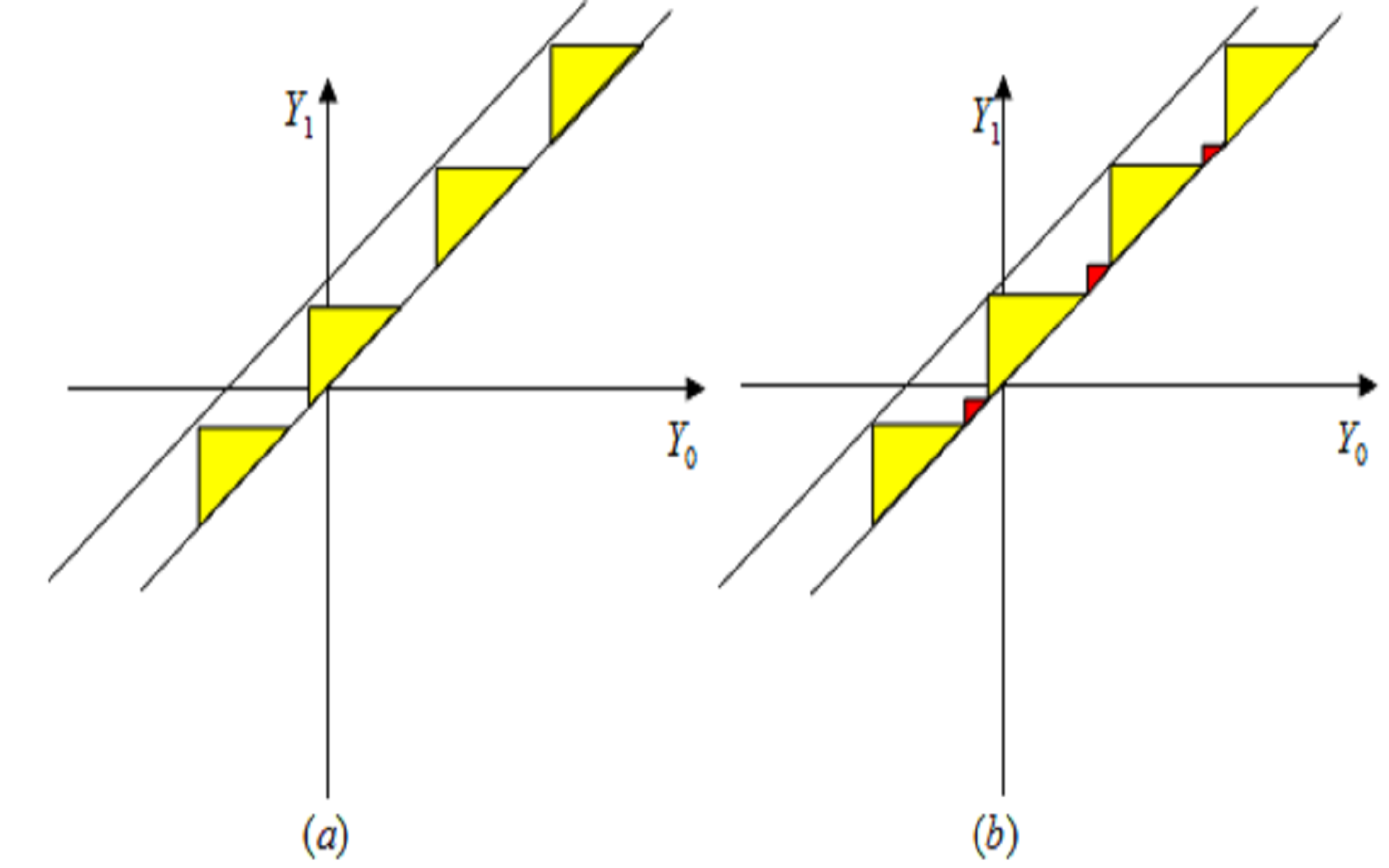}%
\caption{$a_{k+1}\leq a_{k}+\delta$ at the optimum}%
\label{ab1}%
\end{center}
\end{figure}

At the optimum, $\left\{  a_{k}\right\}  _{k=-\infty}^{\infty}$ should
satisfy\ $a_{k+1}\leq a_{k}+\delta$ for each integer $k$. The rigorous proof
is provided in Appendix A. I\ demonstrate this graphically here. As shown in
Figure 7(c), my improved lower bound represents
the sum of Fr\'{e}chet lower bounds on the probability of a sequence of
disjoint triangles. Suppose that $a_{k+1}>a_{k}+\delta$ for some integer $k$.
This implies that triangles in the region between two straight lines
$Y_{1}=Y_{0}+\delta$ and $Y_{1}=Y_{0}$ lie sparsely as shown in Figure
9(a). Then by adding extra triangles that fill the empty region
between two sparse triangles as shown in Figure 9(b), one can always
construct a sequence of mutually exclusive triangles that yield the identical
or improved lower bound. Therefore, without loss of generality, one can assume
$a_{k+1}\leq a_{k}+\delta$ for every integer $k$.%
\begin{figure}
[ptb]
\begin{center}
\includegraphics[
natheight=2.896300in,
natwidth=5.812400in,
height=2.9395in,
width=5.8712in
]%
{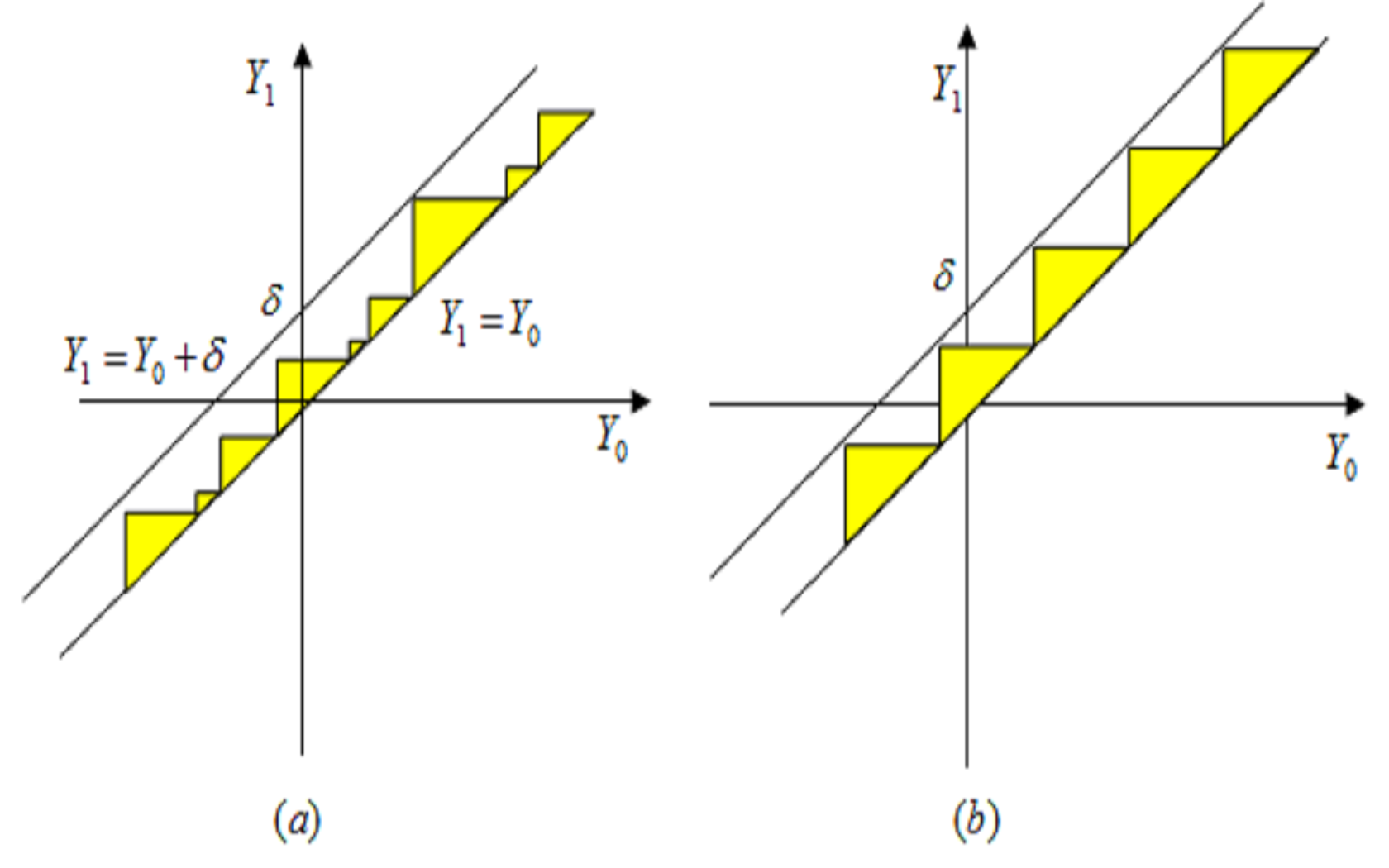}%
\caption{$a_{k+1}\leq a_{k}+\delta$ v.s. $a_{k+1}=a_{k}+\delta$}%
\label{ab2}%
\end{center}
\end{figure}

On the other hand, ones\ cannot exclude the case where $a_{k+1}<a_{k}+\delta$
for some integer $k$ at the optimum$.$ This implies that for some $k,$ the
triangle is not large enough to fit in the region corresponding to the DTE
under MTR as shown in Figure 10(b). It depends on the underlying joint
distribution which sequence of triangles would yield the tighter lower bound,
and it is possible that $a_{k+1}<a_{k}+\delta$ for some integer $k$ at the
optimum$.$ Therefore,
\begin{align*}
A_{k}^{D}  & =\left(  a_{k}+\delta,\infty\right)  \cup\left(  a_{k+1}%
,\infty\right) \\
& =\left(  \min\left\{  a_{k}+\delta,a_{k+1}\right\}  ,\infty\right) \\
& =\left(  a_{k+1},\infty\right)  .
\end{align*}
Consequently, for $\delta\geq0,$
\begin{align*}
F_{\Delta}^{L}\left(  \delta\right)   & =\underset{\left\{  A_{k}\right\}
_{k=-\infty}^{\infty}}{\sup}\sum\limits_{k=-\infty}^{\infty}\max\left\{
\mu_{0}\left(  A_{k}\right)  -\mu_{1}\left(  A_{k}^{D}\right)  ,0\right\} \\
& =\underset{\left\{  a_{k}\right\}  _{k=-\infty}^{\infty}}{\sup}%
\sum\limits_{k=-\infty}^{\infty}\max\left\{  F_{1}\left(  a_{k+1}\right)
-F_{0}\left(  a_{k}\right)  ,0\right\}
\end{align*}
where $0\leq a_{k+1}-a_{k}\leq\delta.$

\subsubsection{Concave/Convex Treatment Response}

Recall the setting of Example 2 in Subsection 2.1. Let $W$ denote the outcome
without treatment and let $Y_{0}$ and $Y_{1}$ denote the potential outcomes
with treatment at low-intensity, and with treatment at high-intensity,
respectively. Let $t_{d}$ denote the level of input for each treatment status
for $d=0,1,$ while $t_{W}$\ is a level of input without the treatment with
$t_{W}<t_{0}<t_{1}$. Either $\left(  W,Y_{0}\right)  $ or $\left(
X,Y_{1}\right)  $ is observed for each individual, but not $(W,Y_{0},Y_{1})$.
Given $W=w,$ the distribution of $Y_{1}-Y_{0}$\ under concave treatment
response corresponds to\ the probability of the intersection of $\left\{
Y_{1}-Y_{0}\leq\delta\right\}  $, $\left\{  \frac{Y_{0}-w}{t_{0}-t_{W}}%
\geq\frac{Y_{1}-Y_{0}}{t_{1}-t_{0}}\right\}  ,$ and $\left\{  Y_{1}\geq
Y_{0}\geq w\right\}  $\ in the support space of $(Y_{0},Y_{1})$. Similarly,
given $W=w,$ the distribution of $Y_{1}-Y_{0}$\ under convex treatment
response corresponds to the\ probability of the intersection of $\left\{
Y_{1}-Y_{0}\leq\delta\right\}  $, $\left\{  \frac{Y_{1}-Y_{0}}{t_{1}-t_{0}%
}\geq\frac{Y_{0}-w}{t_{0}-t_{W}}\right\}  ,$ and $\left\{  Y_{1}\geq Y_{0}\geq
w\right\}  $\ in the support space of $(Y_{0},Y_{1})$. Note that $\left\{
\frac{Y_{0}-w}{t_{0}-t_{W}}\geq\frac{Y_{1}-Y_{0}}{t_{1}-t_{0}}\right\}  $ and
$\left\{  \frac{Y_{1}-Y_{0}}{t_{1}-t_{0}}\geq\frac{Y_{0}-w}{t_{0}-t_{W}%
}\right\}  $ correspond to the regions below and above the straight line
$Y_{1}=$ $\frac{t_{1}-t_{W}}{t_{0}-t_{W}}Y_{0}-\frac{t_{1}-t_{0}}{t_{0}-t_{W}%
}w$, respectively.%
\begin{figure}
[ptb]
\begin{center}
\includegraphics[
natheight=3.739500in,
natwidth=4.499600in,
height=3.2076in,
width=3.8536in
]%
{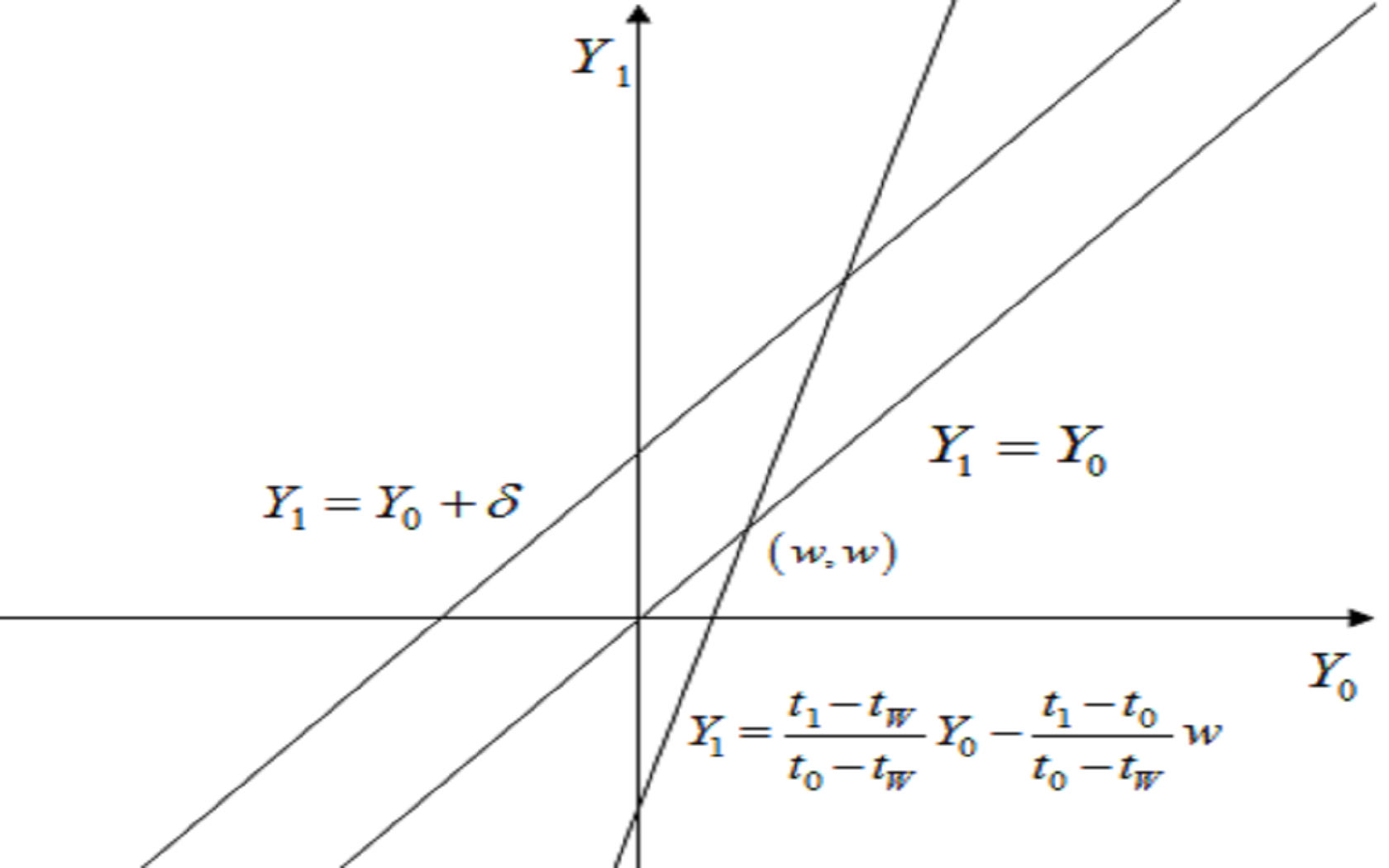}%
\caption{The DTE under concave/convex treatment response}%
\label{concavetreatmentresponse}%
\end{center}
\end{figure}

Corollary 2 derives sharp bounds under concave treatment
response and convex treatment response from Theorem 1.

\begin{corollary}
\label{concaveTR}Take any $w$ in the support of $W$ such that the conditional
marginal distributions of $Y_{1}$ and $Y_{0}$ given $W=w$ are both absolutely
continuous with respect to the Lebesgue measure on $\mathbb{R}$.$\ $Let
$F_{0,W}\left(  \cdot|w\right)  $ and $F_{1,W}\left(  \cdot|w\right)  $\ be
conditional distribution functions of $Y_{0}$ and $Y_{1}$ given $W=w$,
respectively.\newline(i) Under concave treatment response, sharp bounds on the
DTE\ are given as follows: for any $\delta\in\mathbb{R},$%
\[
F_{\Delta}^{L}\left(  \delta\right)  \leq F_{\Delta}\left(  \delta\right)
\leq F_{\Delta}^{U}\left(  \delta\right)
\]
where
\begin{align*}
F_{\Delta}^{L}\left(  \delta\right)   & =\underset{\left\{  a_{k}\right\}
_{k=-\infty}^{\infty}}{\sup}\sum\limits_{k=-\infty}^{\infty}\int\max\left\{
F_{1,W}\left(  a_{k+1}|w\right)  -F_{0,W}\left(  a_{k}|w\right)  ,0\right\}
dF_{W},\\
F_{\Delta}^{U}\left(  \delta\right)   & =1+\int\underset{\left\{
b_{k}\right\}  _{k=-\infty}^{\infty}}{\inf}\sum\limits_{k=-\infty}^{\infty
}\left\{  \min\left(  F_{1,W}\left(  \frac{1}{T_{0}}b_{k+1}-\frac{T_{1}}%
{T_{0}}w\ |w\right)  -F_{0,W}\left(  b_{k}\ |w\right)  \right)  ,0\right\}
dF_{W},
\end{align*}
with
\begin{align*}
0  & \leq a_{k+1}-a_{k}\leq\delta,\\
T_{0}\left(  b_{k}+\delta\right)  +T_{1}  & \leq b_{k+1}\leq b_{k},\\
\text{where \ }T_{1}  & =\frac{t_{1}-t_{0}}{t_{1}-t_{W}},\\
T_{0}  & =1-T_{1}.\newline%
\end{align*}
(ii) Under convex treatment response,%
\begin{align*}
F_{\Delta}^{L}\left(  \delta\right)   & =\int\underset{\left\{  a_{k}\right\}
_{k=-\infty}^{\infty}}{\sup}\sum\limits_{k=-\infty}^{\infty}\max\left\{
F_{1,W}\left(  S_{1}a_{k+1}+\left(  1-S_{1}\right)  w|w\right)  -F_{0,W}%
\left(  a_{k}|w\right)  ,0\right\}  dF_{W},\\
F_{\Delta}^{U}\left(  \delta\right)   & =1+\int\underset{y\in\mathbb{R}}{\inf
}\left\{  \min\left(  F_{1,W}\left(  y|w\right)  -F_{0,W}\left(
y-\delta|w\right)  \right)  ,0\right\}  dF_{W}.
\end{align*}
with%
\begin{align*}
a_{k}  & \leq a_{k+1}\leq\frac{1}{S_{1}}\left\{  \left(  a_{k}+\delta\right)
+\frac{1}{S_{0}}w\right\}  ,\\
S_{1}  & =\frac{t_{1}-t_{W}}{t_{0}-t_{W}},\\
S_{0}  & =\frac{t_{0}-t_{W}}{t_{1}-t_{0}}.
\end{align*}

\end{corollary}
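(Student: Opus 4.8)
The plan is to reduce the unconditional problem to a family of conditional optimal-transportation problems indexed by $w$, apply Theorem~\ref{01infinity} slice by slice, and integrate. Because $(W,Y_{0})$ or $(W,Y_{1})$ is observed for each unit, the conditional marginals $F_{0,W}(\cdot|w)$ and $F_{1,W}(\cdot|w)$ are identified, and the concave restriction acts within each slice $\{W=w\}$ as the closed wedge $C_{w}=\{(y_{0},y_{1}):y_{1}\geq y_{0}\geq w,\ \frac{y_{0}-w}{t_{0}-t_{W}}\geq\frac{y_{1}-y_{0}}{t_{1}-t_{0}}\}$, with the inequality reversed in the convex case. Since no constraint links different values of $W$, one may minimize or maximize $\Pr(Y_{1}-Y_{0}\leq\delta\mid W=w)$ separately in each slice; writing $F_{\Delta}(\delta)=\int\Pr(Y_{1}-Y_{0}\leq\delta\mid W=w)\,dF_{W}$ then makes the unconditional sharp bounds the $dF_{W}$-integral of the conditional ones delivered by Theorem~\ref{01infinity} with marginals $F_{d,W}(\cdot|w)$ and support $C_{w}$. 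This reduces the corollary to evaluating the dual expressions of Theorem~\ref{01infinity} for each fixed $w$.

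For each slice I would run the two-step argument from the MTR derivation in Corollary~\ref{MTR}. The first step fixes the shape of the optimizing sets through monotonicity of the dual $\varphi$. For the lower bound the cost $\boldsymbol{1}\{y_{1}-y_{0}<\delta\}$ is nondecreasing in $y_{0}$ and $C_{w}$ forces $y_{1}\geq y_{0}$, so the same chain of inequalities used for MTR gives $\varphi$ nondecreasing and hence $A_{k}=(a_{k},\infty)$ with $a_{k}\leq a_{k+1}$ (the sequence $\{A_{k}\}$ being decreasing). For the upper bound the cost $\boldsymbol{1}\{y_{1}-y_{0}>\delta\}$ is nonincreasing in $y_{0}$, yielding $\varphi$ nonincreasing and $B_{k}=(-\infty,b_{k})$ with $b_{k+1}\leq b_{k}$.

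The second step evaluates the transformed sets $A_{k}^{C}$ and $B_{k}^{C}$ by intersecting the two pieces in their definition with $C_{w}$, and here the problem splits into a slack half and a binding half. For the concave lower bound the low-gain piece, built from $A_{k+1}$ under $y_{1}-y_{0}<\delta$, hugs the diagonal $y_{1}=y_{0}$ and so is never clipped by the upper concavity line; it already equals $(a_{k+1},\infty)$, while the high-gain piece (from $A_{k}$ under $y_{1}-y_{0}\geq\delta$) is shrunk by concavity but, once $a_{k+1}\leq a_{k}+\delta$ is imposed as in MTR, is contained in it. Thus $A_{k}^{C}=(a_{k+1},\infty)$, concavity is slack, and the summand collapses to $\max\{F_{1,W}(a_{k+1}|w)-F_{0,W}(a_{k}|w),0\}$ under $0\leq a_{k+1}-a_{k}\leq\delta$, i.e.\ the conditional MTR lower bound. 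The convex upper bound is the mirror image: above the convexity line the region $\{y_{1}-y_{0}>\delta\}$ is unbounded from above, so disjoint sets cannot be stacked there and the sum in Theorem~\ref{01infinity} collapses to a single set whose optimal boundary the convexity line never reaches, returning the conditional Makarov form $1+\inf_{y}\min(F_{1,W}(y|w)-F_{0,W}(y-\delta|w),0)$.

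In the two binding cases the endpoint of the last set is mapped through the wedge boundary. For the concave upper bound the high-gain piece of $B_{k}^{C}$, generated from $B_{k+1}=(-\infty,b_{k+1})$, is capped above by the line $y_{1}=\frac{1}{T_{0}}y_{0}-\frac{T_{1}}{T_{0}}w$, so its supremum is the image $\frac{1}{T_{0}}b_{k+1}-\frac{T_{1}}{T_{0}}w$; requiring this piece to dominate the low-gain piece, whose reach is $b_{k}+\delta$, gives the lower restriction $T_{0}(b_{k}+\delta)+T_{1}w\leq b_{k+1}$, and $\mu_{1}(B_{k}^{C})=F_{1,W}(\frac{1}{T_{0}}b_{k+1}-\frac{T_{1}}{T_{0}}w\,|\,w)$ yields the stated summand. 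The convex lower bound is symmetric: above the convexity line the low-gain piece of $A_{k}^{C}$, formed from $A_{k+1}$, has lower endpoint equal to the line image $S_{1}a_{k+1}+(1-S_{1})w$, which becomes the argument of $F_{1,W}$, and nonemptiness of a low-gain slice converts to $a_{k+1}\leq\frac{1}{S_{1}}\{(a_{k}+\delta)+\frac{1}{S_{0}}w\}$. The main obstacle is exactly this geometric bookkeeping: showing the near-boundary piece is never clipped in the slack cases, deciding which of the two pieces controls $\mu_{1}$ in the binding cases, and turning the dominance and nonemptiness requirements into the precise sequence constraints in $S_{0},S_{1},T_{0},T_{1}$. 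Once these conditional identities are in hand, integrating them against $dF_{W}$ yields Corollary~\ref{concaveTR}.
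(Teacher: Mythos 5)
Your architecture is exactly the paper's: condition on $W=w$, treat each slice as a $\{0,1,\infty\}$-cost transportation problem with marginals $F_{d,W}(\cdot|w)$ and support set $C_{w}$, invoke Theorem 1, use monotonicity of the dual function $\varphi$ to force $A_{k}=(a_{k},\infty)$ (resp.\ $B_{k}=(-\infty,b_{k})$), evaluate $A_{k}^{C}$ and $B_{k}^{C}$ geometrically, and integrate against $dF_{W}$. The paper writes out only one of the four bounds in full (the concave upper bound) and disposes of the other three by analogy; your sketch of the concave upper bound reproduces that proof's key steps -- the cap of the $B_{k+1}$-piece at the line image $\frac{1}{T_{0}}b_{k+1}-\frac{T_{1}}{T_{0}}w$, the dominance requirement $T_{0}(b_{k}+\delta)+T_{1}w\leq b_{k+1}$ (your version with the $w$ is the one the paper's proof actually derives; the corollary's display drops it), and the resulting summand -- and your concave-lower and convex-lower arguments match the paper's. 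What you defer as ``geometric bookkeeping,'' however, is where the paper's proof does most of its work: showing the terms are nonpositive in the slack cases (the paper's cases a) and b), which need the stochastic-dominance inequality $F_{0,W}(y|w)\leq F_{1,W}(\frac{1}{T_{0}}y-\frac{T_{1}}{T_{0}}w\,|w)$ implied by the restriction) and the explicit construction of a modified sequence $\{\widetilde{B}_{k}\}$ proving that imposing the constraint is innocuous at the optimum (the paper's Figure A.7 argument). Those steps would have to be filled in.

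The genuine gap is your convex upper bound. The claim that the sum ``collapses to a single set whose optimal boundary the convexity line never reaches'' is false, and the analogy with MTR breaks precisely here: under MTR the support set $\{y_{1}\geq y_{0}\}$ contains the entire cost-one region $\{y_{1}-y_{0}>\delta\}$ for $\delta\geq0$, so the restriction never clips anything there, whereas the convex restriction $y_{1}\geq S_{1}y_{0}+(1-S_{1})w$ cuts into $\{y_{1}-y_{0}>\delta\}$ as soon as $y_{0}$ exceeds the kink point $w+S_{0}\delta$. Carrying out your own computation of $B_{k}^{C}$: the only possibly positive term is the one with $B_{k+1}$ essentially empty, and then $B_{k}^{C}=[w,\min(b_{k},w+S_{0}\delta)+\delta)$, so the term is $F_{0,W}(b_{k}|w)-F_{1,W}(\min(b_{k},w+S_{0}\delta)+\delta\,|w)$. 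Its supremum over $b_{k}$ is not the Makarov expression: letting $b_{k}\rightarrow\infty$ yields the extra value $1-F_{1,W}(w+(S_{0}+1)\delta\,|w)$, which can strictly exceed the Makarov dual value. Concretely, take $w=0$, $S_{1}=2$ (so $S_{0}=1$), $\delta=1$, $Y_{0}\sim U[0,10]$, $Y_{1}\sim U[0,20]$: the only coupling compatible with $Y_{1}\geq2Y_{0}$ and these marginals is $Y_{1}=2Y_{0}$, so the sharp upper bound is $\Pr(Y_{0}\leq1)=0.1$, which is what the dual above delivers, while the Makarov upper bound is $0.55$. So a correct execution of your own method produces a bound strictly tighter than the formula you set out to prove; the discrepancy originates in the corollary itself (the paper offers no proof of this case beyond the ``same token as MTR'' assertion, which fails for the same geometric reason), but in any event your proposed justification of this step is not valid.
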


\begin{proof}
See Appendix A.
\end{proof}

\subsubsection{Roy Model}

Establishing sharp DTE bounds under support restrictions allows us to derive
sharp DTE bounds in the Roy model. In the Roy model, each agent selects into
treatment when the net benefit from doing so is positive. The Roy model is
often divided into three versions according to the form of its selection
equation: the original Roy model, the extended Roy model, and the generalized
Roy model. Most of the recent literature considers the extended or generalized
Roy model that accounts for nonpecuniary costs of selection.

Consider the generalized Roy model in Heckman et al. (2011)\ and French and
Taber (2011):
\begin{align*}
Y  & =\mu\left(  D,X\right)  +U_{D},\\
D  & =\boldsymbol{1}\left\{  Y_{1}-Y_{0}\geq m_{C}\left(  Z\right)
+U_{C}\right\}  ,
\end{align*}
where $X$ is a vector of observed covariates while $\left(  U_{1}%
,U_{0}\right)  $ are unobserved gains in the equation of potential outcomes.
In the selection equation, $Z$ is a vector of observed cost shifters while
$U_{C}$ is an unobserved scalar cost. The main assumption in this model is
\[
\left(  U_{1},U_{0},U_{c}\right)  \perp\!\!\!\perp(X,Z).
\]
As two special cases of the generalized Roy model, the original Roy model
assumes that $\mu_{C}\left(  Z\right)  =U_{C}=0$ and the extended Roy model
assumes that each agent's cost is deterministic with $U_{C}=0$. My result
provides DTE bounds in the extended Roy model:%
\begin{align*}
Y  & =m\left(  D,X\right)  +U_{D},\\
D  & =\boldsymbol{1}\left\{  Y_{1}-Y_{0}\geq m_{C}\left(  Z\right)  \right\}
.
\end{align*}
The DTE in the extended Roy model is written as follows:%
\begin{align*}
F_{\Delta}\left(  \delta\right)   & =E\left[  \Pr\left(  Y_{1}-Y_{0}\leq
\delta|X\right)  \right] \\
& =E\left[  \Pr\left(  Y_{1}-Y_{0}\leq\delta|X,z\right)  \right] \\
& =E\left[  F_{\Delta}\left(  \delta|1,X,z\right)  \right]  p\left(  z\right)
+E\left[  F_{\Delta}\left(  \delta|0,X,z\right)  \right]  \left(  1-p\left(
z\right)  \right)  ,
\end{align*}
where $p\left(  z\right)  =\Pr\left(  D=1|Z=z\right)  $, $F_{\Delta}\left(
\delta|d,,X,z\right)  =\Pr\left(  Y_{1}-Y_{0}\leq\delta|D=d,X,Z=z\right)  $
for $d\in\left\{  0,1\right\}  .$ French and Taber (2011) listed sufficient
conditions under which the marginal distributions of potential outcomes are
point-identified in the generalized Roy model.\footnote{See Assumption 4.1-4.6
in French and Taber (2011). These assumptions include some high level
conditions such as the full support of both instruments and of exclusive
covariates for each sector. If those conditions are not satisfied, the
marginal distributions may only be partially identified.} Those assumptions
also apply to the extended Roy model since it is a special case of the
generalized Roy model. Under their conditions, conditional marginal
distributions of $Y_{0}$ and $Y_{1}$ on the treated $(D=1)$ and untreated
$(D=0)$ are also all point-identified. Note that given $Z=z$, the treated and
untreated groups correspond to the regions $\left\{  Y_{1}-Y_{0}\geq
m_{C}\left(  z\right)  \right\}  $ and $\left\{  Y_{1}-Y_{0}<m_{C}\left(
z\right)  \right\}  $\ respectively. Let $F_{d_{1}}\left(  y|d_{2},z\right)
=\Pr\left(  Y_{d_{1}}\leq y|D=d_{2},Z=z\right)  .$ Bounds on the DTE are
obtained based on the identified marginal distributions on the treated and
untreated as follows: for $d\in\left\{  0,1\right\}  ,$
\[
F_{\Delta}^{L}\left(  \delta|d,z\right)  \leq F_{\Delta}\left(  \delta
|d,z\right)  \leq F_{\Delta}^{U}\left(  \delta|d,z\right)  ,
\]
where
\[
F_{\Delta}^{L}\left(  \delta|1,z\right)  =\left\{
\begin{array}
[c]{cc}%
\underset{\left\{  a_{k}\right\}  _{k=-\infty}^{\infty}}{\sup}\sum
\limits_{k=-\infty}^{\infty}\max\left\{
\begin{array}
[c]{c}%
F_{1}\left(  a_{k+1}+m_{C}\left(  z\right)  |1,z\right)  -F_{0}\left(
a_{k}|1,z\right)  ,\\
0
\end{array}
\right\}  , & \text{for }\delta\geq m_{C}\left(  z\right)  ,\\
0, & \text{for }\delta<m_{C}\left(  z\right)  ,
\end{array}
\right.
\]
with
\[
a_{k}\leq a_{k+1}\leq a_{k}+\delta-m_{C}\left(  z\right)  ,
\]
and%
\begin{align*}
F_{\Delta}^{U}\left(  \delta|1,z\right)   & =\left\{
\begin{array}
[c]{cc}%
1+\underset{y\in\mathbb{R}}{\inf}\left\{  \min\left(  F_{1}\left(
y|1,z\right)  -F_{0}\left(  y-\delta|1,z\right)  \right)  ,0\right\}  , &
\text{for }\delta\geq m_{C}\left(  z\right)  ,\\
0, & \text{for }\delta<m_{C}\left(  z\right)  ,
\end{array}
\right. \\
F_{\Delta}^{L}\left(  \delta|0,z\right)   & =\left\{
\begin{array}
[c]{cc}%
1, & \text{for }\delta\geq m_{C}\left(  z\right)  ,\\
\underset{y\in\mathbb{R}}{\sup}\max\left\{  F_{1}\left(  y\right)
-F_{0}\left(  y-\delta\right)  ,0\right\}  , & \text{for }\delta<m_{C}\left(
z\right)  ,
\end{array}
\right. \\
F_{\Delta}^{U}\left(  \delta|0,z\right)   & =\left\{
\begin{array}
[c]{cc}%
1, & \text{for }\delta\geq m_{C}\left(  z\right)  ,\\
1+\underset{\left\{  b_{k}\right\}  _{k=-\infty}^{\infty}}{\inf}\left\{
\min\left(  F_{1}\left(  b_{k+1}+m_{C}\left(  z\right)  \right)  -F_{0}\left(
b_{k}\right)  \right)  ,0\right\}  , & \text{for }\delta<m_{C}\left(
z\right)  ,
\end{array}
\right.
\end{align*}
with%
\[
b_{k}+\delta-m_{C}\left(  z\right)  \leq b_{k+1}\leq b_{k}.
\]
Based on the bounds on $F_{\Delta}\left(  \delta|d,z\right)  $, the
identification region of the DTE can be obtained by intersection bounds as
presented in Chernozhukov et al. (2013).\footnote{The bounds on the DTE are
sharp without any other additional assumption. Park (2013) showed that the DTE
can be point-identified in the extended Roy model under continuous IV with the
large support and a restriction on the function $m_{c}.$}

\begin{corollary}
\label{Roy}The DTE in the extended Roy model is bounded as follows:
\[
F_{\Delta}^{L}\left(  \delta\right)  \leq F_{\Delta}\left(  \delta\right)
\leq F_{\Delta}^{U}\left(  \delta\right)  ,
\]
where%
\begin{align*}
F_{\Delta}^{L}\left(  \delta\right)   & =\sup_{z}\left[  F_{\Delta}^{L}\left(
\delta|1,z\right)  p\left(  z\right)  +F_{\Delta}^{L}\left(  \delta
|0,z\right)  \left(  1-p\left(  z\right)  \right)  \right]  ,\\
F_{\Delta}^{U}\left(  \delta\right)   & =\inf_{z}\left[  F_{\Delta}^{U}\left(
\delta|1,z\right)  p\left(  z\right)  +F_{\Delta}^{U}\left(  \delta
|0,z\right)  \left(  1-p\left(  z\right)  \right)  \right]  .
\end{align*}

\end{corollary}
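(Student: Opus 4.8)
The plan is to reduce the problem to a mixture of two conditional DTEs, each governed by a one-sided support restriction to which Theorem \ref{01infinity} applies, and then to combine the resulting conditional bounds across instrument values by intersection. First I would fix an instrument value $z$ and use the exclusion restriction $(U_{1},U_{0},U_{C})\perp\!\!\!\perp(X,Z)$ (with $U_{C}=0$ in the extended model) to argue that the joint law of $(Y_{0},Y_{1})$ does not vary with $Z$, so that conditioning on $Z=z$ leaves the DTE unchanged. Applying the law of total probability by splitting on the selection indicator then gives the identity
\[
F_{\Delta}(\delta)=F_{\Delta}(\delta|1,z)\,p(z)+F_{\Delta}(\delta|0,z)\,(1-p(z)),
\]
which holds for every $z$ with $p(z)=\Pr(D=1|Z=z)$ and $F_{\Delta}(\delta|d,z)=\Pr(Y_{1}-Y_{0}\leq\delta|D=d,Z=z)$ already integrating out $X$.

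Second, I would bound each conditional DTE separately by recognizing that selection pins down a support restriction on the corresponding subpopulation. On $\{D=1,Z=z\}$ the rule $D=\boldsymbol{1}\{Y_{1}-Y_{0}\geq m_{C}(z)\}$ forces $Y_{1}-Y_{0}\geq m_{C}(z)$, a shifted version of MTR; applying Corollary \ref{MTR} to the pair $\left(Y_{0},\,Y_{1}-m_{C}(z)\right)$ with threshold $\delta-m_{C}(z)$ and using $\Pr(Y_{1}-m_{C}(z)\leq a|1,z)=F_{1}(a+m_{C}(z)|1,z)$ yields exactly the stated $F_{\Delta}^{L}(\delta|1,z)$ and $F_{\Delta}^{U}(\delta|1,z)$, with both collapsing to $0$ once $\delta<m_{C}(z)$ since the event is then empty. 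On $\{D=0,Z=z\}$ the rule forces $Y_{1}-Y_{0}\leq m_{C}(z)$, a ceiling that is the mirror image of MTR; the same argument—now improving the upper bound through a summed sequence of Fr\'{e}chet bounds over $\{b_{k}\}$ while the lower bound reverts to the Makarov bound—delivers $F_{\Delta}^{L}(\delta|0,z)$ and $F_{\Delta}^{U}(\delta|0,z)$, both equal to $1$ once $\delta\geq m_{C}(z)$. Each of these conditional bounds is sharp by Theorem \ref{01infinity}.

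Third, substituting the component bounds into the mixture identity gives, for every $z$,
\[
F_{\Delta}^{L}(\delta|1,z)p(z)+F_{\Delta}^{L}(\delta|0,z)(1-p(z))\leq F_{\Delta}(\delta)\leq F_{\Delta}^{U}(\delta|1,z)p(z)+F_{\Delta}^{U}(\delta|0,z)(1-p(z)).
\]
Because the middle term is a single number not depending on $z$, all such inequalities must hold simultaneously, so the sharpest valid bounds are the pointwise supremum of the lower envelopes and infimum of the upper envelopes over $z$—the intersection-bounds construction of Chernozhukov et al. (2013). This produces exactly the $F_{\Delta}^{L}(\delta)$ and $F_{\Delta}^{U}(\delta)$ in the statement.

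The hard part will be sharpness of the intersected bounds rather than their validity, which is immediate. Since the exclusion restriction forces one and the same joint law of $(Y_{0},Y_{1})$ to reproduce the selected conditional marginals for all instrument values at once, the per-$z$ constructions that attain the component bounds need not be mutually compatible across $z$. Establishing that every point of the intersected interval is attainable therefore cannot rely on the per-$z$ optimal-transportation constructions alone; I would instead invoke the intersection-bounds sharpness results of Chernozhukov et al. (2013), verifying that their regularity conditions hold here given the absolute continuity imposed on the conditional marginals.
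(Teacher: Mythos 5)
Your proposal matches the paper's own derivation, which appears in the text preceding the corollary (there is no separate proof in Appendix A): the same mixture identity $F_{\Delta}(\delta)=F_{\Delta}(\delta|1,z)p(z)+F_{\Delta}(\delta|0,z)(1-p(z))$ valid for every $z$ by the exclusion restriction, the same conditional bounds obtained from the support restrictions $\{Y_{1}-Y_{0}\geq m_{C}(z)\}$ and $\{Y_{1}-Y_{0}<m_{C}(z)\}$ via Theorem 1 (your shift reduction to Corollary 1 reproduces the paper's displayed formulas exactly), and the same intersection over $z$ following Chernozhukov et al. (2013). Your closing caveat---that validity of the intersected bounds is immediate while their sharpness requires a separate cross-$z$ compatibility argument---is also consistent with the paper, which asserts sharpness only in a footnote without proof.
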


\section{Numerical Illustration}

This section provides numerical illustration to assess the informativeness of
my new bounds. Since my sharp bounds on the DTE\ under support restrictions
are written with respect to given marginal distribution functions $F_{0}$ and
$F_{1}$, the tightness of the bounds is affected by the properties of these
marginal distributions. I\ report the results of numerical examples to clarify
the association between the identifying power of my bounds and the marginal
distribution functions $F_{0}$ and $F_{1}.$ I\ focus on MTR, which is one of
the most widely applicable support restrictions in economics.

My numerical examples use the following data generating process for the
potential outcomes equation: for $d\in\left\{  0,1\right\}  ,$%
\[
Y_{d}=\beta d+\varepsilon,
\]
where $\beta\sim\chi^{2}\left(  k_{1}\right)  ,$ $\varepsilon\sim N\left(
0,k_{2}\right)  $, and $\beta\perp\!\!\!\perp\varepsilon.$ Obviously,
treatment effects $\Delta=$ $\beta\sim\chi^{2}\left(  k_{1}\right)  $ satisfy
MTR and marginal distribution functions $F_{0}$\ and $F_{1}$ are given as%
\begin{align*}
F_{1}\left(  y\right)   & =\int\limits_{-\infty}^{\infty}G\left(
y-x;k_{1}\right)  \phi\left(  \frac{x}{\sqrt{k_{2}}}\right)  dx,\\
F_{0}\left(  y\right)   & =\Phi\left(  \frac{y}{\sqrt{k_{2}}}\right)  ,
\end{align*}
where $G\left(  \cdot;k_{1}\right)  $ is the distribution function of a
$\chi^{2}\left(  k_{1}\right)  $ and $\Phi\left(  \cdot\right)  $ are the
standard normal probability density function and its distribution function, respectively.

Recall that the sharp upper bound under MTR is identical to the Makarov upper
boun, and the sharp lower bound on the DTE under MTR is given as follows: for
$\delta\geq0$,%
\begin{equation}
\underset{\left\{  a_{k}\right\}  _{k=-\infty}^{\infty}\in\mathcal{A}_{\delta
}}{\sup}\sum\limits_{k=-\infty}^{\infty}\max\left\{  F_{1}\left(
a_{k+1}\right)  -F_{0}\left(  a_{k}\right)  ,0\right\}  ,\label{opt1}%
\end{equation}
where $\mathcal{A}_{\delta}=\left\{  \left\{  a_{k}\right\}  _{k=-\infty
}^{\infty};0\leq a_{k+1}-a_{k}\leq\delta\text{ for each integer }k\right\}  .$
The lower bound requires computing the optimal sequence of $a_{k}$. The
specific computation procedure is described in Appendix B.
\begin{figure}
[ptb]
\begin{center}
\includegraphics[
height=3.7213in,
width=6.6746in
]%
{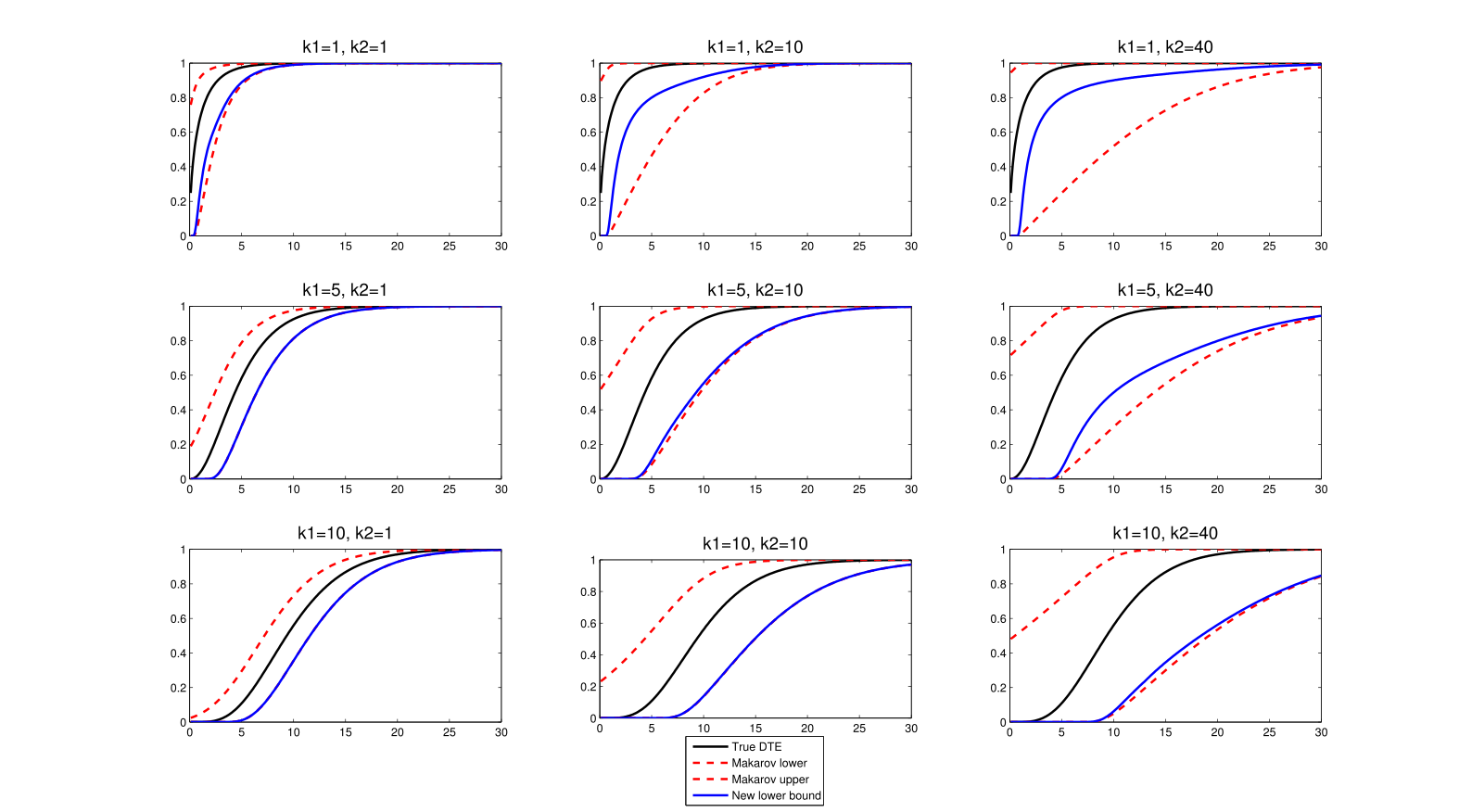}%
\caption{New bounds v.s. Makarov bounds}%
\label{bounds}%
\end{center}
\end{figure}

Figure 12 shows the true DTE as well as Makarov bounds and the
improved lower bound under MTR for $k_{1}=1,5,10$ and $k_{2}=1,10,40.$ To see
the effect of marginal distributions for the fixed true DTE $\Delta\sim
\chi^{2}\left(  k_{1}\right)  ,$ I\ focus on how the DTE bounds change for
different values of $k_{2}$\ and fixed $k_{1}$.

Figure 12 shows that Makarov bounds and my new lower bound become
less informative as $k_{2}$ increases. My data generating process assumes
$Y_{1}-Y_{0}\sim\chi^{2}\left(  k_{1}\right)  $, $Y_{0}\sim N\left(
0,k_{2}\right)  $ and $Y_{1}-Y_{0}\perp\!\!\!\perp Y_{0}.$\ \ When the true
DTE is fixed with a given value of $k_{1},$ both Makarov bounds and my new
bounds move further away from the true DTE as the randomness in the potential
outcomes $Y_{0}$ and $Y_{1}$\ increases with higher $k_{2}$. If $k_{2}=0$ as
an extreme case, in which $Y_{0}$ has a degenerate distribution, obviously
Makarov bounds as well as my new bounds point-identify the DTE.

Interestingly, as $k_{2}$ increases, my new lower bound moves further away
from the true DTE much more slowly than the Makarov lower bound. Therefore,
the information gain from MTR, which is represented by the distance between my
new lower bound and the Makarov lower bound, increases as $k_{2}$ increases.
This shows that under MTR, my new lower bound gets additional information from
the larger variation of marginal distributions.

To develop intuition, recall Figure 7(c). Under
MTR, the larger variation in marginal distributions $F_{0}$ and $F_{1}$ over
the support causes more triangles having positive probability lower bounds,
which leads the improvement of my new lower bound. On the other hand, the
Makarov lower bound gets no such informational gain because it uses only one
triangle while my new lower bound takes advantage of \emph{multiple} triangles.%

\begin{figure}
[ptb]
\begin{center}
\includegraphics[
height=3.7213in,
width=6.6746in
]%
{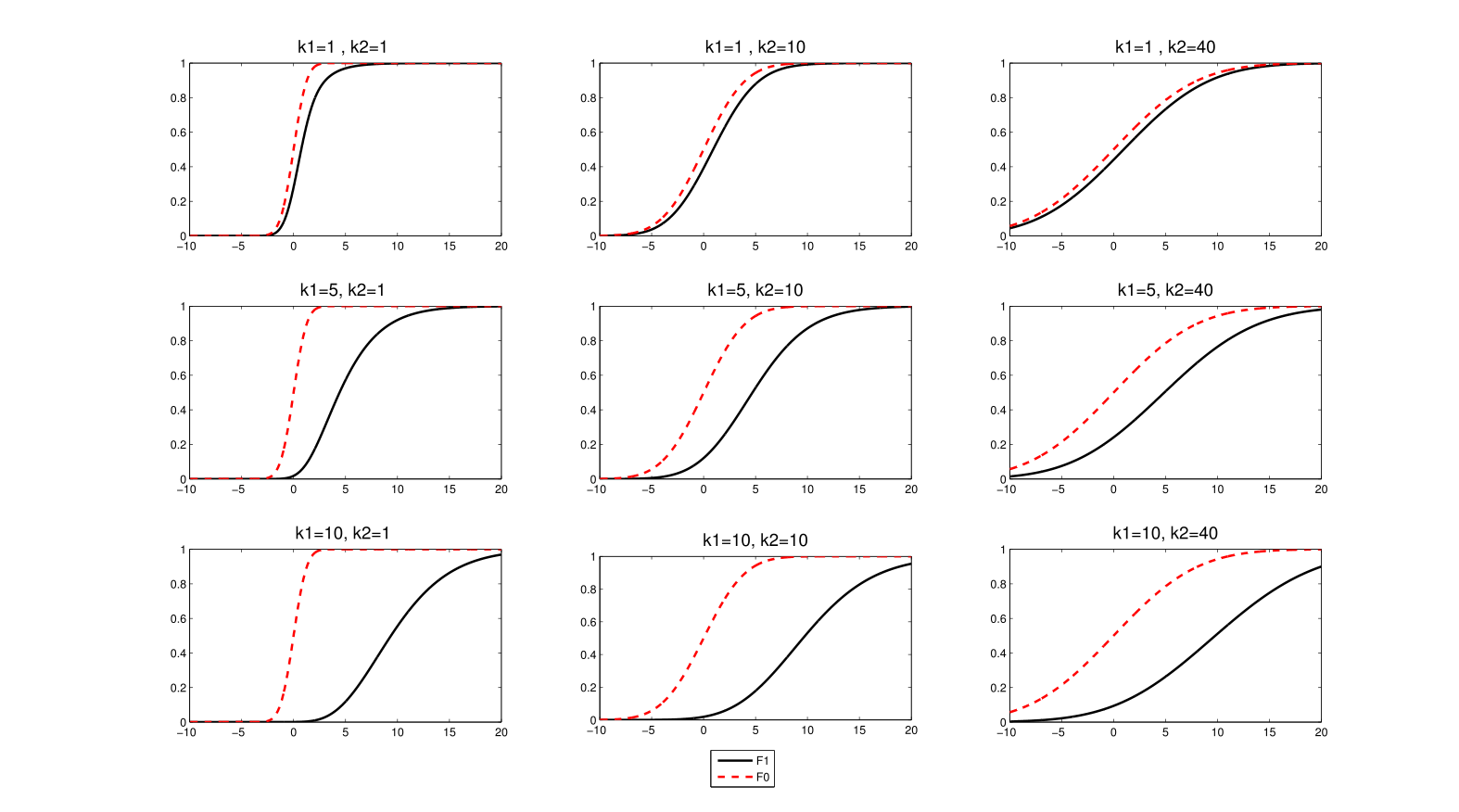}%
\caption{Marginal distributions of potential outcomes}%
\label{distributionfunctionsofpotentialoutcomes}%
\end{center}
\end{figure}

\section{Application to the Distribution of Effects\ of Smoking on Birth
Weight}

In this section, I\ apply the results presented in Section 3 to an empirical
analysis of the distribution of smoking effects on infant birth
weight.\ Smoking not only has a direct impact on infant birth weight, but is
also associated with unobservable factors that affect infant birth weight.
I\ identify marginal distributions of potential infant birth weight with and
without smoking by making use of a state cigarette tax hike in Massachusetts
(MA) in January 1993 as a source of exogenous variation. I\ focus on pregnant
women who change their smoking behavior from smoking to nonsmoking in response
to the tax increase.\ To identify the distribution of smoking effects,
I\ impose a MTR restriction that smoking has nonpositive effects on infant
birth weight with probability one. I\ propose an estimation procedure and
report estimates of the DTE\ bounds. I\ compare my new bounds to Makarov
bounds to demonstrate the informativeness and usefulness of my methodology.

\subsection{Background}

Birth weight has been widely used as an indicator of infant health and welfare
in economic research. Researchers have investigated social costs associated
with low birth weight (LBW), which is defined as birth weight less than 2500
grams, to understand the short term and long term effects of children's
endowments. For example, Almond et al. (2005) estimated the effects of birth
weight on medical costs, other health outcomes, and mortality rate, and Currie
and Hyson (1999) and Currie and Moretti (2007) evaluated the effects of low
birth weight on educational attainment and long term labor market outcomes.
Almond and Currie (2011) provide a survey of this literature.

Smoking has been acknowledged as the most significant and preventable cause of
LBW, and thus various efforts have been made to reduce the number of women
smoking during pregnancy. As one of these efforts, increases in cigarette
taxes have been widely used as a policy instrument between 1980 and 2009 in
the U. S. Tax rates on cigarettes have increased by approximately $\$0.80$
each year on average across all states, and more than $80$ tax increases of
$\$0.25$ \ have been implemented in the past $15$ years (Simon (2012) and
Orzechowski and Walker (2011)).

In the literature, there have been various attempts to clarify the causal
effects of smoking on infant birth weight. Most previous empirical studies
have evaluated the average\ effects of smoking or effects on the marginal
distribution of potential infant birth weight focusing on the methods to
overcome the endogeneity of smoking behavior.

My analysis pays particular attention to the distribution of smoking effects
on infant birth weight. The DTE\ conveys the information on the targets of
anti-smoking policy, which is particularly important for this study, because
the DTE can answer the following questions: "how many births are significantly
vulnerable to smoking ?" and "who should the interventions intensively target?".

I make use of the cigarette tax increase in MA in January of 1993, which
increased the state excise tax from $\$0.26$ to $\$0.51$ per pack, as an
instrument to identify marginal distributions of potential birth weight
acknowledging the presence of endogeneity in smoking behavior. In November
1992, MA voters passed a ballot referendum to raise the tax on tobacco
products, and in 1993 the Massachusetts Tobacco Control Program was
established with a portion of the funds raised through this referendum. The
Massachusetts Tobacco Control Program initiated activities to promote smoking
cessation such as media campaigns, smoking cessation counselling, enforcement
of local antismoking laws, and educational programs targeted primarily at
teenagers and pregnant women.

The IV framework developed by Abadie, Angrist and Imbens (2002) is used to
identify and estimate marginal distributions of potential infant birth weight
for pregnant women who change their smoking status from smoking to nonsmoking
in response to the tax increase. Henceforth, I\ call this group of people
compliers. Based on the estimated marginal distributions, I\ establish sharp
bounds on the smoking effects under the MTR assumption that smoking has
adverse effects on infant birth weight.

\subsection{Related Literature}

The related literature can be divided into three strands by their empirical
strategy to overcome the endogenous selection problem. The first strand of the
literature, including Almond et al. (2005), assumes that smoking behavior is
exogenous conditional on observables such as mother's and father's
characteristics, prenatal care information, and maternal medical risk factors.
However, Caetano (2012) found strong evidence that smoking behavior is still
endogenous after controlling for the most complete covariate specification in
the literature. The second strand of the literature, including Permutt and
Hebel (1989), Evans and Ringel (1999), Lien and Evans (2005), and Hoderlein
and Sasaki (2011) takes an IV strategy. Permutt and Hebel (1989) made use of
randomized counselling as an exogenous variation, while Evans and Ringel
(1999), Hoderlein and Sasaki (2013) took advantage of cigarette tax rates or
tax increases.\footnote{Permutt and Hebel (1989), Evans and Ringel (1999) and
Lien and Evans (2005) two-stage linear regression to estimate the average
effect of smoking using an instrument. Hoderlein and Sasaki (2011) adopted the
number of cigarettes as a continuous treatment, and identified and estimated
the average marginal effect of a cigarette based on the nonseparable model
with a triangular structure.} The last strand takes a panel data approach.
This approach isolates the effects of unobservables using data on mothers with
multiple births and identifies the effect of smoking from the change in their
smoking status from one pregnancy to another. To do this, Abrevaya (2006)
constructed the panel data set with novel matching algorithms\ between women
having multiple births and children on federal natality data. The panel data
set constructed by Abrevaya (2006) has been used in other recent studies such
as Arellano and Bonhomme (2011) and Jun et al. (2013). Jun et al. (2013)
tested stochastic dominance between two marginal distributions of potential
birth weight with and without smoking. Arellano and Bonhomme (2011) identified
the distribution of smoking effects using the random coefficient panel data model.

To the best of my knowledge, the only existing study that examines the
distribution of smoking effects is Arellano and Bonhomme (2012). While they
point-identify the distribution of smoking effects, their approach presumes
access to the panel data with individuals who changed their smoking status
within their multiple births. Specifically, they use the following panel data
model with random coefficients:%
\[
Y_{it}=\alpha_{i}+\beta_{i}D_{it}+X_{it}^{\prime}\gamma+\varepsilon_{it}%
\]
where $Y_{it}$ is infant birth weight and $D_{it}$ \ is an indicator for woman
$i$ smoking before she had her $t$-th baby. Extending Kotlarski's
deconvolution idea, they identify the \emph{distribution} of $\beta
_{i}=E\left[  Y_{it}|D_{it}=1,\alpha_{i},\beta_{i}\right]  -E\left[
Y_{it}|D_{it}=0,\alpha_{i},\beta_{i}\right]  $, which indicates the
distribution of smoking effects in this example. For the identification, they
assume strict exogeneity that mothers do not change their smoking behavior
from their previous babies' birth weight. Furthermore, their estimation result
is somewhat implausible. It is interpreted that smoking has a positive effect
on infant birth weight for approximately 30\% mothers. They conjecture that
this might result from a misspecification problem such as the strict
exogeneity condition, i.i.d. idiosyncratic shock, etc.%

\begin{center}
\begin{table}[tbp]%

\caption{Data used in the recent literature}\label{dataintheliterature}%

\begin{tabular}
[c]{l|l|l}\hline\hline
& {Data} & { \# of obs.}\\\hline
{ Evans and Ringel (1999)} & \multicolumn{1}{|l|}{{ NCHS
(1989-1992)}} & ${ 10.5}${ \ million}\\
{ Almond et al. (2005)} & \multicolumn{1}{|l|}{{ NCHS(1989-1991,
PA only)}} & ${ 491,139}$\\
{ Abrevaya (2006)} & \multicolumn{1}{|l|}{{ matched panel
constructed from NCHS (1989-1998)}} & ${296,218}$\\
{ Arellano and Bonhomme (2011)} & { matched panel \#3 in Abrevaya
(2006)} & ${ 1,445}$\\
{ Jun et al. (2013)} & { matched panel \#3 in Abrevaya (2006)} &
${ 2,113}$\\
{ Hoderlein and Sasaki (2013)} & \multicolumn{1}{|l|}{{ random
sample from NCHS (1989-1999)}} & ${ 100,000}$\\\hline
\end{tabular}%
\end{table}%

\end{center}

Most existing studies used the Natality Data by the National Center for Health
Statistics (NCHS) for its large sample size and a wealth of information on
covariates. The birth data is based on birth records from every live birth in
the U.S. and contains detailed information on birth outcomes, maternal
prenatal behavior and medical status, and demographic
attributes.\footnote{Unfortunately the Natality Data does not provide
information on mothers' income and weight.} Table 1
describes the data used in the recent literature.

While some studies such as Hoderlein and Sasaki (2011) and Caetano (2012) use
the number of cigarettes per day as a continuous treatment variable, most
applied research uses a binary variable for smoking. The literature, including
Evans and Farrelly (1998), found that individuals, especially women, tend to
underreport their cigarette consumption. On the other hand, smoking
participation has shown to be more accurately reported among adults in the
literature. Moreover, the literature has pointed out that the number of
cigarettes may not be a good proxy for the level of nicotine intake. Previous
studies, including Chaloupka and Warner (2000), Evans and Farrelly (1998),
Farrelly et al. (2004), Adda and Cornaglia (2006), and Abrevaya and Puzzello
(2012) discussed that although an increase in cigarette taxes leads to a lower
percentage of smokers and less cigarettes consumed by smokers, it causes
individuals to purchase cigarettes that contain more tar and nicotine as
compensatory behavior.

Although many recent studies are based on the same NCHS data set, their
estimates of average smoking effects are quite varied, ranging from -144 grams
to -600 grams depending on their estimation methods and samples. Table 2 summarizes their estimates.%
\begin{table}[tbp]%

\begin{center}
\caption{Estimated average smoking effects on infant birth weight}\label{estintheliterature}%

\begin{tabular}
[c]{l|l}\hline\hline
& {\small Estimate (g)}\\\hline
{\small Evans and Ringel (1999)} & \multicolumn{1}{|c}{{\small -600} $-$
{\small -360}}\\
{\small Almond et al. (2005)} & \multicolumn{1}{|c}{{\small -203.2}}\\
{\small Abrevaya (2006)} & \multicolumn{1}{|c}{{\small -144} $-$
{\small -178}}\\
{\small Arellano and Bonhomme (2011)} & \multicolumn{1}{|c}{{\small -161}%
}\\\hline
\end{tabular}

\end{center}%

\end{table}%

\subsection{Data}

I\ use the NCHS Natality dataset. My sample\ consists of births to women who
were in their first trimester during the period between two years before and
two years after the tax increase. In other words, I\ consider births to women
who conceived babies in MA between October 1990 and September
1994.\footnote{To trace the month of conception, I\ use information on the
month of birth and the clinical estimate of gestation weeks.} I\ define the
instrument as an indicator of whether the agent faces the high tax rate from
the tax hike during the first trimester of pregnancy. Since the tax increase
occurred in MA in January of 1993, the instrument $Z$ can be written as
\begin{equation}
Z=\left\{
\begin{array}
[c]{cl}%
1, & \text{if a baby is conceived in Oct. 1992 or later}\\
0, & \text{if a baby is conceived before Oct. 1992}%
\end{array}
\right. \label{ZZZ}%
\end{equation}
The first trimester of pregnancy has received particular attention in the
medical literature on the effects of smoking. Mainous and Hueston (1994)
demonstrated that smokers who quit smoking within the first trimester showed
reductions in the proportion of preterm deliveries and low birth weight
infants, compared with those who smoked beyond the first trimester. Also,
Fingerhut et al. (1990) showed that approximately 70\% of women who quit
smoking during pregnancy do so as soon as they are aware of their pregnancy,
which is mostly the first trimester of pregnancy.

I\ take only singleton births into account and focus on births to mothers who
are white, Hispanic or black, and whose age is between 15 and 44. The
covariates that I\ use to control for observed characteristics include
mothers' race, education, age, martial status, birth year, sex of the baby,
the "Kessner" prenatal care index, pregnancy history, information on various
diseases such as anemia, cardiac, diabete alcohol use, etc.\footnote{As an
index measure for the quality of prenatal care, the Kessner index is
calculated based on month of pregnancy care started, \ number of prenatal
visits, and length of gestation. If the value 1 in the Kessner index indicates
'adequate' prenatal care, while the value 2 and the value 3 indicate
'intermediate' and 'inadequate' prenatal care, respectively. For details, see
Abrevaya (2006).}

Descriptive statistics for this sample are reported in Table 3. After
the tax increase, the smoking rate of pregnant women decreased from 23\% to
16\%. As expected, babies of nonsmokers are on average heavier than babies of
smokers by 214 grams and furthermore, nonsmokers' infant birth weight
stochastically dominate smokers' infant birth weight as shown in Figure
14. Also, smokers are on average 1.63 years
younger, 1.27 years less educated than nonsmokers, and less likely to have
adequate prenatal care in the Kessner index. Regarding race, black or Hispanic
pregnant women are less likely to smoke than white women.%

\begin{table}[tbp]%

\begin{center}
\caption{Means and Standard Deviations}\label{mean}
\begin{tabular}
[c]{cccccccc}\hline\hline
&  & \multicolumn{3}{c}{{\small Before/After Tax Increase}} &
\multicolumn{3}{c}{{\small Smoking/Nonsmoking}}\\\cline{3-8}
& {\small Entire sample} & {\small After} & {\small Before} & {\small Diff.} &
{\small Smokers} & {\small Nonsmokers} & {\small Diff.}\\\hline
{\small \# of obs.} & {\small 297,031} & {\small 144,251} & {\small 152,780} &
& {\small 57,602} & {\small 239,429} & \\
{\small Smoking} & {\small 0.19} & {\small 0.16} & {\small 0.23} &
{\small -0.07} &  &  & \\
{\small (proportion)} & {\small [0.40]} & {\small [0.36]} & {\small [0.42]} &
{\small (-50.64)} &  &  & \\
{\small Birth weight} & {\small 3416.81} & {\small 3416.73} & {\small 3416.88}
& {\small -0.15} & {\small 3244.31} & {\small 3458.30} & {\small -214.00}\\
{\small (grams)} & {\small [556.07]} & {\small [556.09]} & {\small [556.07]} &
{\small (-0.07)} & {\small [561.28]} & {\small [546.75]} & {\small (-82.57)}\\
{\small Age} & {\small 28.51} & {\small 28.70} & {\small 28.33} & {\small .37}
& {\small 27.19} & {\small 28.82} & {\small -1.63}\\
{\small (years)} & {\small [5.70]} & {\small [5.75]} & {\small [5.65]} &
{\small (17.58)} & {\small [5.67]} & {\small [5.66]} & {\small (-62.07)}\\
{\small Education} & {\small 13.46} & {\small 13.54} & {\small 13.38} &
{\small 0.15} & {\small 12.43} & {\small 13.71} & {\small -1.27}\\
& {\small [2.50]} & {\small [2.49]} & {\small [2.52]} & {\small (16.48)} &
{\small [2.16]} & {\small [2.52]} & {\small (-112.00)}\\
{\small Married} & {\small 0.74} & {\small 0.74} & {\small 0.75} &
{\small -0.004} & {\small 0.58} & {\small 0.78} & {\small -.20}\\
& {\small [0.43]} & {\small [0.74]} & {\small [.44]} & {\small (-2.64)} &
{\small [.49]} & {\small [0.41]} & {\small (-90.41)}\\
{\small Black} & {\small 0.10} & {\small 0.10} & {\small 0.10} &
{\small -0.005} & {\small 0.07} & {\small 0.11} & {\small -0.03}\\
& {\small [0.30]} & {\small [0.29]} & {\small [.30]} & {\small (-4.22)} &
{\small [0.26]} & {\small [0.31]} & {\small (-27.90)}\\
{\small Hispanic} & {\small 0.10} & {\small 0.10} & {\small 0.10} &
{\small 0.002} & {\small 0.06} & {\small 0.11} & {\small -0.06}\\
& {\small [0.30]} & {\small [0.30]} & {\small [0.30]} & {\small (2.23)} &
{\small [.24]} & {\small [0.32]} & {\small (-45.34)}\\
{\small Kessner=1} & {\small 0.84} & {\small .84} & {\small 0.83} &
{\small 0.01} & {\small 0.78} & {\small 0.85} & {\small -0.08}\\
& {\small [0.37]} & {\small [0.36]} & {\small [0.37]} & {\small (7.96)} &
{\small [.42]} & {\small [0.35]} & {\small (-41.69)}\\
{\small Kessner=2} & {\small 0.13} & {\small 0.13} & {\small 0.14} &
{\small -0.01} & {\small 0.18} & {\small 0.12} & {\small 0.05}\\
& {\small [0.34]} & {\small [0.34]} & {\small [0.34]} & {\small (-5.75)} &
{\small [0.38]} & {\small [0.33]} & {\small (30.35)}\\
{\small Gestation} & {\small 39.27} & {\small 39.25} & {\small 39.29} &
{\small -0.04} & {\small 39.14} & {\small 39.30} & {\small -0.17}\\
{\small (weeks)} & {\small [2.04]} & {\small [2.01]} & {\small [2.07]} &
{\small (-5.88)} & {\small [2.24]} & {\small [1.99]} & {\small (-16.29)}%
\\\hline\hline
\end{tabular}

\end{center}%

\scriptsize
Note: The table reports means and standard deviations (in brackets) for the sample used in this study. The columns showing differences in means (by assignment or treatment status) report the t-statistic (in parentheses) for the null hypothesis of equality in means.
\end{table}%
%

\begin{figure}
[ptb]
\begin{center}
\includegraphics[
natheight=5.833200in,
natwidth=7.778100in,
height=2.9456in,
width=3.9167in
]%
{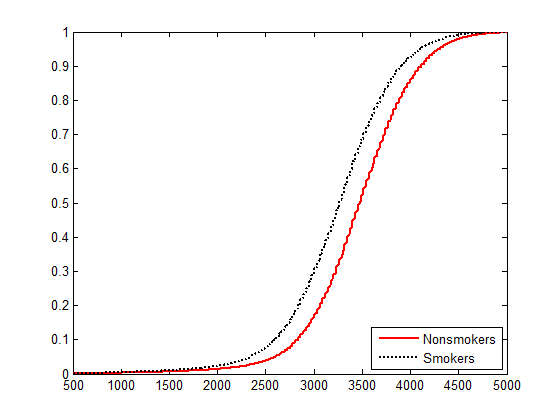}%
\caption{Distribution functions of infant birth weight of smokers and
nonsmokers}%
\label{distributionfunctionsof}%
\end{center}
\end{figure}

\subsection{Estimation}

Using the earlier notation, let $Y$ be observed infant birth weight and $D$
the nonsmoking indicator defined as
\[
D=\left\{
\begin{tabular}
[c]{ll}%
$1,$ & for a nonsmoker\\
$0,$ & for a smoker
\end{tabular}
\right.
\]
In addition, let $D_{z}$ denote a potential nonsmoking indicator given $Z=z.$
Let $Y_{0}$ be the potential infant birth weight if the mother is a smoker,
while $Y_{1}$ the potential infant birth weight if the mother is not a smoker.
As defined in (14), $Z$ is a tax increase indicator during the first
trimester.\ The $k\times1$ vector $X$ of covariates\ consists of binary
indicators for mother's race, age, education, marital status, birth order, sex
of the baby, "Kessner" prenatal care index, drinking status, and medical risk
factors. Since the treatment variable is nonsmoking here, the estimated effect
is the benefit of smoking cessation, which is in turn equal to the absolute
value of the adverse effect of smoking.\ To identify marginal distributions,
I\ impose the standard LATE assumptions following Abadie et al. (2002):

\begin{assumption}
\label{LATEassumption}For almost all values of $X:$

(i) Independence: $\left(  Y_{1},Y_{0},D_{1},D_{0}\right)  $ is jointly
independent of $Z$ given $X$.

(ii) Nontrivial Assignment: $\Pr\left(  Z=1|X\right)  \in\left(  0,1\right)
.$

(iii) First-stage: $E\left[  D_{1}|X\right]  \neq E\left[  D_{0}|X\right]  .$

(iv) Monotonicity: $\Pr\left(  D_{1}\geq D_{0}|X\right)  =1.$
\end{assumption}

Assumption 2(i) implies that the tax increase exogenously
affects the smoking status conditional on observables and that any effect of
the tax increase on infant birth weight must be via the change in smoking
behavior. This is plausible in my application since the tax increase acts as
an exogenous shock.\footnote{The state cigarette tax rate and tax increases
have been widely recognized as a valid instrument in the literature such as
Evans and Ringel (1999), Lien and Evans (2005) and Hoderlein and Sasaki
(2011), among others.} Assumption 2(ii) and (iii) obviously
hold in this sample. Assumption 2(iv) is plausible since an
increase in cigarette tax rates would never encourage smoking for each individual.

\subsubsection{The Marginal Treatment Effect and Local Average Treatment
Effect}

First, I estimate marginal effects of smoking cessation to see how the mean
effect varies with the individual's tendency to smoke. The marginal treatment
effect (MTE) is defined as follows:%
\[
MTE(x,p)=E[Y_{1}-Y_{0}|X=x,P\left(  Z,X\right)  =p].
\]
where $P\left(  Z,X\right)  =P(D=1|Z,X),$ which is the probability of not
smoking conditional on $Z$ and $X.$ In Heckman and Vytlacil (2005), the MTE is
recovered as follows:
\[
MTE(x,p)=\frac{\partial}{\partial p}E\left[  Y|X=x,P\left(  Z,X\right)
=p\right]  .
\]
Since the propensity score $p\left(  Z,X\right)  =\Pr\left(  D=1|Z,X\right)  $
is unobserved for each agent, I\ estimate it using the probit specification:%
\begin{equation}
p\left(  Z,X\right)  =\Phi\left(  \alpha+\beta Z+X^{\prime}\gamma\right)
.\label{FS1}%
\end{equation}
Then with the estimated propensity score $\widehat{p}\left(  Z,X\right)  $ in
(15), I\ estimate the following outcome equation:%
\begin{equation}
Y=\mu\left(  \widehat{p}\left(  Z,X\right)  ,X\right)  +u\label{SS1}%
\end{equation}
I\ estimate the equation (16) using a series approximation. This method
is especially convenient to estimate MTE $\frac{\partial\mu}{\partial p}.$ The
estimation results for the regressions (15) and (16)\ are
reported in Table C.1 and Table C.2, respectively, in Appendix C. Figure
15 shows estimated marginal treatment effects for
each propensity to not smoke. It is observed that the positive effect of
smoking cessation on infant birth weight increases as the tendency to smoke
increases. That is, the benefit of quitting smoking on child health is larger
for women who will still smoke despite facing higher tax rates. In turn, the
adverse effect of smoking on infant birth weight is\ more severe for women
with the higher tendency to smoke during pregnancy.%

\begin{figure}
[ptb]
\begin{center}
\includegraphics[
natheight=5.833200in,
natwidth=7.778100in,
height=2.9456in,
width=3.9159in
]%
{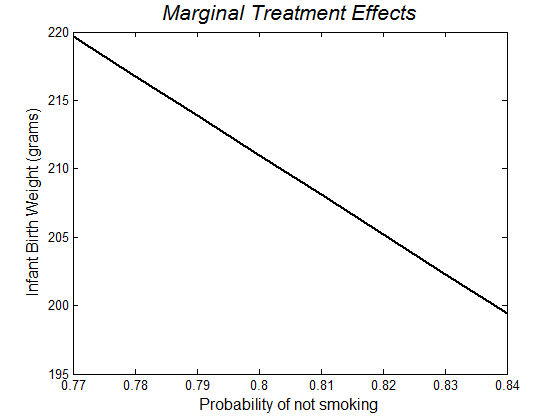}%
\caption{Marginal smoking effects}%
\label{marginaltreatmenteffects}%
\end{center}
\end{figure}

Next, I\ estimate LATE from the MTE. The LATE is interpreted as the benefit of
smoking cessation for compliers, women who change their smoking status from
smoker to nonsmoker in response to the tax increase. It is obtained from
marginal treatment effects as follows: for $\overline{p}\left(  x\right)
=\Pr\left(  D=1|Z=1,X=x\right)  $ and $\underline{p}=\Pr\left(
D=1|Z=0,X=x\right)  ,$
\[
E[Y_{1}-Y_{0}|X=x,D_{1}>D_{0}]=\frac{1}{\overline{p}\left(  x\right)
-\underline{p}\left(  x\right)  }\int_{\underline{p}\left(  x\right)
}^{\overline{p}\left(  x\right)  }MTE(x,p)dp.
\]
Table 4 presents estimated LATE for the entire sample and three
subgroups of white women, women aged 26-35, and women with some college or
college graduates (SCCG). The estimated benefit of smoking cessation is
noticeably small for SCCG women, compared to the entire sample and women whose
age is between 26 and 35. These MTE and LATE estimates show that births to
less educated women or women with a higher tendency to smoke are on average
more vulnerable to smoking. The literature, such as Deaton (2003) and Park and
Kang (2008), has found a positive association between smoking behavior and
other unhealthy lifestyles, and between higher education and a healthier
lifestyle. Given this association, my MTE and LATE estimates suggest that
births to women with an unhealthier lifestyle on average are more vulnerable
to smoking.%

\begin{table}[tbp]%

\begin{center}
\caption{Local Average Treatment Effects (grams)}\label{LATE}%
\begin{tabular}
[c]{l|l}\hline\hline
Dep. var.: birth weight (grams) & {\small LATE}\\\hline
{\small The entire sample} & \multicolumn{1}{|c}{{\small 209}}\\
{\small White} & \multicolumn{1}{|c}{{\small 133}}\\
{\small Age26-35} & \multicolumn{1}{|c}{{\small 183}}\\
{\small Some college and college graduates (SCCG)} &
\multicolumn{1}{|c}{{\small 112}}\\\hline
\end{tabular}

\end{center}%

\end{table}%

\subsubsection{Quantile Treatment Effects for Compliers}

In this subsection, I estimate the effect of smoking on quantiles of infant
birth weight through the quantile treatment effect (QTE) parameter. $q$-QTE
measures the difference in the $q$-quantile of $Y_{1}$ and $Y_{0}$, which is
written as $Q_{q}\left(  Y_{1}\right)  -Q_{q}\left(  Y_{0}\right)  $ where
$Q_{q}\left(  Y_{d}\right)  $ denotes the $q$-quantile of $Y_{d}$ for
$d\in\left\{  0,1\right\}  $.

Lemma 4 forms a basis for causal inferences for compliers under
Assumption 2.

\begin{lemma}
[Abadie et al. (2002)]\label{AAI2002} Given Assumption 2(i),
\[
\left(  Y_{1},Y_{0}\right)  \perp\!\!\!\perp D|X,D_{1}>D_{0}%
\]

\end{lemma}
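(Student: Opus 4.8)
The plan is to exploit the fact that, for compliers, the realized treatment $D$ coincides with the instrument $Z$, and then to transfer the conditional independence of $\left(Y_1,Y_0,D_1,D_0\right)$ from $Z$ onto $D$. First I would recall that the observed treatment obeys $D = Z D_1 + (1-Z) D_0$, and that, since $D_1,D_0$ are binary, the complier event satisfies $\{D_1 > D_0\} = \{D_1 = 1,\, D_0 = 0\}$. On this event the identity above collapses to $D = Z\cdot 1 + (1-Z)\cdot 0 = Z$. Consequently, within the complier subgroup, conditioning on $\{D = d\}$ is the \emph{same} conditioning as $\{Z = d\}$ for $d\in\{0,1\}$, so it suffices to establish $\left(Y_1,Y_0\right) \perp\!\!\!\perp Z \mid X, D_1 > D_0$.

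For that second step I would observe that the complier indicator $\mathbf{1}\{D_1 > D_0\}$ is a measurable function of $(D_1,D_0)$, so Assumption 2(i) immediately yields that the triple $\left(Y_1,Y_0,\mathbf{1}\{D_1 > D_0\}\right)$ is jointly independent of $Z$ given $X$. Hence for any measurable set $A$ neither $\Pr\!\left((Y_1,Y_0)\in A,\, D_1 > D_0 \mid Z = z, X\right)$ nor $\Pr\!\left(D_1 > D_0 \mid Z = z, X\right)$ depends on $z$; since the latter is strictly positive (the complier share is positive by the first-stage Assumption 2(iii)), the well-defined ratio
\[
\Pr\!\left((Y_1,Y_0)\in A \mid Z = z, X, D_1 > D_0\right) = \frac{\Pr\!\left((Y_1,Y_0)\in A,\, D_1 > D_0 \mid Z = z, X\right)}{\Pr\!\left(D_1 > D_0 \mid Z = z, X\right)}
\]
is also free of $z$. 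Combining this with the first step, $\Pr\!\left((Y_1,Y_0)\in A \mid D = d, X, D_1 > D_0\right)$ does not depend on $d\in\{0,1\}$, which is precisely the claimed conditional independence $\left(Y_1,Y_0\right) \perp\!\!\!\perp D \mid X, D_1 > D_0$.

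The main obstacle is really the only delicate point: the measure-theoretic bookkeeping. One must verify that the complier event has positive conditional probability so that all the conditional laws are well defined, and one must be careful that the substitution of $\{Z = d\}$ for $\{D = d\}$ is legitimate \emph{only} after intersecting with $\{D_1 > D_0\}$ — outside the complier subgroup the equality $D = Z$ fails, so the entire argument must carry the conditioning on $\{D_1 > D_0\}$ throughout. Everything else is a routine application of the joint independence in Assumption 2(i) to the measurable function $\mathbf{1}\{D_1 > D_0\}$ of $(D_1,D_0)$.
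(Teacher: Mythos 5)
Your proof is correct and follows essentially the same route as the source the paper relies on: the paper does not prove this lemma itself but cites Abadie et al. (2002), whose argument is exactly yours --- on the complier event $\{D_{1}>D_{0}\}=\{D_{1}=1,D_{0}=0\}$ one has $D=Z$, and the joint independence of $\left(  Y_{1},Y_{0},D_{1},D_{0}\right)  $ and $Z$ given $X$ passes to the ratio defining the conditional law for compliers, so conditioning on $D$ within compliers is the same as conditioning on $Z$, which is innocuous. One small imprecision: positivity of the complier share $\Pr\left(  D_{1}>D_{0}|X\right)  $ requires Assumption 2(iii) \emph{together with} the monotonicity condition 2(iv) (the first stage alone is compatible with a population of defiers and no compliers), but this affects only the well-definedness of the conditioning, not the core of your argument.
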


Lemma 4 allows QTE to provide causal interpretations for
compliers. Let $Q_{q}\left(  Y|X,D,D_{1}>D_{0}\right)  $ denote the
$q$-quantile of $Y$ given $X$\ and $D$ for compliers. Then by Lemma
4,%
\[
Q_{q}\left(  Y|X,D=1,D_{1}>D_{0}\right)  -Q_{q}\left(  Y|X,D=0,D_{1}%
>D_{0}\right)
\]
represents the causal effect of smoking cessation on the $q$-quantile infant
birth weight for compliers. Now I estimate the quantile regression model based
on the following specification for the $q$-quantile of $Y$\ given $X$ and $D$
for compliers $:$ for $q\in\left(  0,1\right)  ,$
\begin{equation}
Q_{q}\left(  Y|X,D,D_{1}>D_{0}\right)  =\alpha_{q}+\beta_{q}\left(  X\right)
D+X^{\prime}\gamma_{q},\label{6.model}%
\end{equation}
where $\beta_{q}\left(  X\right)  =\beta_{1q}+X^{\prime}\beta_{2q}$,
$\beta_{q}=\left(
\begin{array}
[c]{c}%
\beta_{1q}\\
\beta_{2q}%
\end{array}
\right)  $, $\left(  \alpha_{q},\beta_{1q}\right)  \in\mathbb{R}%
\times\mathbb{R},$ $\beta_{2q}\in\mathbb{R}^{k}$ and $\gamma_{q}\in
\mathbb{R}^{k}$.

I\ use Abadie et al. (2002)'s estimation procedure. They proposed an
estimation method for moments involving $(Y,D,X)$ for compliers by using
weighted moments. See Section 3 of Abadie et al. (2002) for details about the
estimation procedure and asymptotic distribution of the estimator. Following
their estimation strategy, I\ estimate the equation (17).\footnote{I\ follow the same computation method as in Abadie et al. (2002).
They used Barrodale-Roberts (1973) linear programming algorithm for quantile
regression and a biweight kernel for the estimation of standard errors.} The
estimation results for the equation (17) are documented in Table
C.3 in Appendix C.

Smoking is estimated to have significantly negative effects on all quantiles
of birth weight. The estimated causal effect of smoking on the $q$-quantile of
infant birth weight is $-195$ grams at $q=0.15$, $-214$ grams at $q=0.25$, and
$-234$ grams at $q=0.50$.\ The effect significantly differs by women's race,
education, age, and the quality of prenatal care. This heterogeneity also
varies across quantile levels of birth weight. For the low quantiles $q=0.15$
and $0.25$, the adverse effect of smoking is estimated to be the largest for
births whose mothers are black and get inadequate prenatal care. In education,
the\ adverse smoking effect is much less severe for college graduates compared
to women with other education background.\ At $q=0.15$, as women's age
increases up to 35 years, the adverse effect of smoking becomes less severe,
but it increases with women's age for births to women who are older than 35
years old.

Controlling for the smoking status, compared to white women, black women bear
lighter babies for all quantiles and Hispanic women bear similar weight babies
at low quantiles $q=0.15,$ $0.25$ but lighter babies at higher $q>0.5. $ Also,
at low quantiles $q=0.15$ and $0.25$, as mothers' education level increases,
the birth weight noticeably increases except for post graduate women. Married
women are more likely to give births to heavier babies for low quantiles
$q=0.15,$ $0.25,$ $0.50,$ but lighter babies at high quantiles $q=0.75,0.85$.
One should be cautious about interpreting the results at high quantiles. At
high quantiles, heavier babies do not necessarily mean healthier babies
because high birth weight could be also problematic.\footnote{High birth
weight is defined as a birth weight of
$>$%
4000 grams or greater than 90 percentiles for gestational age. The causes of
HBW are gestational diabetes, maternal obesity, grand multiparity, etc. The
rates of birth injuries and infant mortality rates are higher among HBW
infants than normal birth weight infants.} The prenatal care seems to be
associated with birth weight very differently at both ends of quantiles (at
$q=0.15$ and at $q=0.85$). At $q=0.15$, women with better prenatal care tend
to have lighter babies, while at $q=0.85$ women with better prenatal care are
more likely to bear heavier infants. This suggests that women with higher
medical risk factors are more likely to have more intense prenatal care.%
\begin{table}[tbp]%

\begin{center}
\caption{Quantiles of potential outcomes and quantile treatment effects
(grams)}\label{quantile}%
\begin{tabular}
[c]{l|cccccc}\hline\hline
{\small \ (grams)} &  & ${\small Q}_{0.15}$ & ${\small Q}_{0.25}$ &
${\small Q}_{0.5}$ & ${\small Q}_{0.75}$ & ${\small Q}_{0.85}$\\\hline
{\small Entire Sample} & {\small QTE} & {\small 195} & {\small 214} &
{\small 234} & {\small 259} & {\small 292}\\
& ${\small Q}\left(  Y_{0}\right)  $ & {\small 2760} & {\small 2927} &
{\small 3220} & {\small 3515} & {\small 3675}\\
& ${\small Q}\left(  Y_{1}\right)  $ & {\small 2955} & {\small 3141} &
{\small 3454} & {\small 3774} & {\small 3967}\\\hline
{\small White} & {\small QTE} & {\small 204} & {\small 212} & {\small 212} &
{\small 227} & {\small 255}\\
& ${\small Q}\left(  Y_{0}\right)  $ & {\small 2815} & {\small 2974} &
{\small 3300} & {\small 3589} & {\small 3731}\\
& ${\small Q}\left(  Y_{1}\right)  $ & {\small 3019} & {\small 3186} &
{\small 3512} & {\small 3816} & {\small 3986}\\\hline
{\small SCCG} & {\small QTE} & {\small 109} & {\small 165} & {\small 187} &
{\small 244} & {\small 194}\\
& ${\small Q}\left(  Y_{0}\right)  $ & {\small 2908} & {\small 3031} &
{\small 3316} & {\small 3566} & {\small 3798}\\
& ${\small Q}\left(  Y_{1}\right)  $ & {\small 3017} & {\small 3196} &
{\small 3503} & {\small 3810} & {\small 3992}\\\hline
{\small Age 26-35} & {\small QTE} & {\small 233} & {\small 180} & {\small 179}
& {\small 262} & {\small 283}\\
& ${\small Q}\left(  Y_{0}\right)  $ & {\small 2781} & {\small 3008} &
{\small 3331} & {\small 3557} & {\small 3720}\\
& ${\small Q}\left(  Y_{1}\right)  $ & {\small 3014} & {\small 3188} &
{\small 3510} & {\small 3818} & {\small 4003}\\\hline
\end{tabular}

\end{center}%

\end{table}%

To estimate marginal distributions of $Y_{0}$ and $Y_{1,}$ I\ first estimate
the model (17) for a fine grid of $q\ $with $999$ points from
$0.001$ to $0.999$ and obtain quantile curves of $Y_{0}$ and $Y_{1}$ on the
fine grid. Note that fitted quantile curves are non-monotonic as shown in
Figure 16(a). I sort the estimated values of the quantile
curves in an increasing order as proposed by Chernozhukov et al. (2009). They
showed that this procedure improves the estimates of quantile functions and
distribution functions in finite samples. Figure 16(b) shows
the monotonized quantile curves for $Y_{0}$ and $Y_{1}$, respectively. The
marginal distribution functions of $Y_{0}$ and $Y_{1}$ are obtained by
inverting the monotonized quantile curves.%

\begin{figure}
[ptb]
\begin{center}
\includegraphics[
height=2.8305in,
width=6.8623in
]%
{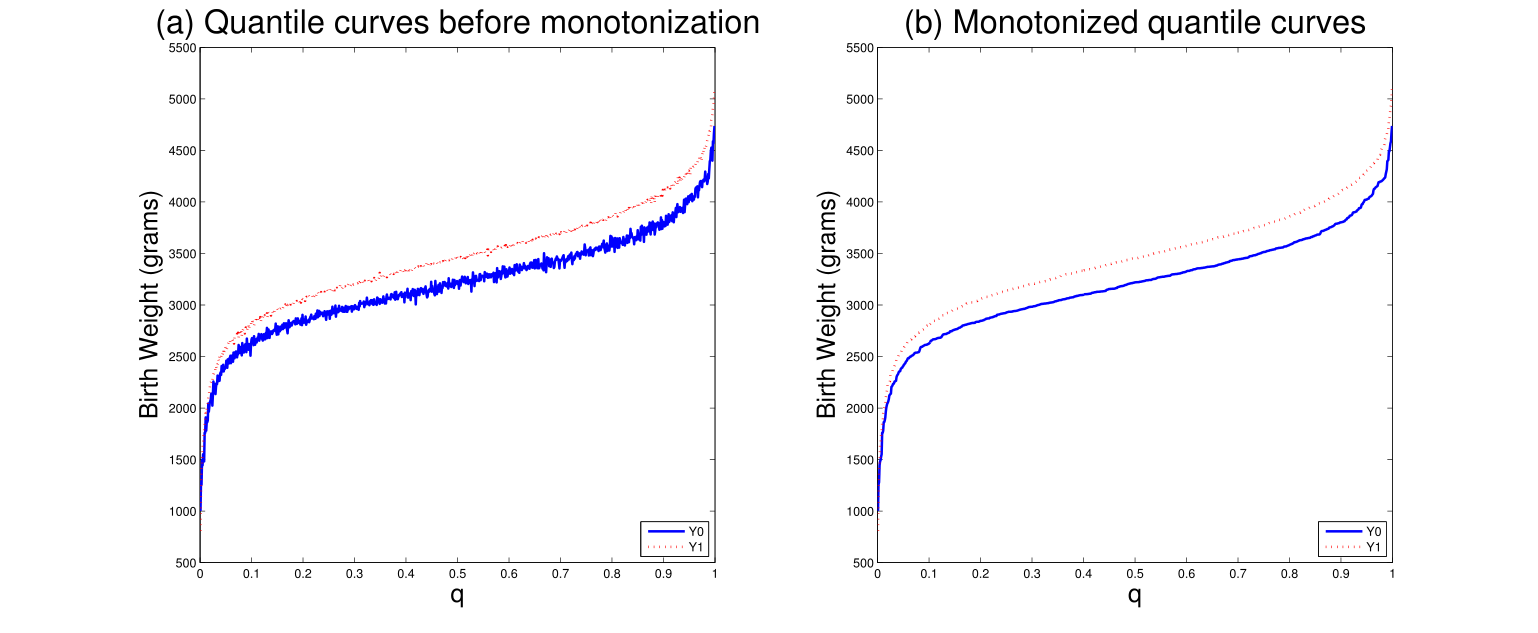}%
\caption{Estimated quantile curves}%
\label{quantilecurves}%
\end{center}
\end{figure}

Table 5 presents estimates of quantiles for potential outcomes
and QTE.\ One noticeable observation is that\ for SCCG women, low quantiles
($q<0.5$) of birth weight from smokers are remarkably higher compared to those
for the entire sample or other subgroups, while their nonsmokers' birth weight
quantiles are similar to those in other groups.\ This leads to the lower
quantile smoking effects for this college education group compared to other
groups at low quantiles.\ 

I\ also obtain the proportion of potential low birth weight infants to smokers
and nonsmokers, $F_{0}\left(  2,500\right)  $ and $F_{1}\left(  2,500\right)
$, respectively. As shown in Table 6, 6.5\% of babies to smokers
would have low birth weight, while 4\% babies to nonsmokers would have low
birth weight. Similar results are obtained for white women and women aged
26-35. A surprising result is obtained for SCCG women. Only 3.5\% \ of babies
to SCCG women who smoke would have low birth weight. This implies that SCCG
women who smoke are less likely to have low birth weight infants than women
with\ less education who smoke. One possible explanation for this is that
women with higher education are more likely to have healthier lifestyles and
this substantially lowers the risk of having low infant birth weight for smoking.%

\begin{table}[tbp] \centering

\begin{center}
\caption{The proportion of potential low birth weight infants}\label{rateofLBW}%
\begin{tabular}
[c]{l|cc}\hline\hline
(\%) & $F_{0}\left(  2,500\right)  $ & $F_{1}\left(  2,500\right)  $\\\hline
{\small Entire Sample} & 6.5 & 4\\
{\small White} & 7 & 3\\
{\small SCCG} & 3.5 & 2.9\\
{\small Age 26-35} & 5.7 & 3.2\\\hline
\end{tabular}

\end{center}%

\end{table}%
%

\begin{table}[tbp] \centering

\begin{center}
\caption{The proportion of potential low birth weight infants}\label{rateofLBW}%
\begin{tabular}
[c]{l|cc}\hline\hline
(\%) & $F_{0}\left(  2,500\right)  $ & $F_{1}\left(  2,500\right)  $\\\hline
{\small Entire Sample} & 6.5 & 4\\
{\small White} & 7 & 3\\
{\small SCCG} & 3.5 & 2.9\\
{\small Age 26-35} & 5.7 & 3.2\\\hline
\end{tabular}

\end{center}%

\end{table}%

\subsubsection{\textbf{Bounds on the Distribution and Quantiles of Treatment
Effects for Compliers}}

Recall the sharp lower bound under MTR: for $\delta\geq0,$
\begin{equation}
F_{\Delta}^{L}\left(  \delta\right)  =\underset{\left\{  a_{k}\right\}
_{k=-\infty}^{\infty}}{\sup}\sum\limits_{k=-\infty}^{\infty}\max\left\{
F_{1}\left(  a_{k+1}\right)  -F_{0}\left(  a_{k}\right)  ,0\right\}
,\label{lower}%
\end{equation}
where $0\leq a_{k+1}-a_{k}\leq\delta$ for each integer $k$. To compute the new
sharp lower bound from the estimated marginal distribution functions, I\ plug
in the estimates of marginal distribution functions $\widehat{F}_{0}$ and
$\ \widehat{F}_{1}$ proposed in the previous subsection. I\ follow the same
computation procedure as in the numerical example of Section 4. I\ discuss the
procedure in Appendix B in detail.

I\ propose the following plug-in estimators of my new lower bound and Makarov
bounds based on the estimators of marginal distributions $\widehat{F}_{0}$ and
$\ \widehat{F}_{1}$ proposed in the previous subsection.$\footnote{Fan and
Park (2010a, 2010b) proposed the same type plug-in estimators for Makarov
bounds and studied their asymptotic properties. They used empirical
distributions to estimate marginal distributions point-identified in
randomized experiments.}$ Note that the infinite sum in the lower bound under
MTR in Corollary 1 reduces to the finite sum for the bounded support.
For any fixed $\delta>0,$ the consistency of my estimators is immediate.

In Figure 17, I\ plot my new lower bound and Makarov bounds for the entire
sample. One can see substantial identification gains from the distance between
my new lower bound and the Makarov lower bound. The most remarkable
improvement arises around $q=0.5$ and the refinement gets smaller as $q$
approaches $0$ and $1,$ in turn as $\delta$ approaches $0$ and $2000$. This
can be intuitively understood through Figure 7(c). As $\delta$ gets closer to $2000,$ the number of triangles, which is one
source of identification gains, decreases to one in the bounded support of
each potential outcome. This causes the new lower bound to converge to the
Makarov lower bound as $\delta$ approaches $2000$. Also, as $\delta$ converges
to $0,$ the identification gain generated by each triangle,\ which is written
as max$\left\{  F_{1}(y)-F_{0}(y-\delta),0\right\}  ,$ converges to $0$ under
MTR, which implies $F_{1}(y)\leq F_{0}(y)$ for each $y\in\mathbb{R}.$%

\begin{center}
\includegraphics[
natheight=12.972200in,
natwidth=23.333500in,
height=2.6221in,
width=4.6942in
]%
{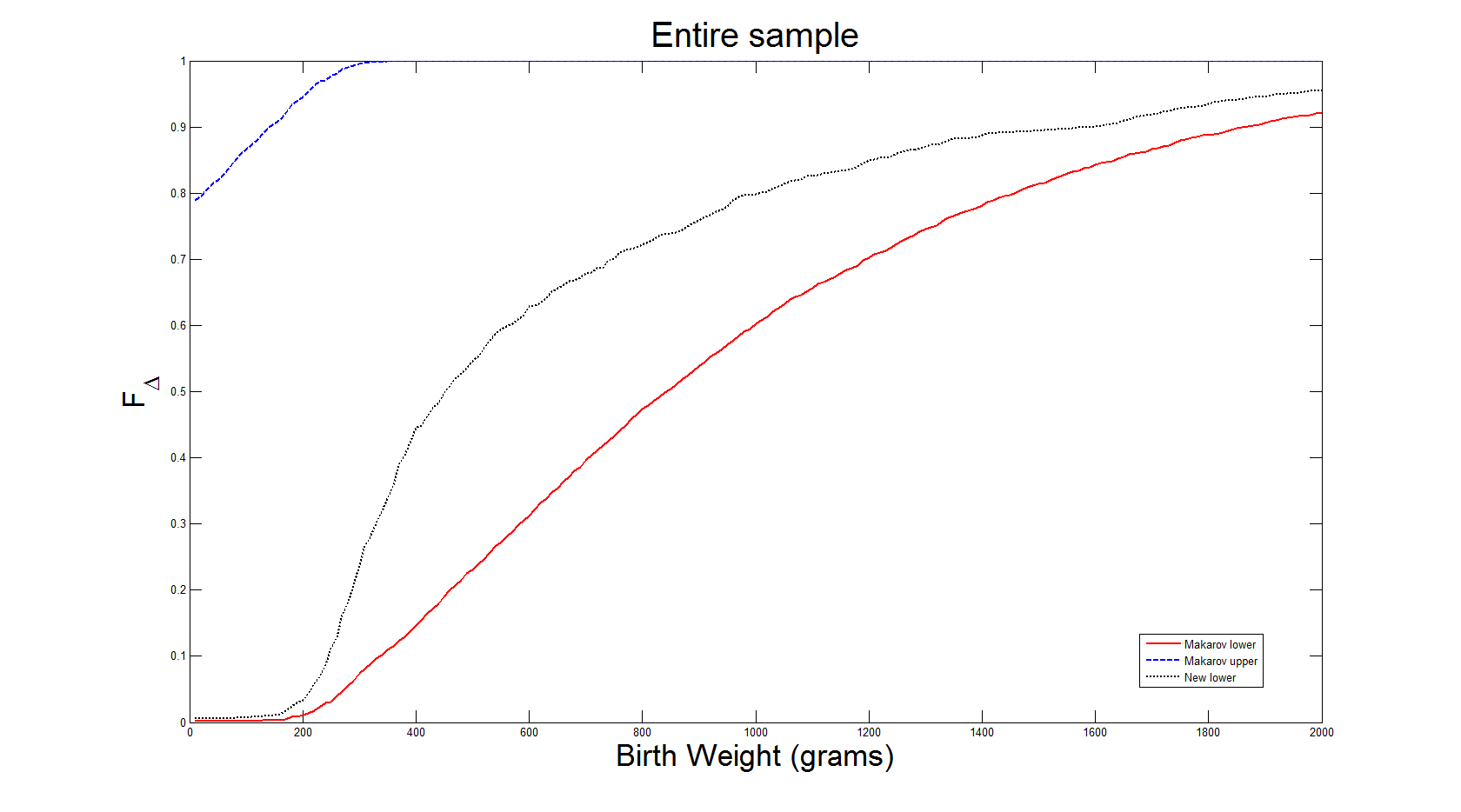}%
\end{center}

\begin{center}
Figure 17: Bounds on the effect of smoking on birth weight for the entire sample
\end{center}

The quantiles of smoking effects can be obtained by inverting these DTE
bounds. Specifically, the upper and lower bounds on the quantile of treatment
effects are obtained by inverting the lower bound and upper bound on the DTE,
respectively. Note that quantiles of smoking effects show $q$-quantiles of the
difference $(Y_{1}-Y_{0})$, while QTE gives the difference between the
$q$-quantiles of $Y_{1}$ and those of $Y_{0}$. These two parameters typically
have different values. Fan and Park (2009) pointed out that QTE\ is identical
to the quantile of treatment effects under strong
conditions.\footnote{Specifically, QTE = the quantile of treatment effects
when (i) two potential outcomes are perfectly positively dependent
$Y_{1}=F_{1}^{-1}\left(  F_{0}\left(  Y_{0}\right)  \right)  $ AND (ii)
$F_{1}^{-1}\left(  q\right)  -F_{0}^{-1}\left(  q\right)  $ is nondecreasing
in $q$.} The bounds on the quantile of treatment effects are reported in Table
7 with comparison to QTE, already reported in Table 5. In the
entire sample, my new bounds on the quantiles of the treatment effect show
$33$\% - $45$\% refinement\ for $q=0.15$, $0.25$, $0.5$, $0.75$ compared to
Makarov bounds. For the entire sample, my new bounds yield $[0,457]$ grams for
the median of the benefit of smoking cessation on infant birth weight, while
Makarov bounds yield $[0,843]$ grams. Compared to Makarov bounds, my new
bounds are more informative and show that $(457,843]$ should be excluded from
the identification region for the median of the effect.

It is worth noting that my new bounds on the quantile of the effects of
smoking are much tighter for SCCG women, compared to the entire sample and
other subsamples. For $q\leq0.5$, the refinement rate ranges from 51\% to 64\%
compared to Makarov bounds. For SCCG women, my new sharp bounds on the median
are $[0,299]$ grams, while Makarov bounds on the median are $[0,764]$ grams.
The higher identification gains result from relatively heavier potential
nonsmokers' infant birth weight, which leads to the shorter distance between
two potential outcomes distributions as reported in Table 5. Note
that the shorter distance between marginal distributions of potential outcomes
improves both my new lower bound and the Makarov lower bound.\footnote{To
develop intuition, recall Figure 7(c). The size
of the lower bound on each triangle's probability is related to the distance
between marginal distribution functions of $Y_{0}$ and $Y_{1}$. To see this,
consider two marginal distribution functions $F_{1}^{A}$ and $F_{1}^{B}$ of
$Y_{1}$ with $F_{1}^{A}\left(  y\right)  \leq F_{1}^{B}\left(  y\right)  $ for
all $y\in\mathbb{R}$\ and fix the marginal distribution $F_{0}$ of $Y_{0}$
where $\left(  Y_{0},Y_{1}\right)  $ satisfies MTR. Since MTR implies
stochastic dominance of $Y_{1}$ over $Y_{0}$ for each $y\in\mathbb{R}%
,F_{1}^{A}\left(  y\right)  <F_{1}^{B}\left(  y\right)  \leq F_{0}\left(
y\right)  .$Thus,%
\[
\max\left\{  F_{1}^{A}\left(  y\right)  -F_{0}\left(  y-\delta\right)
,0\right\}  <\max\left\{  F_{1}^{B}\left(  y\right)  -F_{0}\left(
y-\delta\right)  ,0\right\}  .
\]
\ Since the probability lower bound on the triangle is written as
$\max\left\{  F_{1}\left(  y\right)  -F_{0}\left(  y-\delta\right)  \right\}
$ for some $y\in\mathbb{R},$ the above inequality shows that the closer
marginal distributions $F_{0}$ and $F_{1}$ generates higher probability lower
bound on each triangle.}

\begin{center}
Table 7: QTE and bounds on the quantiles of smoking effects%

\begin{tabular}
[c]{l|llllll}\hline\hline
{\small Dep. var.= Birth weight (grams) } &  & ${\small Q}_{0.15}$ &
${\small Q}_{0.25}$ & ${\small Q}_{0.5}$ & ${\small Q}_{0.75}$ &
${\small Q}_{0.85}$\\\hline
{\small Entire Sample} & {\small QTE} & \multicolumn{1}{c}{{\small 195}} &
\multicolumn{1}{c}{{\small 214}} & \multicolumn{1}{c}{{\small 234}} &
\multicolumn{1}{c}{{\small 259}} & \multicolumn{1}{c}{{\small 292}}\\
& {\small Makarov} & \multicolumn{1}{c}{{\small [0,405]}} &
\multicolumn{1}{c}{{\small [0,524]}} & \multicolumn{1}{c}{{\small [0,843]}} &
\multicolumn{1}{c}{{\small [0,1317]}} & \multicolumn{1}{c}{{\small [80,1634]}%
}\\
& {\small New} & \multicolumn{1}{c}{{\small [0,265]}} &
\multicolumn{1}{c}{{\small [0,304]}} & \multicolumn{1}{c}{{\small [0,457]}} &
\multicolumn{1}{c}{{\small [0,882]}} & \multicolumn{1}{c}{{\small [80,1204]}%
}\\\hline
{\small White} & {\small QTE} & \multicolumn{1}{c}{{\small 204}} &
\multicolumn{1}{c}{{\small 212}} & \multicolumn{1}{c}{{\small 212}} &
\multicolumn{1}{c}{{\small 227}} & \multicolumn{1}{c}{{\small 255}}\\
& {\small Makarov} & \multicolumn{1}{c}{{\small [0,383]}} &
\multicolumn{1}{c}{{\small [0,505]}} & \multicolumn{1}{c}{{\small [0,833]}} &
\multicolumn{1}{c}{{\small [0,1274]}} & \multicolumn{1}{c}{{\small [65,1588]}%
}\\
& {\small New} & \multicolumn{1}{c}{{\small [0,265]}} &
\multicolumn{1}{c}{{\small [0,308]}} & \multicolumn{1}{c}{{\small [0,450]}} &
\multicolumn{1}{c}{{\small [0,891]}} & \multicolumn{1}{c}{{\small [65,1239]}%
}\\\hline
{\small SCCG} & {\small QTE} & \multicolumn{1}{c}{{\small 109}} &
\multicolumn{1}{c}{{\small 165}} & \multicolumn{1}{c}{{\small 187}} &
\multicolumn{1}{c}{{\small 244}} & \multicolumn{1}{c}{{\small 194}}\\
& {\small Makarov} & \multicolumn{1}{c}{{\small [0,311]}} &
\multicolumn{1}{c}{{\small [0,428]}} & \multicolumn{1}{c}{{\small [0,764]}} &
\multicolumn{1}{c}{{\small [0,1183]}} & \multicolumn{1}{c}{{\small [69,1453]}%
}\\
& {\small New} & \multicolumn{1}{c}{{\small [0,114]}} &
\multicolumn{1}{c}{{\small [0,193]}} & \multicolumn{1}{c}{{\small [0,299]}} &
\multicolumn{1}{c}{{\small [0,579]}} & \multicolumn{1}{c}{{\small [69,792]}%
}\\\hline
{\small Age 26-35} & {\small QTE} & \multicolumn{1}{c}{{\small 233}} &
\multicolumn{1}{c}{{\small 180}} & \multicolumn{1}{c}{{\small 179}} &
\multicolumn{1}{c}{{\small 262}} & \multicolumn{1}{c}{{\small 283}}\\
& {\small Makarov} & \multicolumn{1}{c}{{\small [0,336]}} &
\multicolumn{1}{c}{{\small [0,458]}} & \multicolumn{1}{c}{{\small [0,807]}} &
\multicolumn{1}{c}{{\small [0,1324]}} & \multicolumn{1}{c}{{\small [79,1621]}%
}\\
& {\small New} & \multicolumn{1}{c}{{\small [0,239]}} &
\multicolumn{1}{c}{{\small [0,276]}} & \multicolumn{1}{c}{{\small [0,406]}} &
\multicolumn{1}{c}{{\small [0,746]}} & \multicolumn{1}{c}{{\small [79,1204]}%
}\\\hline
\end{tabular}

\end{center}

Although QTE is placed within the identification region for $q=0.15$ to $0.85
$ and for all groups, at $q=0.15$, QTE is very close to the upper bound on the
quantile of smoking effects for SCCG and age 26-35 subgroups. Furthermore, at
$q=0.10$, QTE is placed outside of the improved identification region for SCCG
group and age 26-35. This implies that QTE is not identical to the quantile of
treatment effects in my example and so one should not interpret the value of
QTE as a quantile of smoking effects.

Despite the large improvement of my bounds over Makarov bounds, the difference
in the quantiles of the smoking effects between SCCG women and others is still
inconclusive from my bounds. The sharp upper bound on the quantile of the
effect for the SCCG group is quite lower than that for the entire sample while
the sharp lower bound is $0$ for both groups; the identification region for
the SCCG group is contained in that for the entire sample. Since the two
identification regions overlap, one cannot conclude that the effect at each
quantile level $q$ is smaller for the SCCG group. This can be further
investigated by developing formal test procedures for the partially identified
quantile of treatment effects or by establishing tighter bounds under
additional plausible restrictions. I\ leave these issues for future research.

My empirical analysis shows that smoking is on average more dangerous for
infants to women with a higher tendency to smoke. Also, women with SCCG are
less likely to have low birth weight babies when they smoke. The estimated
bounds on the median of the effect of smoking on infant birth weight are
$[-457$,$0]$ grams and $[-299,0]$ grams for the entire sample and for women
with SCCG, respectively.

Based on my observations, I\ suggest that policy makers pay particular
attention to smoking women with low education in their antismoking policy
design, since these women's infants are more likely to have low weight.
Considering the association between higher education and better personal
health care as shown in Park and Kang (2008), it appears that smoking on
average does less harm to infants to mothers with a healthier lifestyle. Based
on this interpretation, healthy lifestyle campaigns need to be combined with
antismoking campaigns to reduce the negative effect of smoking on infant birth weight.

\subsection{Testability and Inference on the Bounds}

\subsubsection{Testability of MTR}

My empirical analysis relies on the assumption that smoking of pregnant women
has nonpositive effects on infant birth weight with probability one. This MTR
assumption is not only plausible but also testable in my setup. While a formal
econometric test procedure is beyond the scope of this paper, I\ briefly
discuss testable implications. First, MTR implies stochastic dominance of
$Y_{1}$ over $Y_{0}$. Since I\ point-identify their marginal distributions for
compliers, stochastic dominance can be checked from the estimated marginal
distribution functions. Except for very low $q$-quantiles with $q<0.006$ where
the quantile curves estimates are imprecise as noted in subsection 5.4, my
estimated marginal distribution functions satisfy the stochastic dominance for
the entire sample and all subgroups. Second, under MTR my new lower bound
should be lower than the Makarov upper bound. If MTR is not satisfied, then my
new lower bound is not necessarily lower than the Makarov upper bound. In my
estimation result, my new lower bound is lower than the Makarov upper bound
for all $\delta>0$ and in all subgroups.

\subsubsection{Inference and Bias Correction}

Asymptotic properties of my estimators other than consistency have not been
covered in this paper. The complete asymptotic theory for the estimators can
be investigated by adopting arguments from Abadie et al. (2002), Koenker and
Xiao (2002), Angrist et al. (2005), and Fan and Park (2010).\ Abadie et al.
(2002) provided asymptotic properties for their weighted quantile regression
coefficients for the fixed quantile level $q,$ while Koenker and Xiao (2002)
and Angrist et al. (2005) focused on the standard quantile regression
$\emph{process}.$ Fan and Park (2010) derived asymptotic properties for the
plug-in estimators of Makarov bounds. Since they estimated marginal
distribution functions using empirical distributions in the context of
randomized experiments, their arguments follow standard empirical process
theory. To investigate asymptotic properties of the bounds estimators and the
estimated maximizer or minimizer for the bounds, I\ am currently
extending\ the asymptotic analysis on the quantile regression process
presented by Koenker and Xiao (2002) and Angrist et al. (2005) to the quantile
curves which are obtained from the weighted quantile regression of Abadie et
al. (2002).

Canonical bootstrap procedures may be invalid for inference in this setting.
Fan and Park (2010) found that asymptotic distributions of their plug-in
estimators for Makarov bounds discontinuously change around the boundary where
the true lower and upper Makarov bounds reach zero and one, respectively.
Specifically, they estimated the Makarov lower bound $\underset{y}{\sup}%
\max\left\{  F_{1}\left(  y\right)  -F_{0}\left(  y-\delta\right)  ,0\right\}
$ using empirical distribution functions $\widehat{F}_{0}$ and $\widehat
{F}_{1}$. They found that the asymptotic distribution of their estimator of
the Makarov lower bound is discontinuous on the boundary where $\sup
_{y}\left\{  F_{1}\left(  y\right)  -F_{0}\left(  y-\delta\right)  \right\}
=0.$ Since my improved lower bound under MTR is written as the supremum of the
sum of $\max\left\{  F_{1}\left(  a_{k}\right)  -F_{0}\left(  a_{k-1}\right)
,0\right\}  $ over integers $k$, the asymptotic distribution of my plug-in
estimator is likely to suffer discontinuities near multiple boundaries where
$F_{1}\left(  a_{k}\right)  -F_{0}\left(  a_{k-1}\right)  =0$ for each integer
$k$. To avoid the failure of the standard bootstrap, I\ recommend subsampling
or the fewer than $n$ bootstrap procedure following Politis et al. (1999),
Andrews (2000), Andrews and Han (2009).

Although the estimator $\widehat{F}_{\Delta}^{NL}$ is consistent, it may have a nonnegligible bias in small
samples.\footnote{Since $\max\left(  x,0\right)  $ is a convex function, by
Jensen's inequality my plug-in estimator is upward biased. This has been also
pointed out in Fan and Park (2009) for their estimator of Makarov bounds.}
I\ suggest that one use a bias-adjusted estimator based on subsampling when
the sample size is small in practice. Let
\[
\widehat{F}_{\Delta,n,b,j}^{NL}\left(  \delta\right)  =\underset{0\leq
y\leq\delta}{\sup}\sum_{k=\left\lfloor \frac{500-y}{\delta}\right\rfloor
}^{\left\lfloor \frac{5500-y}{\delta}\right\rfloor +1}\max\left(  \widehat
{F}_{1}^{n,b,j}\left(  y+k\delta\right)  -\widehat{F}_{0}^{n,b,j}\left(
y+\left(  k-1\right)  \delta\right)  ,0\right)  ,
\]
where for $d=0,1,$ $\widehat{F}_{d}^{n,b,j}$ is an estimator of $F_{d}$ from
the $j$th subsample $\{\left(  Y_{j_{1}},D_{j_{1}}\right)  ,...,\left(
Y_{j_{b}},D_{j_{b}}\right)  \}$ with the subsample size $b$ out of $n$
observations s.t. $j_{1}\neq$ $j_{2}\neq\ldots\neq j_{b},$ $b<n$ and
$j=1,...,\left(
\begin{array}
[c]{c}%
n\\
b
\end{array}
\right)  $. Then the subsampling bias-adjusted estimator $\widetilde
{F}_{\Delta}^{NL}\left(  \delta\right)  $ is
\begin{align*}
\widetilde{F}_{\Delta}^{NL}\left(  \delta\right)   & =\widehat{F}_{\Delta
}^{NL}\left(  \delta\right)  -\frac{1}{q_{n}}\sum\limits_{j=1}^{q_{n}}\left\{
\widehat{F}_{\Delta,n,b,j}^{NL}\left(  \delta\right)  -\widehat{F}_{\Delta
}^{NL}\left(  \delta\right)  \right\} \\
& =2\widehat{F}_{\Delta}^{NL}\left(  \delta\right)  -\frac{1}{q_{n}}%
\sum\limits_{j=1}^{q_{n}}\widehat{F}_{\Delta,n,b,j}^{NL}\left(  \delta\right)
,
\end{align*}
where $q_{n}=\left(
\begin{array}
[c]{c}%
n\\
b
\end{array}
\right)  .$

\section{Conclusion}

In this paper, I have proposed a novel approach to identifying the DTE under
general support restrictions on the potential outcomes. My approach involves
formulating the problem as an optimal transportation linear program and
embedding support restrictions into the cost function with an infinite
Lagrange multiplier by taking advantage of their linearity in the entire joint
distribution. I have developed the dual formulation for $\{0,1,\infty
\}$-valued costs to overcome the technical challenges associated with
optimization over the space of joint distributions. This contrasts sharply
with the existing copula approach, which requires one to find out the joint
distributions achieving sharp bounds given restrictions.

I have characterized the identification region under general support
restrictions and derived sharp bounds on the DTE for economic examples. My
identification result has been applied to the empirical analysis of the
distribution of smoking effects on infant birth weight. I have proposed an
estimation procedure for the bounds. The empirical results have shown that MTR
has a substantial power to identify the distribution of smoking effects when
the marginal distributions of the potential outcomes are given.

In some cases, information concerning the relationship between potential
outcomes cannot be represented by support restrictions. Moreover, it is also
sometimes the case that the joint distribution function itself is of interest.
In a companion paper, I propose a method to identify the DTE and the joint
distribution when weak stochastic dependence restrictions among unobservables
are imposed in triangular systems, which consist of an outcome equation and a
selection equation.

\bibliographystyle{chicago}
\bibliography{acompat,camel}

\setcounter{figure}{0} \renewcommand{\thefigure}{A.\arabic{figure}}
\setcounter{table}{0} \renewcommand{\thetable}{A.\arabic{table}}
\setcounter{equation}{0} \renewcommand{\theequation}{A.\arabic{equation}}
\appendix

\section*{Appendix A}

In Appendix A. I\ provide technical proofs for Theorem 1,
Corollary 1 and Corollary 2. Throughout Appendix A, the
function $\varphi$ is assumed to be bounded and continuous without loss of
generality by Lemma 2.

\subsection*{Proof of Theorem 1}

Since the proofs of characterization of $F_{\Delta}^{U}$ and $F_{\Delta}^{L}$
are very similar, I\ provide a proof for characterization of $F_{\Delta}^{L}$
only. Let
\begin{align*}
I\left[  \pi\right]   & =\int\left\{  \boldsymbol{1}\left\{  y_{1}%
-y_{0}<\delta\right\}  +\lambda\left(  1-\boldsymbol{1}_{C}\left(  y_{0}%
,y_{1}\right)  \right)  \right\}  d\pi,\\
J\left(  \varphi,\psi\right)   & =\int\varphi d\mu_{0}+\int\psi d\mu_{1},
\end{align*}
for $\lambda=\infty.$ To prove Theorem 1, I\ introduce Lemma A.1:

\subparagraph{Lemma A.1}

For any function $f:\mathbb{R}\rightarrow\mathbb{R},$ $s\in\lbrack0,1],$ and
nonnegative integer $k,$ define $A_{k}^{+}$ and $A_{k}^{-}$ to be level sets
of a function $f$ as follows:
\begin{align*}
A_{k}^{+}\left(  f,s\right)   & =\left\{  y\in\mathbb{R};f(y)>s+k\right\}  ,\\
A_{k}^{-}\left(  f,s\right)   & =\left\{  y\in\mathbb{R};f(y)\leq-\left(
s+k\right)  \right\}  .
\end{align*}
Then for the following dual problems%
\[
\underset{\pi\in\Pi\left(  \mu_{0},\mu_{1}\right)  }{\inf}I\left[  \pi\right]
=\underset{\left(  \varphi,\psi\right)  \in\Phi_{c}}{\sup}J\left(
\varphi,\psi\right)  ,
\]
each $\left(  \varphi,\psi\right)  \in\Phi_{c}$ can be represented as a
continuous convex combination of a continuum of pairs of the form
\[
\left(  \sum\limits_{k=0}^{\infty}\boldsymbol{1}_{A_{k}^{+}\left(
\varphi,s\right)  }-\sum\limits_{k=0}^{\infty}\boldsymbol{1}_{A_{k}^{-}\left(
\varphi,s\right)  },\sum\limits_{k=0}^{\infty}\boldsymbol{1}_{A_{k}^{+}\left(
\psi,s\right)  }-\sum\limits_{k=0}^{\infty}\boldsymbol{1}_{A_{k}^{-}\left(
\psi,s\right)  }\right)  \in\Phi_{c}%
\]

\subparagraph{Proof of Lemma A.1}

By Lemma 2,%
\[
\underset{\pi\in\Pi\left(  \mu_{0},\mu_{1}\right)  }{\inf}I\left[  \pi\right]
=\underset{\left(  \varphi,\psi\right)  \in\Phi_{c}}{\sup}J\left(
\varphi,\psi\right)  ,
\]
where $\Phi_{c}$ is the set of all pairs $\left(  \varphi,\psi\right)  $ in
$L^{1}\left(  dF_{0}\right)  $ $\times L^{1}\left(  dF_{1}\right)  $ such that%
\begin{equation}
\varphi\left(  y_{0}\right)  +\psi\left(  y_{1}\right)  \leq\boldsymbol{1}%
\left\{  y_{1}-y_{0}<\delta\right\}  +\lambda\left(  1-\boldsymbol{1}%
_{C}\left(  y_{0},y_{1}\right)  \right)  \text{ for all }\left(  y_{0}%
,y_{1}\right)  .\label{[A.3]}%
\end{equation}
Note that $\Phi_{c}$\ is a convex set.\ From the definition of $A_{k}%
^{+}\left(  f,s\right)  $ and $A_{k}^{-}\left(  f,s\right)  ,$ for any
function $f:\mathbb{R}\rightarrow\mathbb{R}$ and $s\in(0,1],$%
\begin{equation}
\ldots\subseteq A_{1}^{+}\left(  f,s\right)  \subseteq A_{0}^{+}\left(
f,s\right)  \subseteq\left(  A_{0}^{-}\left(  f,s\right)  \right)
^{c}\subseteq\left(  A_{1}^{-}\left(  f,s\right)  \right)  ^{c}\subseteq
\ldots.,\label{[A.4.5]}%
\end{equation}
as illustrated in Figure A.1.%
\begin{align*}
&
{\includegraphics[
natheight=2.458700in,
natwidth=4.125200in,
height=2.4993in,
width=4.1753in
]%
{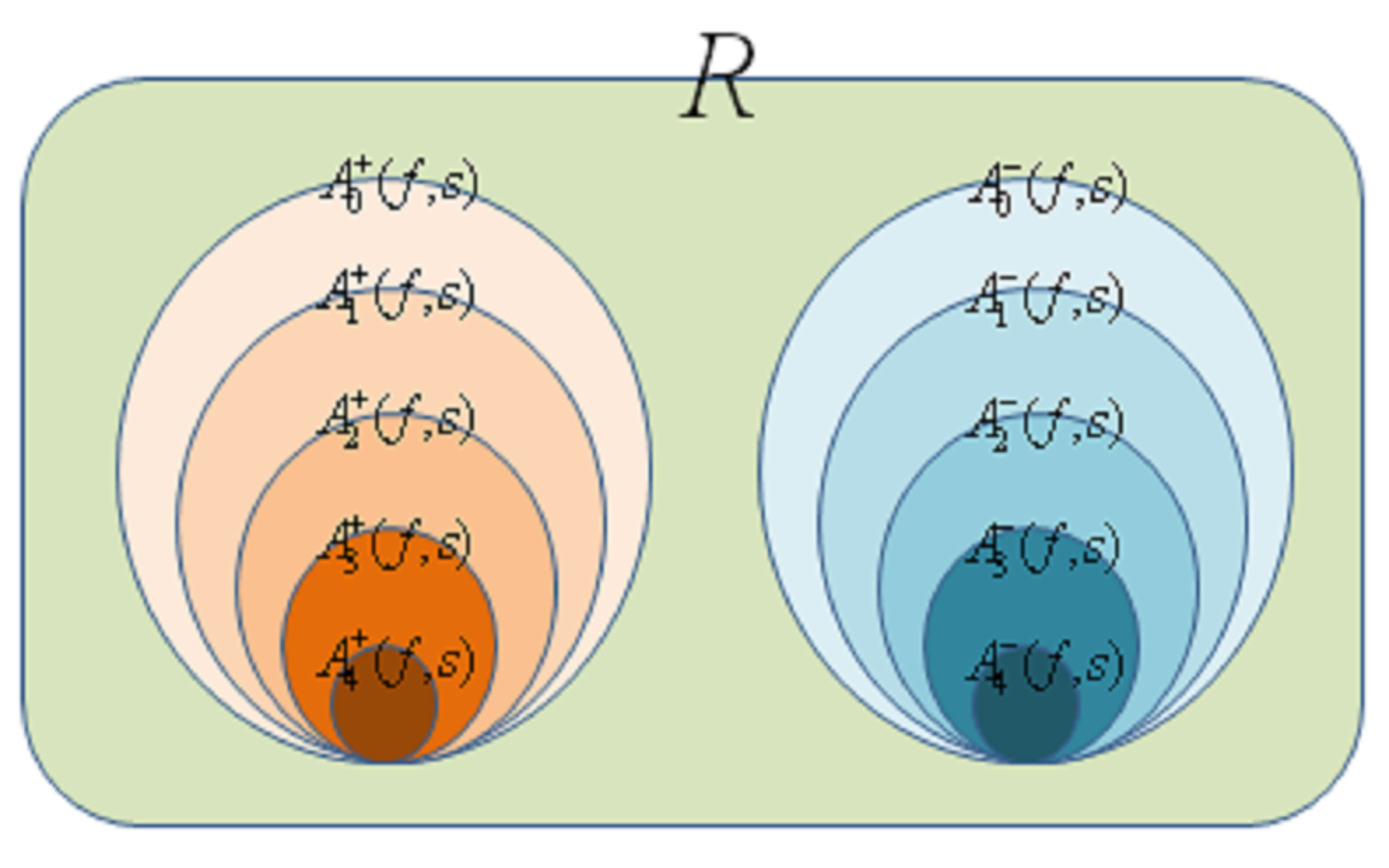}%
}
\\
& \text{Figure A.1: Monotonicity of }\left\{  A_{k}^{+}\left(  f,s\right)
\right\}  _{k=0}^{\infty}\text{ and }\left\{  A_{k}^{-}\left(  f,s\right)
\right\}  _{k=0}^{\infty}%
\end{align*}
\qquad Let%
\begin{align*}
\varphi_{+}\left(  x\right)   & =\max\left\{  \varphi(x),0\right\}  \geq0,\\
\varphi_{-}\left(  x\right)   & =\min\left\{  \varphi(x),0\right\}  \leq0.
\end{align*}
By the layer cake representation theorem, $\varphi_{+}\left(  x\right)  $ can
be written as%
\begin{align}
\varphi_{+}\left(  x\right)   & =\int_{0}^{\varphi_{+}\left(  x\right)
}ds\label{cake}\\
& =\int_{0}^{\infty}\boldsymbol{1}\left\{  \varphi_{+}\left(  x\right)
>s\right\}  ds\nonumber\\
& =\sum_{k=0}^{\infty}\int_{0}^{1}\boldsymbol{1}\left\{  \varphi_{+}\left(
x\right)  >s+k\right\}  ds\nonumber\\
& =\int_{0}^{1}\sum_{k=0}^{\infty}\boldsymbol{1}\left\{  \varphi_{+}\left(
x\right)  >s+k\right\}  ds\nonumber\\
& =\int_{0}^{1}\sum_{k=0}^{\infty}\boldsymbol{1}\left\{  \varphi\left(
x\right)  >s+k\right\}  ds\nonumber\\
& =\int_{0}^{1}\sum_{k=0}^{\infty}\boldsymbol{1}_{A_{k}^{+}\left(
\varphi,s\right)  }\left(  x\right)  ds,\nonumber
\end{align}
where $A_{k}^{+}\left(  f,s\right)  =\left\{  y\in\mathbb{R};f(y)>s+k\right\}
$ for any function $f.$ The fourth equality in (A.3) follows from
Fubini's theorem. Similarly, the nonpositive function $\varphi_{-}\left(
x\right)  $ can be represented as%
\begin{align*}
\varphi_{-}\left(  x\right)   & =-\int_{0}^{\infty}\boldsymbol{1}\left\{
\varphi_{-}\left(  x\right)  \leq-s\right\}  ds\\
& =-\sum_{k=0}^{\infty}\int_{0}^{1}\boldsymbol{1}\left\{  \varphi_{-}\left(
x\right)  \leq-\left(  s+k\right)  \right\}  ds\\
& =-\int_{0}^{1}\sum_{k=0}^{\infty}\boldsymbol{1}\left\{  \varphi_{-}\left(
x\right)  \leq-\left(  s+k\right)  \right\}  ds\\
& =-\int_{0}^{1}\sum_{k=0}^{\infty}\boldsymbol{1}\left\{  \varphi\left(
x\right)  \leq-\left(  s+k\right)  \right\}  ds\\
& =-\int_{0}^{1}\sum_{k=0}^{\infty}\boldsymbol{1}_{A_{k}^{-}\left(
\varphi,s\right)  }\left(  x\right)  ds.
\end{align*}
where $A_{k}^{-}\left(  f,s\right)  =\left\{  y\in\mathbb{R};f(y)\leq-\left(
s+k\right)  \right\}  $ for any function $f.$ Similarly, $\psi_{+}\left(
x\right)  $ and $\psi_{-}\left(  x\right)  $ are written as follows:%
\begin{align*}
\psi_{+}\left(  x\right)   & =\int_{0}^{1}\sum_{k=0}^{\infty}\boldsymbol{1}%
_{A_{k}^{+}\left(  \psi,s\right)  }\left(  x\right)  ds,\\
\psi_{-}\left(  x\right)   & =-\int_{0}^{1}\sum_{k=0}^{\infty}\boldsymbol{1}%
_{A_{k}^{-}\left(  \psi,s\right)  }\left(  x\right)  ds.
\end{align*}

For any $\left(  \varphi,\psi\right)  \in$ $\Phi_{c},$ one can write
\begin{align*}
& \left(  \varphi,\psi\right) \\
& =\left(  \varphi_{+}+\varphi_{-},\psi_{+}+\psi_{-}\right) \\
& =\int_{0}^{1}\left(  \sum_{k=0}^{\infty}\left(  \boldsymbol{1}_{A_{k}%
^{+}\left(  \varphi,s\right)  }-\boldsymbol{1}_{A_{k}^{-}\left(
\varphi,s\right)  }\right)  ds,\sum_{k=0}^{\infty}\left(  \boldsymbol{1}%
_{A_{k}^{+}\left(  \psi,s\right)  }-\boldsymbol{1}_{A_{k}^{-}\left(
\psi,s\right)  }\right)  \right)  ds,
\end{align*}
which is a continuous convex combination of a continuum of pairs of
\[
\left(  \sum_{k=0}^{\infty}\left(  \boldsymbol{1}_{A_{k}^{+}\left(
\varphi,s\right)  }-\boldsymbol{1}_{A_{k}^{-}\left(  \varphi,s\right)
}\right)  ,\sum_{k=0}^{\infty}\left(  \boldsymbol{1}_{A_{k}^{+}\left(
\psi,s\right)  }-\boldsymbol{1}_{A_{k}^{-}\left(  \psi,s\right)  }\right)
\right)  _{s\in\lbrack0,1]}.
\]
\qquad To see if $\left(  \sum_{k=0}^{\infty}\left(  \boldsymbol{1}_{A_{k}%
^{+}d\left(  \varphi,s\right)  }-\boldsymbol{1}_{A_{k}^{-}\left(
\varphi,s\right)  }\right)  ,\sum_{k=0}^{\infty}\left(  \boldsymbol{1}%
_{A_{k}^{+}\left(  \psi,s\right)  }-\boldsymbol{1}_{A_{k}^{-}\left(
\psi,s\right)  }\right)  \right)  \in\Phi_{c},$ check the following: for any
$s\in\lbrack0,1]$ and $\lambda=\infty,$%
\begin{align}
& \sum_{k=0}^{\infty}\left(  \boldsymbol{1}_{A_{k}^{+}\left(  \varphi
,s\right)  }\left(  y_{0}\right)  -\boldsymbol{1}_{A_{k}^{-}\left(
\varphi,s\right)  }\left(  y_{0}\right)  \right)  +\sum_{k=0}^{\infty}\left(
\boldsymbol{1}_{A_{k}^{+}\left(  \psi,s\right)  }\left(  y_{1}\right)
-\boldsymbol{1}_{A_{k}^{-}\left(  \psi,s\right)  }\left(  y_{1}\right)
\right) \label{[A.5]}\\
& \leq\boldsymbol{1}\left\{  y_{1}-y_{0}<\delta\right\}  +\lambda\left(
1-\boldsymbol{1}_{C}\left(  y_{0},y_{1}\right)  \right)  \text{.}\nonumber
\end{align}
The nontrivial case to check is when the LHS in (A.4) \ is positive.
Consider the case where $s+t<\varphi\left(  y_{0}\right)  \leq s+t+1$ and
$-\left(  s+t\right)  <\psi\left(  y_{1}\right)  \leq-\left(  s+t-1\right)  $
for some nonnegative integer $t$ and $s\in\lbrack0,1]$. Then,%
\begin{align*}
\sum_{k=0}^{\infty}\left(  \boldsymbol{1}_{A_{k}^{+}\left(  \varphi,s\right)
}\left(  y_{0}\right)  -\boldsymbol{1}_{A_{k}^{-}\left(  \varphi,s\right)
}\left(  y_{0}\right)  \right)   & =t+1,\\
\sum_{k=0}^{\infty}\left(  \boldsymbol{1}_{A_{k}^{+}\left(  \psi,s\right)
}\left(  y_{1}\right)  -\boldsymbol{1}_{A_{k}^{-}\left(  \psi,s\right)
}\left(  y_{1}\right)  \right)   & =-t,
\end{align*}
and so the LHS in (A.4) is $1$. Also, it follows from (A.1)
that for $\left(  y_{0},y_{1}\right)  \in\mathbb{R}\times$ $\mathbb{R}$ s.t.
$s+t\leq\varphi\left(  y_{0}\right)  <s+t+1$ and $-\left(  s+t\right)
<\psi\left(  y_{1}\right)  ,$
\[
0<\varphi\left(  y_{0}\right)  +\psi\left(  y_{1}\right)  \leq\boldsymbol{1}%
\left\{  y_{1}-y_{0}<\delta\right\}  +\lambda\left(  1-\boldsymbol{1}%
_{C}\left(  y_{0},y_{1}\right)  \right)  ,
\]
and thus (A.4) is satisfied in this case from the following:%
\[
\boldsymbol{1}\left\{  y_{1}-y_{0}<\delta\right\}  +\lambda\left(
1-\boldsymbol{1}_{C}\left(  y_{0},y_{1}\right)  \right)  \geq1.
\]
\qquad Consider another case where $s+t\leq\varphi\left(  y_{0}\right)
<s+t+1$ and $-\left(  s+t-1\right)  <\psi\left(  y_{1}\right)  \leq-\left(
s+t-2\right)  $ for some nonnegative integer $t$ and $s\in\lbrack0,1]$. Then
the LHS in (A.4) is $2.$ Moreover, since $\varphi\left(  y_{0}\right)
+\psi\left(  y_{1}\right)  >1$, for $\left(  y_{0},y_{1}\right)  \in
\mathbb{R}\times$ $\mathbb{R}$ s.t. $s+t\leq\varphi\left(  y_{0}\right)
<s+t+1$ and $-\left(  s+t-1\right)  <\psi\left(  y_{1}\right)  ,$ by
(A.1)
\[
1<\varphi\left(  y_{0}\right)  +\psi\left(  y_{1}\right)  \leq\boldsymbol{1}%
\left\{  y_{1}-y_{0}<\delta\right\}  +\lambda\left(  1-\boldsymbol{1}%
_{C}\left(  y_{0},y_{1}\right)  \right)  ,
\]
and thus (A.4) is also satisfied from the following:%
\[
\boldsymbol{1}\left\{  y_{1}-y_{0}<\delta\right\}  +\lambda\left(
1-\boldsymbol{1}_{C}\left(  y_{0},y_{1}\right)  \right)  =\infty.
\]
\qquad Similarly, it can be proven that (A.4) is also satisfied for
other nontrivial cases. Therefore it concludes that each $\left(  \varphi
,\psi\right)  \in\Phi_{c}$ can be written as a continuous convex combination
of a continuum of pairs of the form
\[
\left(  \sum\limits_{k=0}^{\infty}\left(  \boldsymbol{1}_{A_{k}^{+}\left(
\varphi,s\right)  }-\boldsymbol{1}_{A_{k}^{-}\left(  \varphi,s\right)
}\right)  ,\sum\limits_{k=0}^{\infty}\left(  \boldsymbol{1}_{A_{k}^{+}\left(
\psi,s\right)  }-\boldsymbol{1}_{A_{k}^{-}\left(  \psi,s\right)  }\right)
\right)  .
\]
${\small \blacksquare}$

\subparagraph{Proof of Theorem 1}

\bigskip\bigskip By Lemma A.1, $\left(  \varphi,\psi\right)  \in$ $\Phi_{c}$
can be represented as a continuous convex combination of a continuum of pairs
of the form
\[
\left(  \sum\limits_{k=0}^{\infty}\left(  \boldsymbol{1}_{A_{k}^{+}\left(
\varphi,s\right)  }-\boldsymbol{1}_{A_{k}^{-}\left(  \varphi,s\right)
}\right)  ,\sum\limits_{k=0}^{\infty}\left(  \boldsymbol{1}_{A_{k}^{+}\left(
\psi,s\right)  }-\boldsymbol{1}_{A_{k}^{-}\left(  \psi,s\right)  }\right)
\right)  ,
\]
with $\ $%
\begin{align*}
& \text{ }\sum\limits_{k=0}^{\infty}\left(  \boldsymbol{1}_{A_{k}^{+}\left(
\varphi,s\right)  }\left(  y_{0}\right)  -\boldsymbol{1}_{A_{k}^{-}\left(
\varphi,s\right)  }\left(  y_{0}\right)  \right)  +\sum\limits_{k=0}^{\infty
}\left(  \boldsymbol{1}_{A_{k}^{+}\left(  \psi,s\right)  }\left(
y_{1}\right)  -\boldsymbol{1}_{A_{k}^{-}\left(  \psi,s\right)  }\left(
y_{1}\right)  \right) \\
& \leq\boldsymbol{1}\left\{  y_{1}-y_{0}<\delta\right\}  +\lambda\left(
1-\boldsymbol{1}_{C}\left(  y_{0},y_{1}\right)  \right)  .
\end{align*}
Since $\Phi_{c}$ is a convex set and $J\left(  \varphi,\psi\right)
=\int\varphi dF_{0}+\int\psi dF_{1}$ is a linear functional, for all $\left(
\varphi,\psi\right)  \in$ $\Phi_{c}$, there exists $s\in(0,1]$ such that%
\begin{equation}
J\left(  \sum\limits_{k=0}^{\infty}\left(  \boldsymbol{1}_{A_{k}^{+}\left(
\varphi,s\right)  }-\boldsymbol{1}_{A_{k}^{-}\left(  \varphi,s\right)
}\right)  ,\sum\limits_{k=0}^{\infty}\left(  \boldsymbol{1}_{A_{k}^{+}\left(
\psi,s\right)  }-\boldsymbol{1}_{A_{k}^{-}\left(  \psi,s\right)  }\right)
\right)  \geq J\left(  \varphi,\psi\right)  .\label{[A.2]}%
\end{equation}
Thus, the value of $\underset{\left(  \varphi,\psi\right)  \in\Phi_{c}}{\sup
}J\left(  \varphi,\psi\right)  $ is unchanged even if one restricts the
supremum to pairs of the form $\left(  \sum\limits_{k=0}^{\infty}\left(
\boldsymbol{1}_{A_{k}^{+}\left(  \varphi,s\right)  }-\boldsymbol{1}_{A_{k}%
^{-}\left(  \varphi,s\right)  }\right)  ,\sum\limits_{k=0}^{\infty}\left(
\boldsymbol{1}_{A_{k}^{+}\left(  \psi,s\right)  }-\boldsymbol{1}_{A_{k}%
^{-}\left(  \psi,s\right)  }\right)  \right)  .$ Hence for all $\left(
y_{0},y_{1}\right)  \in\mathbb{R}^{2}$,%
\begin{align*}
& \sum\limits_{k=0}^{\infty}\left(  \boldsymbol{1}_{A_{k}^{+}\left(
\varphi,s\right)  }\left(  y_{0}\right)  -\boldsymbol{1}_{A_{k}^{-}\left(
\varphi,s\right)  }\left(  y_{0}\right)  \right)  +\sum\limits_{k=0}^{\infty
}\left(  \boldsymbol{1}_{A_{k}^{+}\left(  \psi,s\right)  }\left(
y_{1}\right)  -\boldsymbol{1}_{A_{k}^{-}\left(  \psi,s\right)  }\left(
y_{1}\right)  \right) \\
& \leq\boldsymbol{1}\left\{  y_{1}-y_{0}<\delta\right\}  +\lambda\left(
1-\boldsymbol{1}_{C}\left(  y_{0},y_{1}\right)  \right)  ,
\end{align*}
which implies that for each $y_{1}\in\mathbb{R},$%
\begin{align*}
-\infty & <\underset{y_{0}\in\mathbb{R}}{\sup}\left[  \sum\limits_{k=0}%
^{\infty}\left(  \boldsymbol{1}_{A_{k}^{+}\left(  \varphi,s\right)  }\left(
y_{0}\right)  -\boldsymbol{1}_{A_{k}^{-}\left(  \varphi,s\right)  }\left(
y_{0}\right)  \right)  -\boldsymbol{1}\left(  y_{1}-y_{0}<\delta\right)
-\lambda\left(  1-\boldsymbol{1}_{C}\left(  y_{0},y_{1}\right)  \right)
\right] \\
& \leq-\sum\limits_{k=0}^{\infty}\left(  \boldsymbol{1}_{A_{k}^{+}\left(
\psi,s\right)  }\left(  y_{1}\right)  -\boldsymbol{1}_{A_{k}^{-}\left(
\psi,s\right)  }\left(  y_{1}\right)  \right)  .
\end{align*}
\qquad Define $\left\{  A_{k,D}^{+}\left(  \varphi,s\right)  \right\}
_{k=0}^{\infty},$ $\left\{  A_{k,D}^{-}\left(  \varphi,s\right)  \right\}
_{k=0}^{\infty}$ as follows:%
\begin{align}
A_{k,D}^{+}\left(  \varphi,s\right)   & =%
\begin{tabular}
[c]{l}%
$\left\{  y_{1}\in\mathbb{R}|\exists y_{0}\in A_{k}^{+}\left(  \varphi
,s\right)  \text{ s.t. }y_{1}-y_{0}\geq\delta\text{ and }\left(  y_{0}%
,y_{1}\right)  \in C\right\}  $\\
$\cup\left\{  y_{1}\in\mathbb{R}|\exists y_{0}\in A_{k+1}^{+}\left(
\varphi,s\right)  \text{ s.t. }y_{1}-y_{0}<\delta\text{ and }\left(
y_{0},y_{1}\right)  \in C\right\}  $\\
$\text{for any integer }k\geq0,$%
\end{tabular}
\label{[A.6]}\\
A_{0,D}^{-}\left(  \varphi,s\right)   & =%
\begin{tabular}
[c]{l}%
$\left\{  y_{1}\in\mathbb{R}|\forall y_{0}\leq y_{1}-\delta\text{ s.t.
}\left(  y_{0},y_{1}\right)  \in C,\text{ }y_{0}\in A_{0}^{-}\left(
\varphi,s\right)  \right\}  $\\
$\cap\left\{  y_{1}\in\mathbb{R}|\forall y_{0}>y_{1}-\delta\text{ s.t.
}\left(  y_{0},y_{1}\right)  \in C,\text{ }y_{0}\in\left(  A_{0}^{+}\left(
\varphi,s\right)  \right)  ^{c}\right\}  ,$%
\end{tabular}
\nonumber\\
A_{k,D}^{-}\left(  \varphi,s\right)   & =%
\begin{tabular}
[c]{l}%
$\left\{  y_{1}\in\mathbb{R}|\forall y_{0}\leq y_{1}-\delta\text{ s.t.
}\left(  y_{0},y_{1}\right)  \in C,\text{ }y_{0}\in A_{k}^{-}\left(
\varphi,s\right)  \right\}  $\\
$\cap\left\{  y_{1}\in\mathbb{R}|\forall y_{0}>y_{1}-\delta\text{ s.t.
}\left(  y_{0},y_{1}\right)  \in C,\text{ }y_{0}\in A_{k-1}^{-}\left(
\varphi,s\right)  \right\}  \text{ }$\\
$\text{for any integer }k>0.$%
\end{tabular}
\nonumber
\end{align}
Also, according to the definitions above and Figure A.1, if $y_{1}\in
A_{\rho,D}^{+}\left(  \varphi,s\right)  $ for some $\rho\geq0,$ then
\begin{align*}
& \underset{y_{0}\in\mathbb{R}}{\sup}\left[  \sum\limits_{k=0}^{\infty}\left(
\boldsymbol{1}_{A_{k}^{+}\left(  \varphi,s\right)  }\left(  y_{0}\right)
-\boldsymbol{1}_{A_{k}^{-}\left(  \varphi,s\right)  }\left(  y_{0}\right)
\right)  -\boldsymbol{1}\left\{  y_{1}-y_{0}<\delta\right\}  -\lambda\left(
1-\boldsymbol{1}_{C}\left(  y_{0},y_{1}\right)  \right)  \right] \\
& \geq\rho+1,
\end{align*}
and if $y_{1}\in A_{\rho,D}^{-}\left(  \varphi,s\right)  $ \ for some
$\rho\geq0,$%
\begin{align*}
& \underset{y_{0}\in\mathbb{R}}{\sup}\left[  \sum\limits_{k=0}^{\infty}\left(
\boldsymbol{1}_{A_{k}^{+}\left(  \varphi,s\right)  }\left(  y_{0}\right)
-\boldsymbol{1}_{A_{k}^{-}\left(  \varphi,s\right)  }\left(  y_{0}\right)
\right)  -\boldsymbol{1}\left\{  y_{1}-y_{0}<\delta\right\}  -\lambda\left(
1-\boldsymbol{1}_{C}\left(  y_{0},y_{1}\right)  \right)  \right] \\
& \leq-\left(  \rho+1\right)  .
\end{align*}
Hence, if $y_{1}\in A_{\rho,D}^{+}\left(  \varphi,s\right)  -A_{\rho+1,D}%
^{+}\left(  \varphi,s\right)  ,$ then
\begin{align*}
& \underset{y_{0}\in\mathbb{R}}{\sup}\left[  \sum\limits_{k=0}^{\infty}\left(
\boldsymbol{1}_{A_{k}^{+}\left(  \varphi,s\right)  }\left(  y_{0}\right)
-\boldsymbol{1}_{A_{k}^{-}\left(  \varphi,s\right)  }\left(  y_{0}\right)
\right)  -\boldsymbol{1}\left\{  y_{1}-y_{0}<\delta\right\}  -\lambda\left(
1-\boldsymbol{1}_{C}\left(  y_{0},y_{1}\right)  \right)  \right] \\
& =\rho+1,
\end{align*}
and if $y_{1}\in A_{\rho,D}^{-}\left(  \varphi,s\right)  -A_{\rho+1,D}%
^{-}\left(  \varphi,s\right)  ,$ then%
\begin{align*}
& \underset{y_{0}\in\mathbb{R}}{\sup}\left[  \sum\limits_{k=0}^{\infty}\left(
\boldsymbol{1}_{A_{k}^{+}\left(  \varphi,s\right)  }\left(  y_{0}\right)
-\boldsymbol{1}_{A_{k}^{-}\left(  \varphi,s\right)  }\left(  y_{0}\right)
\right)  -\boldsymbol{1}\left\{  y_{1}-y_{0}<\delta\right\}  -\lambda\left(
1-\boldsymbol{1}_{C}\left(  y_{0},y_{1}\right)  \right)  \right] \\
& =-\left(  \rho+1\right)  .
\end{align*}
Hence,%
\begin{align*}
& \sum\limits_{k=0}^{\infty}\left(  \boldsymbol{1}_{A_{k,D}^{+}\left(
\varphi,s\right)  }\left(  y_{1}\right)  -\boldsymbol{1}_{A_{k,D}^{-}\left(
\varphi,s\right)  }\left(  y_{1}\right)  \right) \\
& =\underset{y_{0}\in\mathbb{R}}{\sup}\left[  \sum\limits_{k=0}^{\infty
}\left(  \boldsymbol{1}_{A_{k}^{+}\left(  \varphi,s\right)  }\left(
y_{0}\right)  -\boldsymbol{1}_{A_{k}^{-}\left(  \varphi,s\right)  }\left(
y_{0}\right)  \right)  -\boldsymbol{1}\left\{  y_{1}-y_{0}<\delta\right\}
-\lambda\left(  1-\boldsymbol{1}_{C}\left(  y_{0},y_{1}\right)  \right)
\right] \\
& \leq-\sum\limits_{k=0}^{\infty}\left(  \boldsymbol{1}_{A_{k}^{+}\left(
\psi,s\right)  }\left(  y_{1}\right)  -\boldsymbol{1}_{A_{k}^{-}\left(
\psi,s\right)  }\left(  y_{1}\right)  \right)  .
\end{align*}
\qquad Now define
\begin{align*}
A_{k}\left(  \varphi,s\right)   & =\left\{
\begin{tabular}
[c]{ll}%
$A_{k}^{+}\left(  \varphi,s\right)  ,$ & if $k\geq0,$\\
$\left(  A_{-k-1}^{-}\left(  \varphi,s\right)  \right)  ^{c},$ & if $k<0,$%
\end{tabular}
\right. \\
A_{k}^{D}\left(  \varphi,s\right)   & =\left\{
\begin{tabular}
[c]{ll}%
$A_{k,D}^{+}\left(  \varphi,s\right)  ,$ & if $k\geq0,$\\
$\left(  A_{-k-1,D}^{-}\left(  \varphi,s\right)  \right)  ^{c},$ & if $k<0.$%
\end{tabular}
\right.
\end{align*}
Then for all $\left(  y_{0},y_{1}\right)  \in\mathbb{R}^{2},$%
\begin{align}
& \boldsymbol{1}\left\{  y_{1}-y_{0}<\delta\right\}  +\lambda\left(
1-\boldsymbol{1}_{C}\left(  y_{0},y_{1}\right)  \right) \label{[A.10]}\\
& \geq\sum\limits_{k=0}^{\infty}\left(  \boldsymbol{1}_{A_{k}^{+}\left(
\varphi,s\right)  }\left(  y_{0}\right)  -\boldsymbol{1}_{A_{k}^{-}\left(
\varphi,s\right)  }\left(  y_{0}\right)  \right)  -\sum\limits_{k=0}^{\infty
}\left(  \boldsymbol{1}_{A_{k,D}^{+}\left(  \varphi,s\right)  }\left(
y_{1}\right)  -\boldsymbol{1}_{A_{k,D}^{-}\left(  \varphi,s\right)  }\left(
y_{1}\right)  \right) \nonumber\\
& =\sum\limits_{k=0}^{\infty}\left\{  \left(  \boldsymbol{1}_{A_{k}^{+}\left(
\varphi,s\right)  }\left(  y_{0}\right)  -\boldsymbol{1}_{A_{k}^{-}\left(
\varphi,s\right)  }\left(  y_{0}\right)  \right)  -\left(  \boldsymbol{1}%
_{A_{k,D}^{+}\left(  \varphi,s\right)  }\left(  y_{1}\right)  -\boldsymbol{1}%
_{A_{k,D}^{-}\left(  \varphi,s\right)  }\left(  y_{1}\right)  \right)
\right\} \nonumber\\
& =\sum\limits_{k=0}^{\infty}\left\{  \boldsymbol{1}_{A_{k}^{+}\left(
\varphi,s\right)  }\left(  y_{0}\right)  +\left(  1-\boldsymbol{1}_{A_{k}%
^{-}\left(  \varphi,s\right)  }\left(  y_{0}\right)  \right)  -\boldsymbol{1}%
_{A_{k,D}^{+}\left(  \varphi,s\right)  }\left(  y_{1}\right)  -\left(
1-\boldsymbol{1}_{A_{k,D}^{-}\left(  \varphi,s\right)  }\left(  y_{1}\right)
\right)  \right\} \nonumber\\
& =\sum\limits_{k=0}^{\infty}\left\{  \left(  \boldsymbol{1}_{A_{k}^{+}\left(
\varphi,s\right)  }\left(  y_{0}\right)  +\boldsymbol{1}_{\left(  A_{k}%
^{-}\left(  \varphi,s\right)  \right)  ^{c}}\left(  y_{0}\right)  \right)
-\left(  \boldsymbol{1}_{A_{k,D}^{+}\left(  \varphi,s\right)  }\left(
y_{1}\right)  +\boldsymbol{1}_{\left(  A_{k,D}^{-}\left(  \varphi,s\right)
\right)  ^{c}}\left(  y_{1}\right)  \right)  \right\} \nonumber\\
& =\sum\limits_{k=0}^{\infty}\left(  \boldsymbol{1}_{A_{k}^{+}\left(
\varphi,s\right)  }\left(  y_{0}\right)  -\boldsymbol{1}_{A_{k,D}^{+}\left(
\varphi,s\right)  }\left(  y_{1}\right)  \right)  +\sum\limits_{k=0}^{\infty
}\left(  \boldsymbol{1}_{\left(  A_{k}^{-}\left(  \varphi,s\right)  \right)
^{c}}\left(  y_{0}\right)  -\boldsymbol{1}_{\left(  A_{k,D}^{-}\left(
\varphi,s\right)  \right)  ^{c}}\left(  y_{1}\right)  \right) \nonumber\\
& =\sum\limits_{k=0}^{\infty}\left(  \boldsymbol{1}_{A_{k}\left(
\varphi,s\right)  }\left(  y_{0}\right)  -\boldsymbol{1}_{A_{k}^{D}\left(
\varphi,s\right)  }\left(  y_{1}\right)  \right)  +\sum\limits_{k=-\infty
}^{-1}\left(  \boldsymbol{1}_{A_{k}\left(  \varphi,s\right)  }\left(
y_{0}\right)  -\boldsymbol{1}_{A_{k}^{D}\left(  \varphi,s\right)  }\left(
y_{1}\right)  \right) \nonumber\\
& =\sum\limits_{k=-\infty}^{\infty}\left(  \boldsymbol{1}_{A_{k}\left(
\varphi,s\right)  }\left(  y_{0}\right)  -\boldsymbol{1}_{A_{k}^{D}\left(
\varphi,s\right)  }\left(  y_{1}\right)  \right) \nonumber
\end{align}
Equalities in the third and sixth lines of (A.7) are satisfied
because $\varphi$ and $\psi$ are assumed to be bounded. To compress notation,
refer to $A_{k}\left(  \varphi,s\right)  $ and $A_{k}^{D}\left(
\varphi,s\right)  $ merely as $A_{k}$ and $A_{k}^{D}.$ Then,%
\begin{align*}
& \boldsymbol{1}\left\{  y_{1}-y_{0}<\delta\right\}  +\lambda\left(
1-\boldsymbol{1}_{C}\left(  y_{0},y_{1}\right)  \right) \\
& \geq\sum\limits_{k=-\infty}^{\infty}\left(  \boldsymbol{1}_{A_{k}}\left(
y_{0}\right)  -\boldsymbol{1}_{A_{k}^{D}}\left(  y_{1}\right)  \right)  .
\end{align*}
\qquad By taking integrals with respect to $dF$ to both side, one obtains the
following:%
\begin{align}
& \int\left\{  \boldsymbol{1}\left\{  y_{1}-y_{0}<\delta\right\}
-\lambda\left(  1-\boldsymbol{1}_{C}\left(  y_{0},y_{1}\right)  \right)
\right\}  d\pi\label{[A.11]}\\
& \geq\int\sum\limits_{k=-\infty}^{\infty}\left(  \boldsymbol{1}_{A_{k}%
}\left(  y_{0}\right)  -\boldsymbol{1}_{A_{k}^{D}}\left(  y_{1}\right)
\right)  d\pi\nonumber\\
& =\sum\limits_{k=-\infty}^{\infty}\int\left(  \boldsymbol{1}_{A_{k}}\left(
y_{0}\right)  -\boldsymbol{1}_{A_{k}^{D}}\left(  y_{1}\right)  \right)
d\pi\nonumber\\
& =\sum\limits_{k=-\infty}^{\infty}\left\{  \mu_{0}\left(  A_{k}\right)
-\mu_{1}\left(  A_{k}^{D}\right)  \right\}  .\nonumber
\end{align}
The third equality holds by Fubini's theorem because $\sum\limits_{k=-\infty
}^{\infty}\left\vert \boldsymbol{1}_{A_{k}}\left(  y_{0}\right)
-\boldsymbol{1}_{A_{k}^{D}}\left(  y_{1}\right)  \right\vert \leq
\sum\limits_{k=-\infty}^{\infty}\boldsymbol{1}_{A_{k}}\left(  y_{0}\right)
+\sum\limits_{k=-\infty}^{\infty}\boldsymbol{1}_{A_{k}^{D}}\left(
y_{1}\right)  <\infty$ for bounded functions $\varphi$ and $\psi.$ Now,
maximization of $\int\varphi\left(  y_{0}\right)  dF_{0}+\int\psi\left(
y_{1}\right)  dF_{1}$ over $\left(  \varphi,\psi\right)  \in\Phi_{c}$ is
equivalent to the that of $\sum\limits_{k=-\infty}^{\infty}\left\{
F_{0}\left(  A_{k}\right)  -F_{1}\left(  A_{k}^{D}\right)  \right\}  $ over
$\left\{  A_{k}\right\}  _{k=-\infty}^{\infty}$ with the following
monotonicity condition:%
\[
\ldots\subseteq A_{k+1}\subseteq A_{k}\subseteq A_{k-1}\subseteq\ldots.
\]
Therefore, it follows that
\begin{equation}
\inf_{F\in\Pi\left(  \mu_{0},\mu_{1}\right)  }I\left[  F\right]
=\underset{\left\{  A_{k}\right\}  _{k=-\infty}^{\infty}}{\sup}\sum
\limits_{k=-\infty}^{\infty}\left(  \mu_{0}\left(  A_{k}\right)  -\mu
_{1}\left(  A_{k}^{D}\right)  \right)  ,\label{[A.12]}%
\end{equation}
where
\begin{align*}
& \left\{  A_{k}\right\}  _{k=-\infty}^{\infty}\text{ is a monotonically
decreasing sequence of open sets,}\\
A_{k}^{D}  & =\left\{  y_{1}\in\mathbb{R}|\exists y_{0}\in A_{k}\text{ s.t.
}y_{1}-y_{0}\geq\delta\text{ and }\left(  y_{0},y_{1}\right)  \in C\right\} \\
& \cup\left\{  y_{1}\in\mathbb{R}|\exists y_{0}\in A_{k+1}\text{ s.t. }%
y_{1}-y_{0}<\delta\text{ and }\left(  y_{0},y_{1}\right)  \in C\right\}
\text{ for any integer }k.
\end{align*}
\qquad Note that the expression (A.9) can be equivalently written as
follows:%
\[
\inf_{\pi\in\Pi\left(  \mu_{0},\mu_{1}\right)  }I\left[  F\right]
=\underset{\left\{  A_{k}\right\}  _{k=-\infty}^{\infty}}{\sup}\sum
\limits_{k=-\infty}^{\infty}\max\left\{  \mu_{0}\left(  A_{k}\right)  -\mu
_{1}\left(  A_{k}^{D}\right)  ,0\right\}  .
\]
That is, \ $F_{0}\left(  A_{k}\right)  -F_{1}\left(  A_{k}^{D}\right)  \geq0$
for each integer $k$ at the optimum in the expression (A.9)$.$ This
is easily shown by proof by contradiction.

Suppose that there exists an integer $p$ s.t. $F_{0}\left(  A_{p}\right)
-F_{1}\left(  A_{p}^{D}\right)  <0$ at the optimum. If there exists an integer
$q>p$ s.t. $F_{0}\left(  A_{q}\right)  -F_{1}\left(  A_{q}^{D}\right)  >0,$
then there exists another monotonically decreasing sequence of open sets
$\left\{  \widetilde{A}_{k}\right\}  _{k=-\infty}^{\infty}$ s.t.
\[
\sum\limits_{k=-\infty}^{\infty}\left\{  \mu_{0}\left(  \widetilde{A}%
_{k}\right)  -\mu_{1}\left(  \widetilde{A}_{k}^{D}\right)  \right\}
>\sum\limits_{k=-\infty}^{\infty}\left\{  \mu_{0}\left(  A_{k}\right)
-\mu_{1}\left(  A_{k}^{D}\right)  \right\}  ,
\]
where $\widetilde{A}_{k}=A_{k}$ for $k<p$ and $\widetilde{A}_{k}=A_{k+1}$ for
$k\geq p.$ If there is no integer $q>p$ s.t. $F_{0}\left(  A_{q}\right)
-F_{1}\left(  A_{q}^{D}\right)  >0,$ then also there exists a monotonically
decreasing sequence of open sets $\left\{  \widehat{A}_{k}\right\}
_{k=-\infty}^{\infty}$ s.t.
\[
\sum\limits_{k=-\infty}^{\infty}\left\{  \mu_{0}\left(  \widehat{A}%
_{k}\right)  -\mu_{1}\left(  \widehat{A}_{k}^{D}\right)  \right\}  >\left\{
\mu_{0}\left(  A_{k}\right)  -\mu_{1}\left(  A_{k}^{D}\right)  \right\}  ,
\]
where $\widehat{A}_{k}=A_{k}$ for $k<p$ and $\widehat{A}_{k}=\phi$ for $k\geq
p.$ This contradicts the optimality of $\left\{  A_{k}\right\}  _{k=-\infty
}^{\infty}.$ ${\small \blacksquare}$

\subsection*{Proof of Corollary 1}

The proof consists of two parts: (i) deriving the lower bound and (ii)
deriving the upper bound.\newline\textbf{Part 1. The sharp lower bound}

First, I\ prove that in the dual representation%
\begin{align*}
& \inf_{F\in\Pi\left(  F_{0},F_{1}\right)  }\int\left\{  \boldsymbol{1}%
\left\{  y_{1}-y_{0}<\delta\right\}  +\lambda\left(  \boldsymbol{1}\left(
y_{1}<y_{0}\right)  \right)  \right\}  dF\\
& =\underset{\left(  \varphi,\psi\right)  \in\Phi_{c}}{\sup}\int\varphi\left(
y_{0}\right)  d\mu_{0}+\int\psi\left(  y_{1}\right)  d\mu_{1},
\end{align*}
the function $\varphi$\ is nondecreasing.

Recall that
\[
\varphi\left(  y_{0}\right)  =\underset{y_{1}\geq y_{0}}{\inf}\left\{
\boldsymbol{1}\left\{  y_{1}-y_{0}<\delta\right\}  -\psi\left(  y_{1}\right)
\right\}  .
\]
Pick $\left(  y_{0}^{\prime},y_{1}^{\prime}\right)  $ and $\left(
y_{0}^{\prime\prime},y_{1}^{\prime\prime}\right)  $ with $y_{0}^{\prime\prime
}>y_{0}^{\prime}$ in the support of the optimal joint distribution. Then,%
\begin{align}
\varphi\left(  y_{0}^{\prime}\right)   & =\underset{y_{1}\geq y_{0}}{\inf
}\left\{  \boldsymbol{1}\left\{  y_{1}-y_{0}^{\prime}<\delta\right\}
-\psi\left(  y_{1}\right)  \right\} \label{A.MTR.1}\\
& \leq\boldsymbol{1}\left\{  y_{1}^{\prime\prime}-y_{0}^{\prime}%
<\delta\right\}  -\psi\left(  y_{1}^{\prime\prime}\right) \nonumber\\
& \leq\boldsymbol{1}\left\{  y_{1}^{\prime\prime}-y_{0}^{\prime\prime}%
<\delta\right\}  -\psi\left(  y_{1}^{\prime\prime}\right) \nonumber\\
& =\varphi\left(  y_{0}^{\prime\prime}\right)  .\nonumber
\end{align}
The inequality in the second line of (A.10) is satisfied because
$y_{1}^{\prime\prime}\geq y_{0}^{\prime\prime}>y_{0}^{\prime}.$ The inequality
in the third line of (A.10) holds because $\boldsymbol{1}\left\{
y_{1}-y_{0}<\delta\right\}  $ is nondecreasing in $y_{0}$.

\begin{center}%
\begin{tabular}
[c]{l}%
{\parbox[b]{3.6685in}{\begin{center}
\includegraphics[
natheight=3.416900in,
natwidth=4.854200in,
height=2.591in,
width=3.6685in
]%
{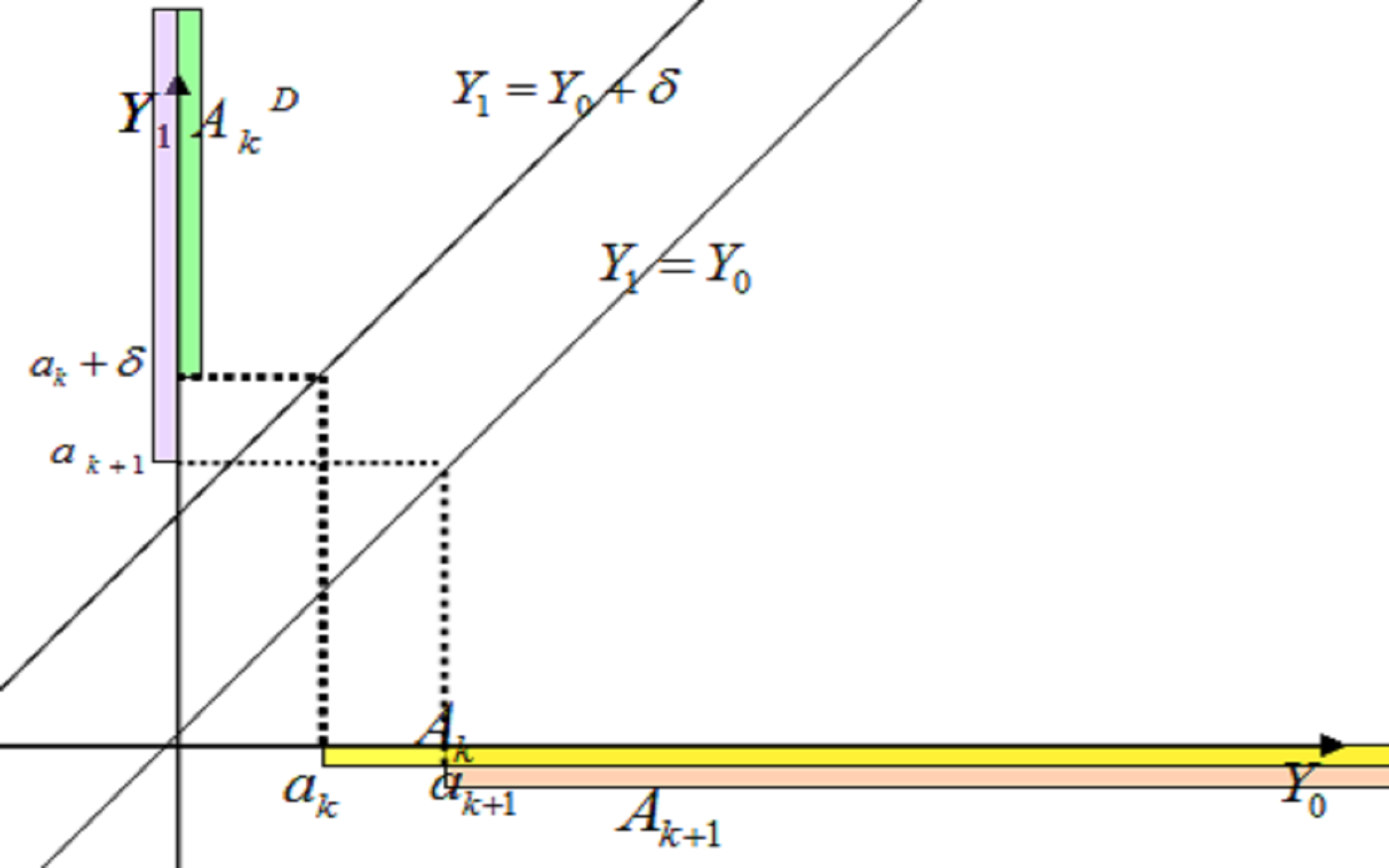}%
\\
{}%
\end{center}}}
\\
Figure. A.2: $A_{k}^{D}$ for $A_{k}=$ $\left(  a_{k},\infty\right)  $ and
$A_{k+1}=$ $\left(  a_{k+1},\infty\right)  $%
\end{tabular}

\end{center}

Since the function $\varphi$ is nondecreasing in the support of the optimal
joint distribution, $A_{k}$ reduces to $\left(  a_{k},\infty\right)  $ with
$a_{k}\leq a_{k+1}$ and $a_{k}\in\left[  -\infty,\infty\right]  $ where
$A_{k}=\phi$ for $a_{k}=\infty.$ By Theorem 1, for each integer
$k$ and $\delta>0,$
\begin{align*}
A_{k}^{D}  & =\left\{  y_{1}\in\mathbb{R}|\exists y_{0}>a_{k}\text{ s.t.
}y_{1}-y_{0}\geq\delta\right\}  \cup\left\{  y_{1}\in\mathbb{R}|\exists
y_{0}>a_{k+1}\text{ s.t. }0\leq y_{1}-y_{0}<\delta\right\} \\
& =\left(  a_{k}+\delta,\infty\right)  \cup\left(  a_{k+1},\infty\right) \\
& =\left(  \min\left\{  a_{k}+\delta,a_{k+1}\right\}  ,\infty\right)
\end{align*}
Then, $F_{0}\left(  A_{k}\right)  -F_{1}\left(  A_{k}^{D}\right)  =0$ for
$a_{k}=\infty$, while $F_{0}\left(  A_{k}\right)  -F_{1}\left(  A_{k}%
^{D}\right)  =\min\left\{  F_{1}\left(  a_{k}+\delta\right)  ,F_{1}\left(
a_{k+1}\right)  \right\}  -F_{0}\left(  a_{k}\right)  $ for $a_{k}<\infty.$
Therefore, By Theorem 1,
\begin{align*}
F_{\Delta}^{L}\left(  \delta\right)   & =\underset{\left\{  A_{k}\right\}
_{k=-\infty}^{\infty}}{\sup}\left[  \sum\limits_{k=-\infty}^{\infty}%
\max\left\{  \mu_{0}\left(  A_{k}\right)  -\mu_{1}\left(  A_{k}^{D}\right)
,0\right\}  \right] \\
& =\underset{\left\{  a_{k}\right\}  _{k=-\infty}^{\infty}}{\sup}\left[
\sum\limits_{k=-\infty}^{\infty}\max\left\{  \min\left\{  F_{1}\left(
a_{k}+\delta\right)  ,F_{1}\left(  a_{k+1}\right)  \right\}  -F_{0}\left(
a_{k}\right)  ,0\right\}  \right]  .
\end{align*}

Now I\ show that it is innocuous to assume that $a_{k+1}-a_{k}\leq\delta$ for
each integer $k.$\ Suppose that there exists an integer $l$ s.t.
$a_{l+1}>a_{l}+\delta$. Consider $\left\{  \widetilde{A}_{k}\right\}
_{k=-\infty}^{\infty}$ with $\widetilde{A}_{k}=\left(  \widetilde{a}%
_{k},\infty\right)  $ as follows:%
\begin{align*}
\widetilde{a}_{k}  & =a_{k}\text{ for }k\leq l,\\
\widetilde{a}_{l+1}  & =a_{l}+\delta,\\
\widetilde{a}_{k+1}  & =a_{k}\text{ for }k\geq l+1.
\end{align*}
It is obvious that $\widetilde{a}_{k+1}\leq\widetilde{a}_{k+2}$ for every
integer $k.$ $\widetilde{A}_{l}^{D}$ is given as%
\begin{align}
\widetilde{A}_{l}^{D}  & =\left(  \min\left\{  \widetilde{a}_{l}%
+\delta,\widetilde{a}_{l+1}\right\}  ,\infty\right) \label{equalities}\\
& =\left(  a_{l}+\delta,\infty\right) \nonumber\\
& =A_{l}^{D}%
\end{align}
The second equality in (A.11) follows from $\widetilde{a}%
_{l+1}=a_{l}+\delta=\widetilde{a}_{l}+\delta$, and the third equality holds
because
\begin{align*}
A_{l}^{D}  & =\left(  \min\left\{  a_{l}+\delta,a_{l+1}\right\}
,\infty\right) \\
& =\left(  a_{l}+\delta,\infty\right)  .
\end{align*}
This implies that
\begin{align*}
\max\left\{  \mu_{0}\left(  \widetilde{A}_{k}\right)  -\mu_{1}\left(
\widetilde{A}_{k}^{D}\right)  ,0\right\}   & =\max\left\{  \mu_{0}\left(
A_{k}\right)  -\mu_{1}\left(  A_{k}^{D}\right)  ,0\right\}  \text{ for }k\leq
l,\\
\max\left\{  \mu_{0}\left(  \widetilde{A}_{k+1}\right)  -\mu_{1}\left(
\widetilde{A}_{k+1}^{D}\right)  ,0\right\}   & =\max\left\{  \mu_{0}\left(
A_{k}\right)  -\mu_{1}\left(  A_{k}^{D}\right)  ,0\right\}  \text{ for }k\geq
l+1,
\end{align*}
Therefore,
\[
\sum\limits_{k=-\infty}^{\infty}\max\left\{  \mu_{0}\left(  A_{k}\right)
-\mu_{1}\left(  A_{k}^{D}\right)  ,0\right\}  \leq\sum\limits_{k=-\infty
}^{\infty}\max\left\{  \mu_{0}\left(  \widetilde{A}_{k}\right)  -\mu
_{1}\left(  \widetilde{A}_{k}^{D}\right)  ,0\right\}
\]
This means that for any sequence of sets $\left\{  A_{k}\right\}  _{k=-\infty
}^{\infty}$ with $a_{k+1}>a_{k}+\delta$ for some integer $k$, one can always
construct a seqeunce of sets $\left\{  \widetilde{A}_{k}\right\}  _{k=-\infty
}^{\infty}$ with $\widetilde{a}_{k+1}\leq\widetilde{a}_{k}+\delta$ for every
integer $k$ satisfying
\[
\sum\limits_{k=-\infty}^{\infty}\max\left\{  \mu_{0}\left(  \widetilde{A}%
_{k}\right)  -\mu_{1}\left(  \widetilde{A}_{k}^{D}\right)  ,0\right\}
\geq\sum\limits_{k=-\infty}^{\infty}\max\left\{  \mu_{0}\left(  A_{k}\right)
-\mu_{1}\left(  A_{k}^{D}\right)  ,0\right\}  .
\]

This can be intuitively understood by comparing Figure A.3(a) to Figure
A.3(b), where the sum of the lower bound on each triangle is equal to
$\sum\limits_{k=-\infty}^{\infty}\max\left\{  \mu_{0}\left(  A_{k}\right)
-\mu_{1}\left(  A_{k}^{D}\right)  ,0\right\}  $ and $\sum\limits_{k=-\infty
}^{\infty}\max\left\{  \mu_{0}\left(  \widetilde{A}_{k}\right)  -\mu
_{1}\left(  \widetilde{A}_{k}^{D}\right)  ,0\right\}  ,$ respectively.
Therefore, it is innocuous to assume $a_{k+1}\leq a_{k}+\delta$ at the optimum.

\begin{center}%
\begin{tabular}
[c]{ll}%
$%
{\includegraphics[
natheight=1.823000in,
natwidth=1.802300in,
height=2.7622in,
width=2.7311in
]%
{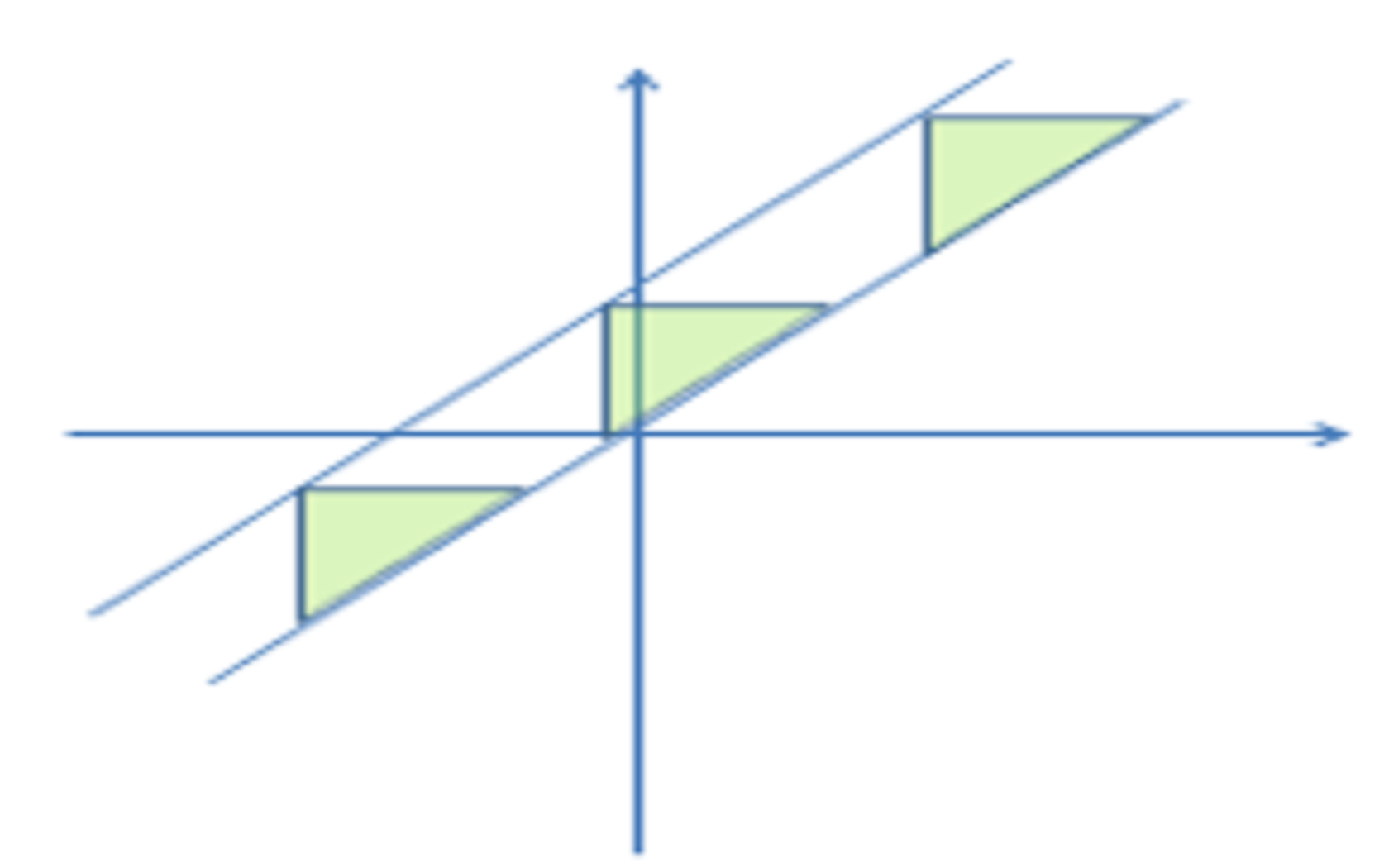}%
}
$ & $%
{\includegraphics[
natheight=1.802300in,
natwidth=1.718400in,
height=2.7311in,
width=2.6057in
]%
{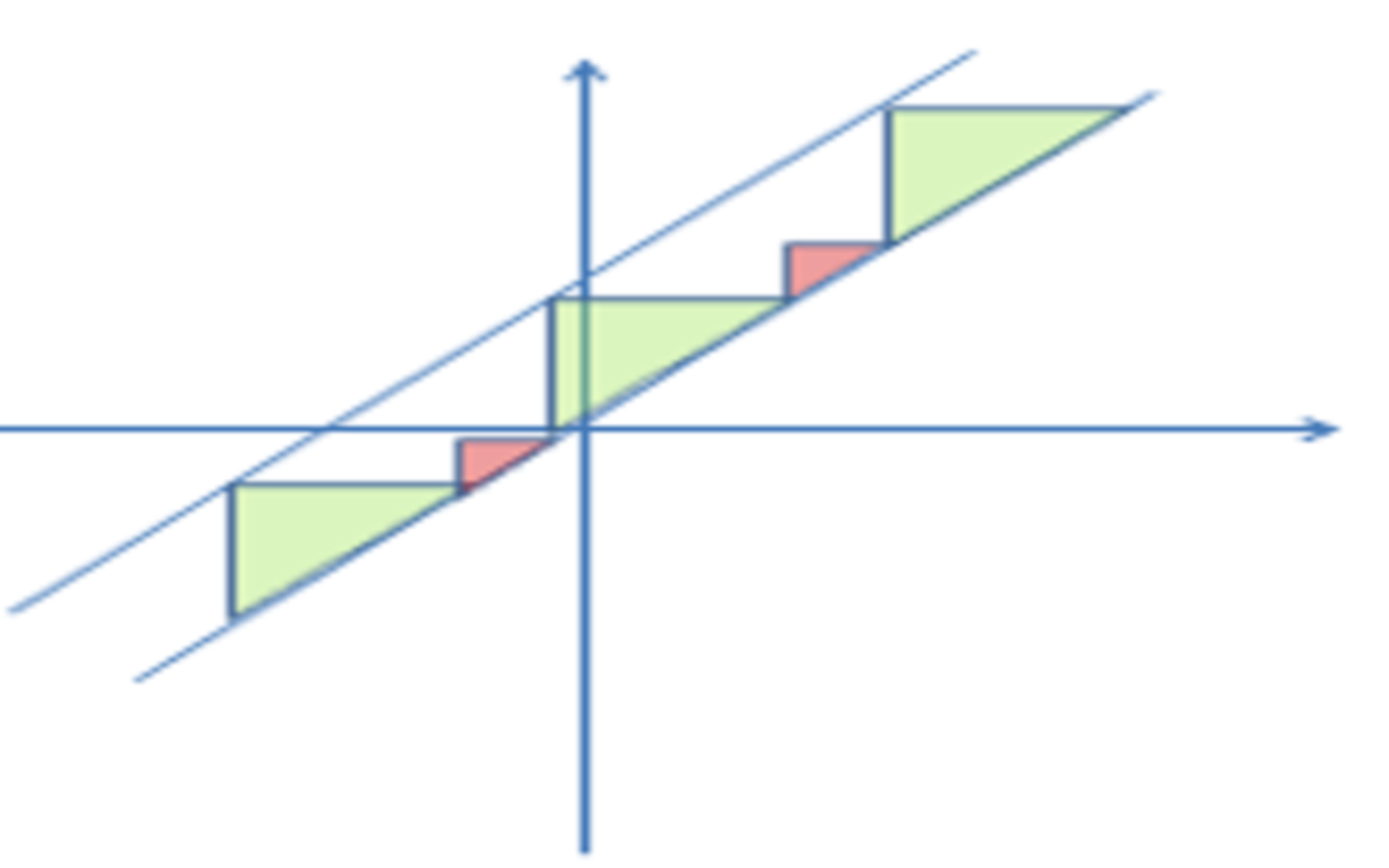}%
}
$\\
\multicolumn{1}{c}{(a)} & \multicolumn{1}{c}{(b)}\\
\multicolumn{2}{c}{Figure A.3:\ $a_{k+1}-a_{k}\leq\delta$ at the optimum}%
\end{tabular}
\\[0pt]
\end{center}

\textbf{Part 2. The upper bound}

First, I\ introduce the following lemma, which is useful for deriving the
upper bound under MTR.

\subparagraph{Lemma A.2}

(i) Let $f:\mathbb{R}\rightarrow\mathbb{R}$ be a continuous function. Suppose
that for any $x\in\mathbb{R},$ there exists $\varepsilon_{x}>0$ s.t.
$f(t_{0})\leq f(t_{1})$ whenever $x\leq t_{0}<t_{1}<x+\varepsilon_{x}.$ Then
$f$ is a nondecreasing function in $\mathbb{R}$. (ii) If there exists
$\varepsilon_{x}>0$ for any $x\in\mathbb{R}$\ s.t. $f(t_{0})\geq f(t_{1})$
whenever $x-\varepsilon_{x}\leq t_{0}<t_{1}<x,$ then $f$ is a nonincreasing
function in $\mathbb{R}$.

\subparagraph{Proof of Lemma A.2}

Since the proof of (ii) is very similar to the proof of (i), I\ provide only
the proof for (i). Suppose not. There exist $a$ and $b$ in $\mathbb{R}$ with
$a<b$ s.t. $f(a)>f(b)$. Define $V=\left\{  x\in\left[  a,b\right]
;f(a)>f(x)\right\}  .$ Since $V$ is a nonempty set with $b\in V$ and bounded
below by $a$, $V$ has an infimum $x_{0}\in\left[  a,b\right]  .$ Since $f$ is
continuous, $f(x_{0})=f(a).$ Note that $\ a\leq x_{0}<b.$ Pick $\varepsilon
_{x_{0}}>0$ satisfying $f(t_{0})\leq f(t_{1})$ whenever $x_{0}\leq t_{0}%
<t_{1}<x_{0}+\varepsilon_{x_{0}}.$ Since $x_{0}$ is an infimum of the set $V$,
there exists $t\in\left(  x_{0,}x_{0}+\varepsilon_{x_{0}}\right)  $ s.t.
$f(x_{0})>f(t).$ This is a contradiction. Thus, for any $a<b,$ $f(a)\leq
f(b).$ ${\small \blacksquare}$

\bigskip I\ prove that in the dual representation%
\begin{align*}
& \underset{F\in\Pi\left(  F_{0},F_{1}\right)  }{\inf}\int\left\{
\boldsymbol{1}\left\{  y_{1}-y_{0}>\delta\right\}  +\lambda\left(
\boldsymbol{1}\left(  y_{1}<y_{0}\right)  \right)  \right\}  d\pi\\
& =\underset{\left(  \varphi,\psi\right)  \in\Phi_{c}}{\sup}\int\varphi\left(
y_{0}\right)  d\mu_{0}+\int\psi\left(  y_{1}\right)  d\mu_{1},
\end{align*}
the function $\varphi$ is nonincreasing. Note that under $\Pr\left(
Y_{1}=Y_{0}\right)  =0$, $\Pr\left(  Y_{1}\geq Y_{0}\right)  =\Pr\left(
Y_{1}>Y_{0}\right)  =1$, and recall that
\[
\varphi\left(  y_{0}\right)  =\underset{y_{1}\geq y_{0}}{\inf}\left\{
\left\{  y_{1}-y_{0}>\delta\right\}  -\psi\left(  y_{1}\right)  \right\}  .
\]
Pick any $\left(  y_{0}^{\prime},y_{1}^{\prime}\right)  $ with $y_{1}^{\prime
}>y_{0}^{\prime}$ in the optimal support of the joint distribution. For any
$h$ s.t. $0<h<y_{1}^{\prime}-y_{0}^{\prime},$%
\begin{align}
\varphi\left(  y_{0}^{\prime}+h\right)   & =\underset{y_{1}>y_{0}^{\prime}%
+h}{\inf}\left\{  \boldsymbol{1}\left\{  y_{1}-\left(  y_{0}^{\prime
}+h\right)  >\delta\right\}  -\psi\left(  y_{1}\right)  \right\}
\label{AAA1}\\
& \leq\boldsymbol{1}\left\{  y_{1}^{\prime}-\left(  y_{0}^{\prime}+h\right)
>\delta\right\}  -\psi\left(  y_{1}^{\prime}\right) \nonumber\\
& \leq\boldsymbol{1}\left\{  y_{1}^{\prime}-y_{0}^{\prime}>\delta\right\}
-\psi\left(  y_{1}^{\prime}\right) \nonumber\\
& =\varphi\left(  y_{0}^{\prime}\right)  ,\nonumber
\end{align}
The inequality in the second line of (A.13) is satisfied because
$y_{1}^{\prime}>\left(  y_{0}^{\prime}+h\right)  ,$ and the inequality in the
third line of (A.13) holds since $\boldsymbol{1}\left\{  y_{1}%
-y_{0}>\delta\right\}  $ is nonincreasing in $y_{0}.$ By Lemma A.2, $\varphi$
is nonincreasing on $\mathbb{R}$.

\begin{center}%
\begin{tabular}
[c]{l}%
\raisebox{-0pt}{\parbox[b]{2.9231in}{\begin{center}
\includegraphics[
natheight=3.281100in,
natwidth=4.135500in,
height=2.3246in,
width=2.9231in
]%
{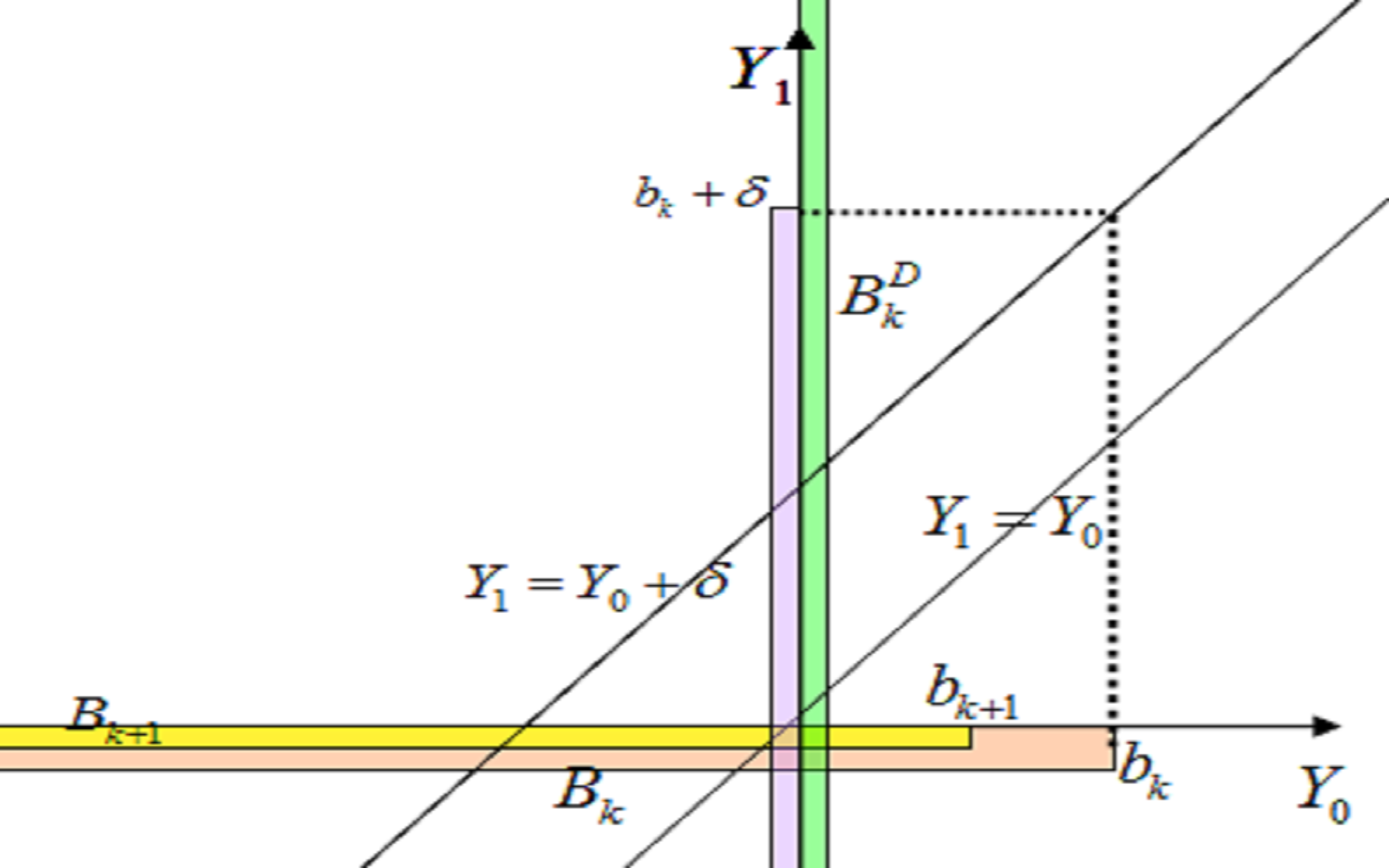}%
\\
{}%
\end{center}}}
\\
Figure A.4: $B_{k}^{D}$ for $B_{k}=\left(  -\infty,b_{k}\right)  $ and
$B_{k+1}=\left(  -\infty,b_{k+1}\right)  $%
\end{tabular}

\end{center}

Now, $B_{k}=\left\{  y\in\mathbb{R};\varphi>s+k\text{ }\right\}  =\left(
-\infty,b_{k}\right)  $ for each integer $k$, some $s\in(0,1]$ and $b_{k}%
\in\left[  -\infty,\infty\right]  ,$ in which $B_{k}=\phi$ for $b_{k}%
=-\infty.$ By Theorem 1, for each integer $k,$ $b_{k+1}\leq
b_{k}$\ and for $\delta>0,$%
\[
B_{k}^{D}=\left\{  y_{1}\in\mathbb{R};\exists y_{0}<b_{k}\text{ s.t. }0\leq
y_{1}-y_{0}<\delta\right\}  \cup\left\{  y_{1}\in\mathbb{R};\exists
y_{0}<b_{k+1}\text{ s.t. }y_{1}-y_{0}\geq\delta\right\}  .
\]
\newline If $b_{k}=-\infty,$ then $b_{k+1}=-\infty$ and so $B_{k}^{D}=\phi.$
For $b_{k}>-\infty,$ $B_{k}^{D}$ depends on the value of $b_{k+1}$ as
follows:
\[
B_{k}^{D}=\left\{
\begin{tabular}
[c]{ll}%
$\mathbb{R}$, & if $b_{k+1}>-\infty,$\\
$\left(  -\infty,b_{k}+\delta\right)  $, & if $b_{k+1}=-\infty.$%
\end{tabular}
\right.
\]
Pick any integer $k$. If $b_{k}=-\infty$, then%
\[
\max\left\{  \mu_{0}\left(  B_{k}\right)  -\mu_{1}\left(  B_{k}^{D}\right)
,0\right\}  =0.
\]
If $b_{k}>b_{k+1}>-\infty,$ then also
\[
\max\left\{  \mu_{0}\left(  B_{k}\right)  -\mu_{1}\left(  B_{k}^{D}\right)
,0\right\}  =0.
\]
If $b_{k}>b_{k+1}=-\infty,$ then%
\begin{align*}
& \max\left\{  \mu_{0}\left(  B_{k}\right)  -\mu_{1}\left(  B_{k}^{D}\right)
,0\right\} \\
& =\max\left\{  F_{0}\left(  b_{k}\right)  -F_{1}\left(  b_{k}+\delta\right)
,0\right\}  .
\end{align*}
Consequently, by Theorem 1, the sharp upper bound under MTR can
be written as%
\begin{align*}
F_{\Delta}^{U}\left(  \delta\right)   & =1-\underset{\left\{  B_{k}\right\}
_{k=-\infty}^{\infty}}{\sup}\sum\limits_{k=-\infty}^{\infty}\max\left\{
\mu_{0}\left(  B_{k}\right)  -\mu_{1}\left(  B_{k}^{D}\right)  ,0\right\} \\
& =1-\underset{b_{k}}{\sup}\max\left\{  F_{0}\left(  b_{k}\right)
-F_{1}\left(  b_{k}+\delta\right)  ,0\right\} \\
& =1+\underset{y}{\inf}\max\left\{  F_{1}\left(  y\right)  -F_{0}\left(
y-\delta\right)  ,0\right\}  .
\end{align*}
${\small \blacksquare}$

\subsection*{Proof of Corollary 2}

Since monotonicity of $\varphi$ can be shown very similarly as in the proof of
Corollary 1, I\ do not provide the proof. As given in Corollary
2, the sharp lower bound under concave treatment response is
identical to the sharp lower bound under MTR and the proof is also the same.
The sharp upper bound under convex treatment response is equal to the Makarov
upper bound by the same token as the upper bound under MTR. Thus, I\ do not
provide their proofs. Also, since the sharp lower bound under convex treatment
response is derived very similarly to the sharp upper bound under concave
treatment response, I\ provide a proof only for the sharp upper bound under
concave treatment response.

Consider\ a concave treatment response restriction $\Pr\left\{  \frac{Y_{0}%
-w}{t_{0}-t_{W}}\geq\frac{Y_{1}-Y_{0}}{t_{1}-t_{0}},Y_{1}\geq Y_{0}\geq
w\right\}  =1$ for any $w$ in the support of $W$ and $\left(  t_{1}%
,t_{0},t_{W}\right)  \in\mathbb{R}^{3}$ s.t. $t_{W}<t_{0}<t_{1}.$ The support
satisfying$\left\{  \frac{Y_{0}-w}{t_{0}-t_{W}}\geq\frac{Y_{1}-Y_{0}}%
{t_{1}-t_{0}},Y_{1}\geq Y_{0}\geq w\right\}  $ corresponds to the intersection
of the regions below the straight line $Y_{1}=\frac{t_{1}-t_{W}}{t_{0}-t_{W}%
}Y_{0}-\frac{t_{1}-t_{0}}{t_{0}-t_{W}}w$ and above the straight line
$Y_{1}=Y_{0}$ as shown in Figure A.5. Note that $\frac{t_{1}-t_{W}}%
{t_{0}-t_{W}}>1$ and the two straight lines intersect at $\left(  w,w\right)
$.

\begin{center}%
\begin{tabular}
[c]{c}%
{\includegraphics[
natheight=2.896300in,
natwidth=2.698200in,
height=2.9239in,
width=2.7259in
]%
{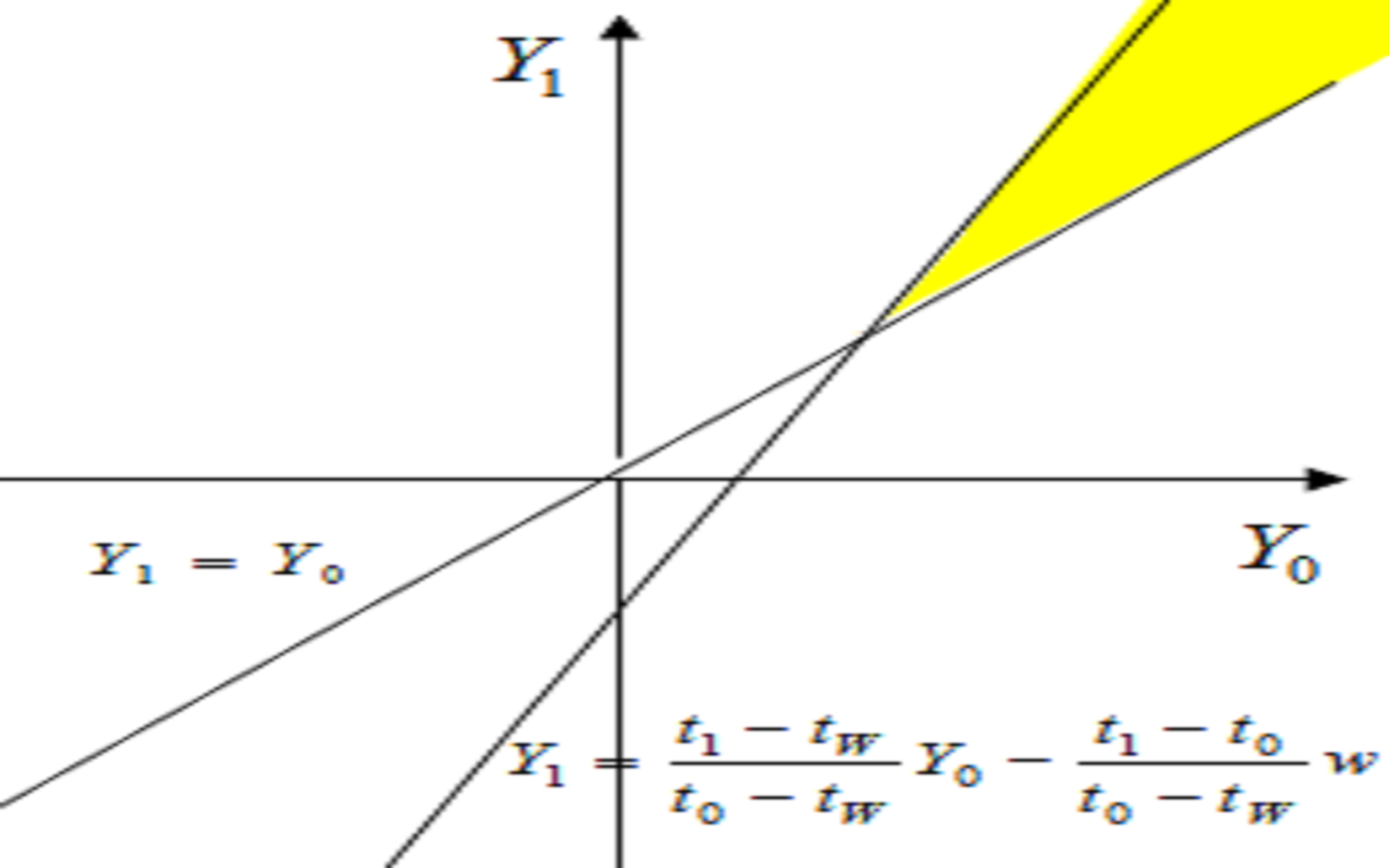}%
}
\\
Figure A.5: Support under concave treatment response
\end{tabular}

\end{center}

The function $\varphi$ can be readily shown to be nonincreasing. Thus, at the
optimum $B_{k}$ $=\left(  -\infty,b_{k}\right)  $ with $b_{k+1}\leq b_{k}$ and
$b_{k}\in\left[  -\infty,\infty\right]  $ for every integer $k$. By Theorem
1, for $\delta>0,$ $B_{k}^{D}$ is written as
\[
B_{k}^{D}=%
\begin{tabular}
[c]{l}%
$\left\{  y_{1}\in\mathbb{R}|\exists y_{0}<b_{k}\text{ s.t. }0\leq y_{1}%
-y_{0}<\delta\text{ and }\left(  t_{0}-t_{W}\right)  y_{1}-\left(  t_{1}%
-t_{W}\right)  y_{0}\leq-\left(  t_{1}-t_{0}\right)  w\right\}  $\\
$\cup\left\{  y_{1}\in\mathbb{R}|\exists y_{0}<b_{k+1}\text{ s.t. }y_{1}%
-y_{0}\geq\delta\text{ and }\left(  t_{0}-t_{W}\right)  y_{1}-\left(
t_{1}-t_{W}\right)  y_{0}\leq-\left(  t_{1}-t_{0}\right)  w\right\}  \text{
.}$%
\end{tabular}
\]

Note that $Y_{1}=Y_{0}+\delta$ and $Y_{1}=\frac{t_{1}-t_{W}}{t_{0}-t_{W}}%
Y_{0}-\frac{t_{1}-t_{0}}{t_{0}-t_{W}}w$ intersect at $\left(  \frac
{t_{0}-t_{W}}{t_{1}-t_{0}}\delta+y_{-1},\frac{t_{1}-t_{W}}{t_{1}-t_{0}}%
\delta+w\right)  .$ I\ consider the following three cases: a) $b_{k+1}\leq
b_{k}\leq\frac{t_{0}-t_{W}}{t_{1}-t_{0}}\delta+w,$ b) $b_{k+1}\leq\frac
{t_{0}-t_{W}}{t_{1}-t_{0}}\delta+w\leq b_{k},$ and c) $\frac{t_{0}-t_{W}%
}{t_{1}-t_{0}}\delta+w\leq b_{k+1}\leq b_{k}.$

\begin{center}%
\begin{tabular}
[c]{lll}%
{\includegraphics[
natheight=2.896300in,
natwidth=3.010400in,
height=1.7651in,
width=1.8334in
]%
{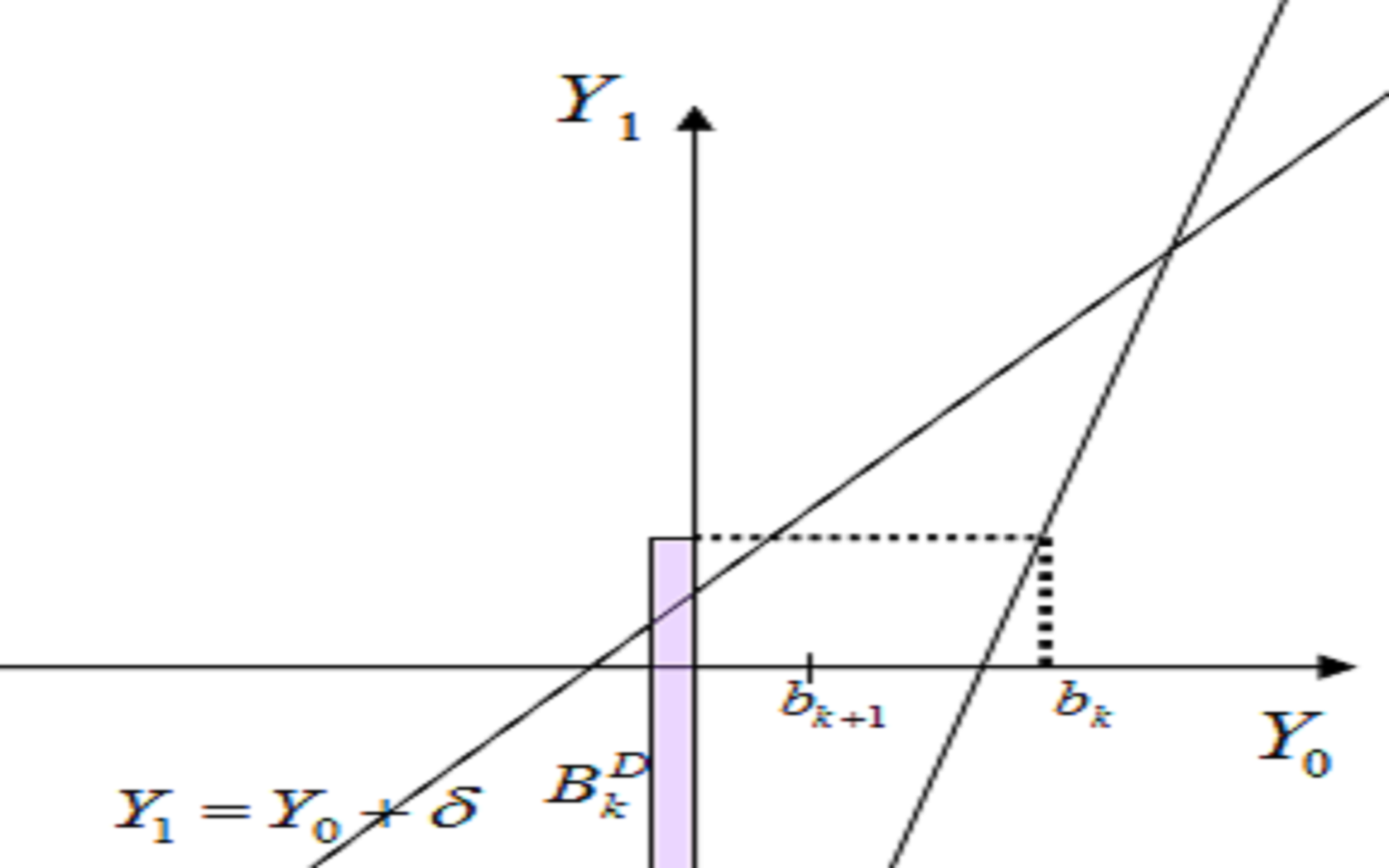}%
}
&
{\includegraphics[
natheight=3.135800in,
natwidth=3.114200in,
height=1.753in,
width=1.7409in
]%
{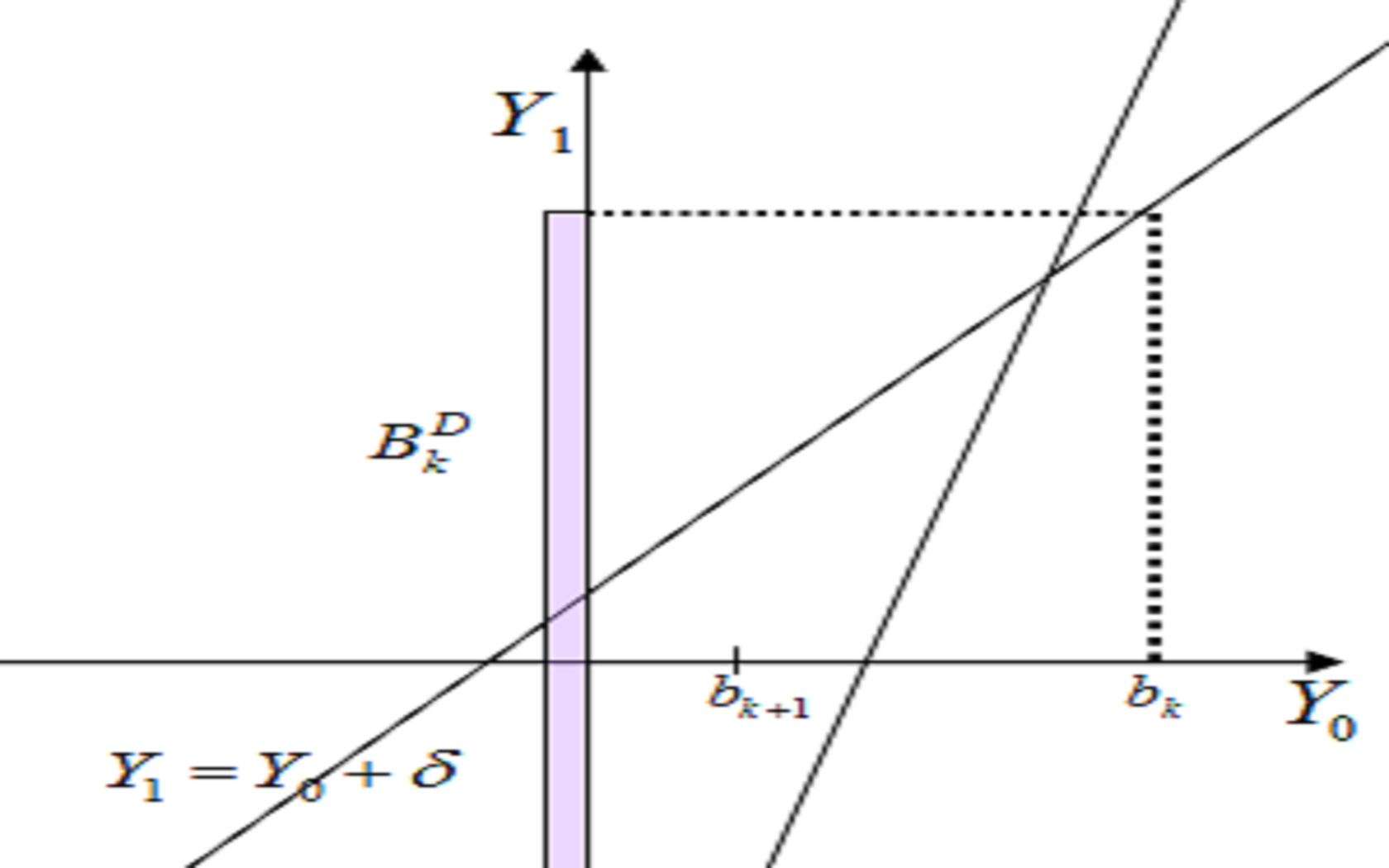}%
}
&
{\includegraphics[
natheight=3.416900in,
natwidth=3.885600in,
height=1.8057in,
width=2.0487in
]%
{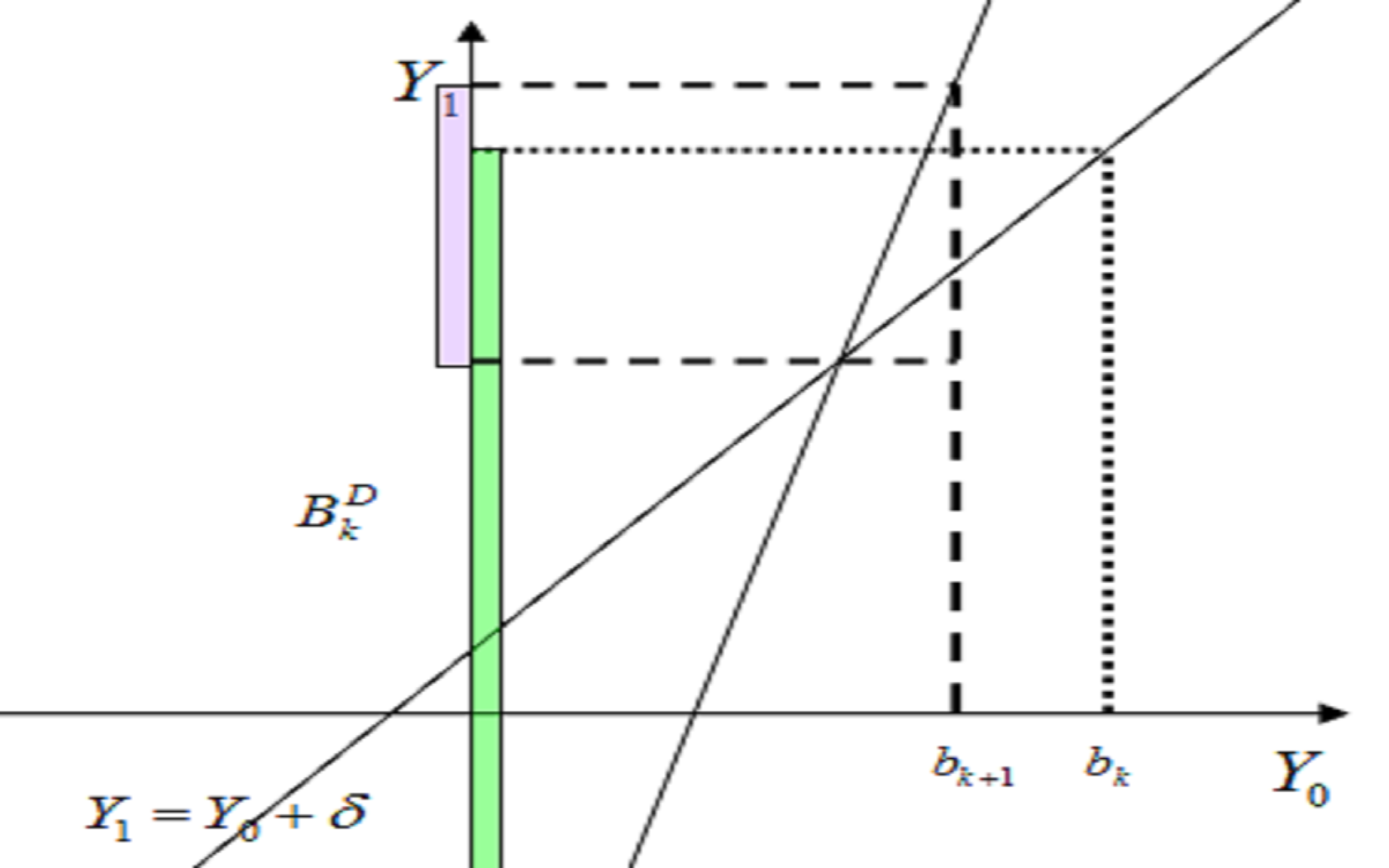}%
}
\\
\multicolumn{1}{c}{(a)} & \multicolumn{1}{c}{(b)} & \multicolumn{1}{c}{(c)}%
\end{tabular}

Figure. A.6: $B_{k}^{D}$ for $B_{k}=\left(  -\infty,b_{k}\right)  $ and
$B_{k+1}=\left(  -\infty,b_{k+1}\right)  $
\end{center}

\textbf{Case a)}{\Large \ }$b_{k+1}\leq b_{k}\leq\frac{t_{0}-t_{W}}%
{t_{1}-t_{0}}\delta+w$

If $b_{k+1}\leq b_{k}\leq\frac{t_{0}-t_{W}}{t_{1}-t_{0}}\delta+w,$ as
illustrated in Figure A.5(a), for any $y_{0}<b_{k+1}\leq\frac{t_{0}-t_{W}%
}{t_{1}-t_{0}}\delta+w,$ there exists no $y_{1}\in\mathbb{R}$ s.t.
$y_{1}-y_{0}\geq\delta$ and $\left(  t_{0}-t_{W}\right)  y_{1}-\left(
t_{1}-t_{W}\right)  y_{0}\leq-\left(  t_{1}-t_{0}\right)  w.$ Thus, for each
integer $k,$
\begin{align*}
B_{k}^{D}  & =\left(  -\infty,\frac{t_{1}-t_{W}}{t_{0}-t_{W}}b_{k}-\frac
{t_{1}-t_{0}}{t_{0}-t_{W}}w\right)  \cup\phi\\
& =\left(  -\infty,\frac{t_{1}-t_{W}}{t_{0}-t_{W}}b_{k}-\frac{t_{1}-t_{0}%
}{t_{0}-t_{W}}w\right)  .
\end{align*}

Let $\mu_{0,W}(\cdot|w)$ and $\mu_{1,W}(\cdot|w)$ denote conditional
distributions of $Y_{0}$ and $Y_{1}$ given $W=w$, while $F_{0,W}(\cdot|w)$ and
$F_{1,W}(\cdot|w)$ denote conditional distribution functions of $Y_{0}$ and
$Y_{1}$ given $W=w.$ Since $\Pr\left\{  \frac{Y_{0}-w}{t_{0}-t_{W}}\geq
\frac{Y_{1}-Y_{0}}{t_{1}-t_{0}}\right\}  =1,$ which is equivalent to
$\Pr\left\{  Y_{0}\geq\frac{t_{0}-t_{W}}{t_{1}-t_{W}}Y_{1}+\frac{t_{1}-t_{0}%
}{t_{1}-t_{W}}w\right\}  =1,$ implies
\[
F_{0,W}\left(  y|w\right)  \leq F_{1,W}\left(  \frac{t_{1}-t_{w}}{t_{0}-t_{w}%
}y-\frac{t_{1}-t_{0}}{t_{0}-t_{W}}w|w\right)  ,
\]
for each integer $k$,
\begin{align*}
& \mu_{0,W}\left(  B_{k}|w\right)  -\mu_{1,W}\left(  B_{k}^{D}|w\right) \\
& =F_{0,W}\left(  b_{k}|w\right)  -F_{1,W}\left(  \frac{t_{1}-t_{W}}%
{t_{0}-t_{W}}b_{k}-\frac{t_{1}-t_{0}}{t_{0}-t_{W}}w|w\right) \\
& \leq0.
\end{align*}

\textbf{Case b)}{\Large \ } $b_{k+1}\leq\frac{t_{0}-t_{W}}{t_{1}-t_{0}}%
\delta+w\leq b_{k}$

If $b_{k+1}\leq\frac{t_{0}-t_{W}}{t_{1}-t_{0}}\delta+w\leq b_{k},$ similar to
Case a, there exists no $y_{1}\in\mathbb{R}$ s.t. $y_{1}-y_{0}\geq\delta$ and
$\left(  t_{0}-t_{W}\right)  y_{1}-\left(  t_{1}-t_{W}\right)  y_{0}%
\leq-\left(  t_{1}-t_{0}\right)  w.$ Thus, for the same reason as in Case a,
\[
B_{k}^{D}=\left(  -\infty,\frac{t_{1}-t_{W}}{t_{0}-t_{W}}b_{k}-\frac
{t_{1}-t_{0}}{t_{0}-t_{W}}w\right)  ,
\]
and for every integer $k,$
\[
\mu_{0,W}\left(  B_{k}|w\right)  -\mu_{1,W}\left(  B_{k}^{D}|w\right)  \leq0.
\]

\textbf{Case c)}{\Large \ } $\frac{t_{0}-t_{W}}{t_{1}-t_{0}}\delta+w\leq
b_{k+1}\leq b_{k}$

If $\frac{t_{0}-t_{W}}{t_{1}-t_{0}}\delta+w\leq b_{k+1}\leq b_{k},$ then as
illustrated in Figure A.6(c),
\begin{align*}
B_{k}^{D}  & =\left(  -\infty,b_{k}+\delta\right)  \cup\left(  -\infty
,\frac{t_{1}-t_{W}}{t_{0}-t_{W}}b_{k+1}-\frac{t_{1}-t_{0}}{t_{0}-t_{W}%
}w\right) \\
& =\left(  -\infty,\max\left\{  b_{k}+\delta,\frac{t_{1}-t_{W}}{t_{0}-t_{W}%
}b_{k+1}-\frac{t_{1}-t_{0}}{t_{0}-t_{W}}w\right\}  \right)  .
\end{align*}
From Case a, b and c, it is innocuous to assume $\frac{t_{0}-t_{W}}%
{t_{1}-t_{0}}\delta+w\leq b_{k+1}\leq b_{k}$ for each integer $k.$

Furthermore, I\ show that it is innocuous to assume that $b_{k}+\delta
\leq\frac{t_{1}-t_{W}}{t_{0}-t_{W}}b_{k+1}-\frac{t_{1}-t_{0}}{t_{0}-t_{W}}w$
at the optimum. If there exists an integer $k$ s.t.
\[
b_{k}+\delta>\frac{t_{1}-t_{W}}{t_{0}-t_{W}}b_{k+1}-\frac{t_{1}-t_{0}}%
{t_{0}-t_{W}}w
\]
one can always construct $\left\{  \widetilde{B}_{k}\right\}  _{k=-\infty
}^{\infty}$ satisfying%
\begin{equation}
\sum\limits_{k=-\infty}^{\infty}\max\left\{  \mu_{0,W}\left(  B_{k}|w\right)
-\mu_{1,W}\left(  B_{k}^{D}|w\right)  ,0\right\}  \leq\sum\limits_{k=-\infty
}^{\infty}\max\left\{  \mu_{0,W}\left(  \widetilde{B}_{k}|w\right)  -\mu
_{1,W}\left(  \widetilde{B}_{k}^{D}|w\right)  ,0\right\}  ,\label{in3}%
\end{equation}
by defining $\widetilde{B}_{k}=\left(  -\infty,\widetilde{b}_{k}\right)  $ as
follows:
\begin{align*}
\widetilde{b}_{j}  & =b_{j}\text{ for }j\leq k,\\
\widetilde{b}_{k+1}  & =\frac{t_{0}-t_{W}}{t_{1}-t_{W}}\left(  b_{k}%
+\delta\right)  +\frac{t_{1}-t_{0}}{t_{1}-t_{W}}w,\\
\widetilde{b}_{j+1}  & =b_{j}\text{ for }j\geq k+1.
\end{align*}

\begin{center}
\bigskip%
\begin{tabular}
[c]{cc}%
{\includegraphics[
natheight=3.458400in,
natwidth=2.614300in,
height=2.7942in,
width=2.1197in
]%
{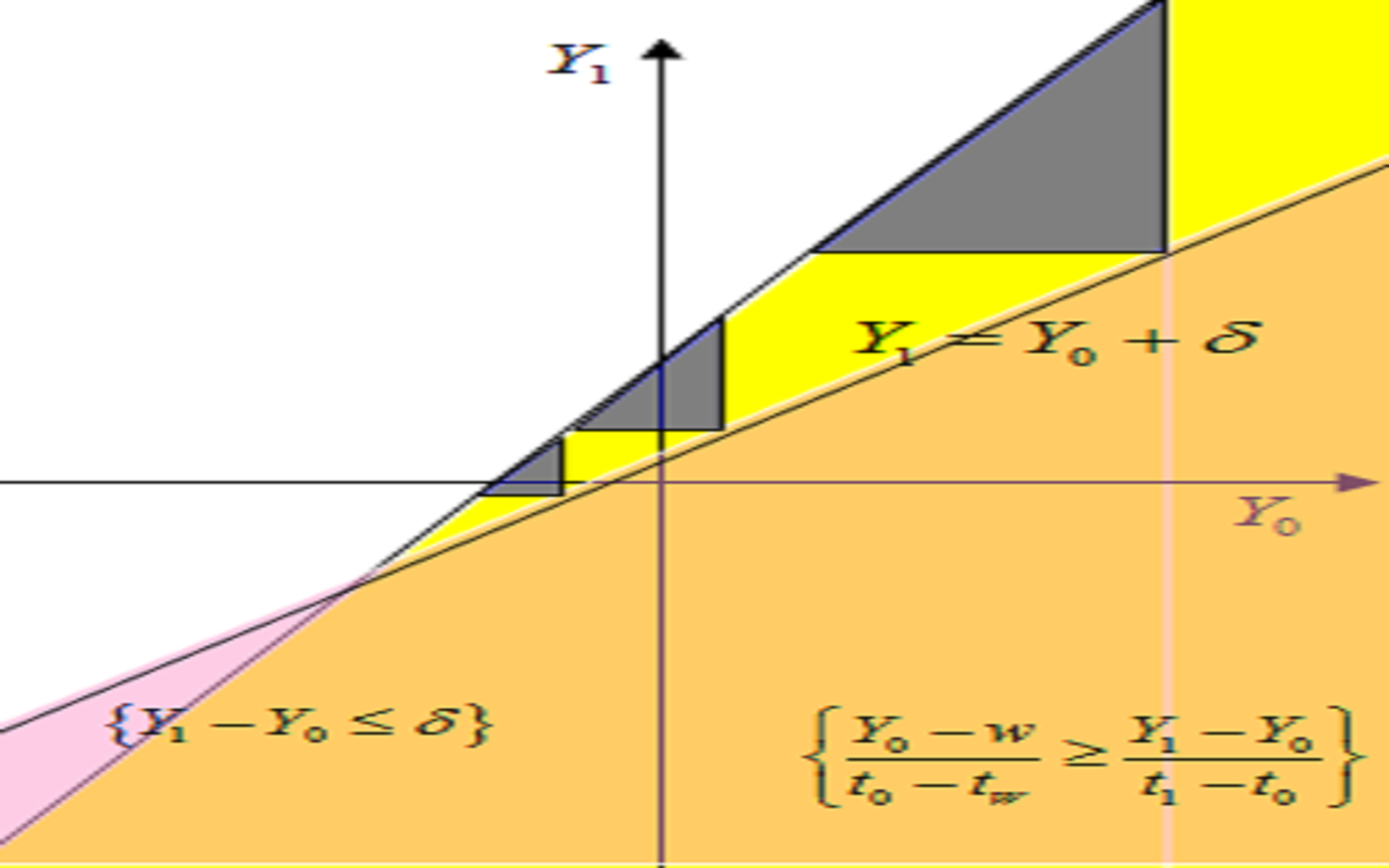}%
}
&
{\includegraphics[
natheight=3.479100in,
natwidth=2.489800in,
height=2.8106in,
width=2.0193in
]%
{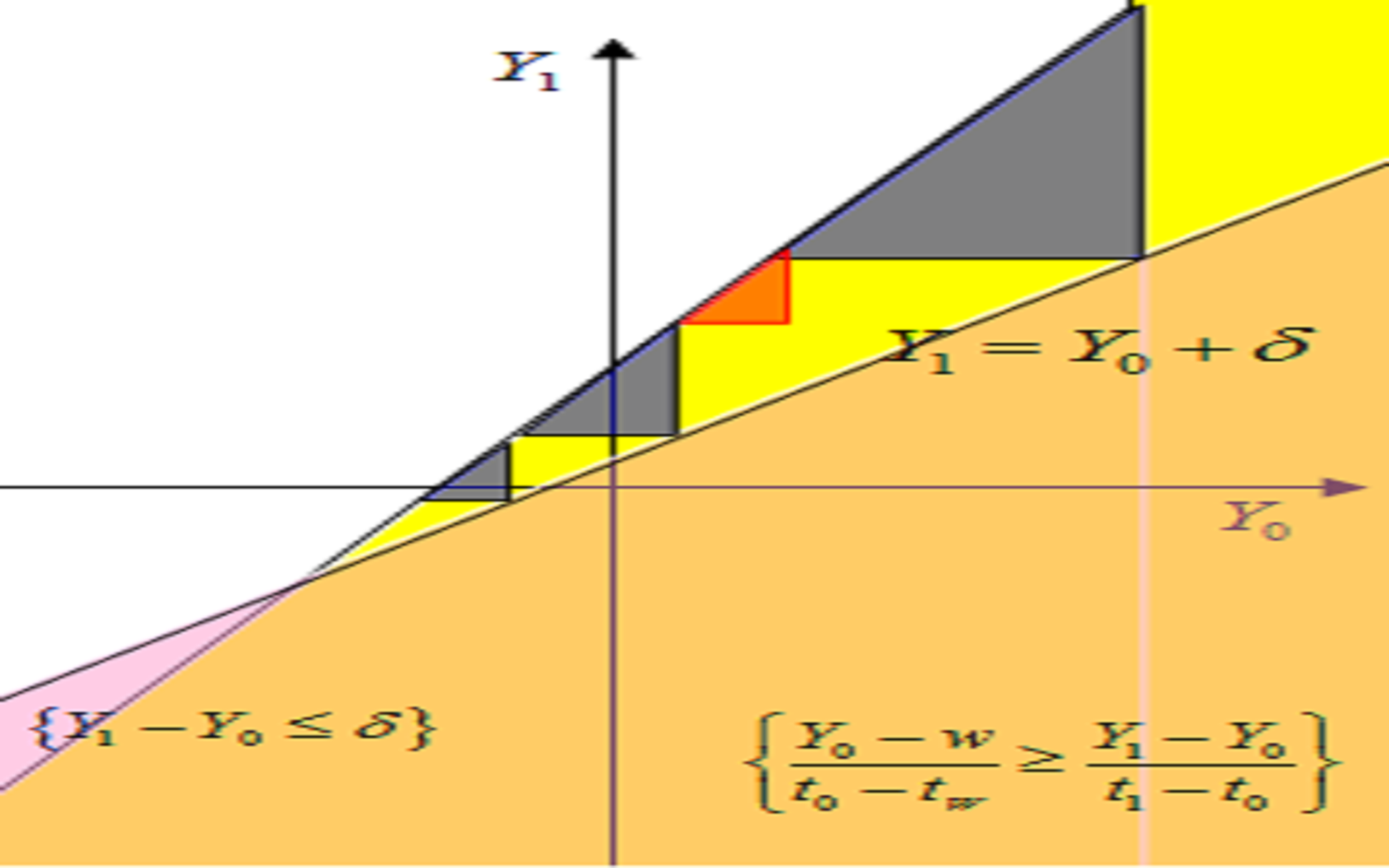}%
}
\\
(a) & (b)
\end{tabular}

Figure. A.7: $\sum\limits_{k=-\infty}^{\infty}\max\left\{  \mu_{0,W}\left(
B_{k}|w\right)  -\mu_{1,W}\left(  B_{k}^{D}|w\right)  ,0\right\}
{\small \leq}\sum\limits_{k=-\infty}^{\infty}\max\left\{  \mu_{0,W}\left(
\widetilde{B}_{k}|w\right)  -\mu_{1,W}\left(  \widetilde{B}_{k}^{D}|w\right)
,0\right\}  $
\end{center}

The inequality in (A.14) is illustrated in Figure A.7, which describes
\begin{align*}
& \sum\limits_{k=-\infty}^{\infty}\max\left\{  \mu_{0,W}\left(  B_{k}%
|w\right)  -\mu_{1,W}\left(  B_{k}^{D}|w\right)  ,0\right\}  ,\\
& \sum\limits_{k=-\infty}^{\infty}\max\left\{  \mu_{0,W}\left(  \widetilde
{B}_{k}|w\right)  -\mu_{1,W}\left(  \widetilde{B}_{k}^{D}|w\right)
,0\right\}
\end{align*}
in (a) and (b), respectively. Therefore, from consideration of Case a, b\ and
c,
\begin{align*}
& \underset{\left\{  B_{k}\right\}  _{k=-\infty}^{\infty}}{\sup}%
\sum\limits_{k=-\infty}^{\infty}\max\left\{  \mu_{0,W}\left(  B_{k}|w\right)
-\mu_{1,W}\left(  B_{k}^{D}|w\right)  ,0\right\} \\
& =\underset{\left\{  b_{k}\right\}  _{k=-\infty}^{\infty}}{\sup}%
\sum\limits_{k=-\infty}^{\infty}\max\left\{  F_{0,W}\left(  b_{k}|w\right)
-F_{1,W}\left(  \frac{t_{1}-t_{W}}{t_{0}-t_{W}}b_{k+1}-\frac{t_{1}-t_{0}%
}{t_{0}-t_{W}}w|w\right)  ,0\right\}
\end{align*}
where $\frac{t_{0}-t_{W}}{t_{1}-t_{0}}\delta+w\leq b_{k+1}\leq b_{k}%
.$\ Consequently, the sharp upper bound is written as follows: letting
$F_{\Delta,W}^{U}\left(  \delta|w\right)  $ be the sharp upper bound on
$\Pr\left(  Y_{1}-Y_{0}\leq\delta|W=w\right)  ,$
\begin{align*}
& F_{\Delta}^{U}\left(  \delta\right) \\
& =\int F_{\Delta,W}^{U}\left(  \delta|w\right)  dF_{W}\left(  w\right) \\
& =\int\left\{  1-\underset{\left\{  B_{k}\right\}  _{k=-\infty}^{\infty}%
}{\sup}\sum\limits_{k=-\infty}^{\infty}\max\left\{  \mu_{0,W}\left(
B_{k}|w\right)  -\mu_{1,W}\left(  B_{k}^{D}|w\right)  ,0\right\}  \right\}
dF_{W}\\
& =1+\int\underset{\left\{  b_{k}\right\}  _{k=-\infty}^{\infty}}{\inf}%
\sum\limits_{k=-\infty}^{\infty}\min\left\{  F_{1,W}\left(  \frac{t_{1}-t_{W}%
}{t_{0}-t_{W}}b_{k+1}-\frac{t_{1}-t_{0}}{t_{0}-t_{W}}w|w\right)
-F_{0,W}\left(  b_{k}|w\right)  ,0\right\}  dF_{W}%
\end{align*}
where $\frac{t_{0}-t_{W}}{t_{1}-t_{0}}\delta+w\leq b_{k+1}\leq b_{k}.$
${\small \blacksquare}$

\setcounter{figure}{0} \renewcommand{\thefigure}{B.\arabic{figure}}
\setcounter{table}{0} \renewcommand{\thetable}{B.\arabic{table}}
\setcounter{equation}{0} \renewcommand{\theequation}{B.\arabic{equation}}

\section*{Appendix B}

Appendix B presents the procedure used to compute the sharp lower bound under
MTR in Section 4 and Section 5. The following lemma is useful for reducing
computational costs:

\begin{description}
\item[Lemma B.1] Let
\begin{align*}
\left\{  a_{k}\right\}  _{k=-\infty}^{\infty}  & \in\underset{\left\{
a_{k}\right\}  _{k=-\infty}^{\infty}\in\mathcal{A}_{\delta}}{\arg\max}%
\sum\limits_{k=-\infty}^{\infty}\max\left\{  F_{1}\left(  a_{k+1}\right)
-F_{0}\left(  a_{k}\right)  ,0\right\}  ,\\
\text{where }\mathcal{A}_{\delta}  & =\left\{  \left\{  a_{k}\right\}
_{k=-\infty}^{\infty};0\leq a_{k+1}-a_{k}\leq\delta\text{ for each integer
}k\right\}  .
\end{align*}
It is innocuous to assume that $\left\{  a_{k}\right\}  _{k=-\infty}^{\infty}
$ satisfies $a_{k+2}-a_{k}>\delta$ for each integer $k$.
\end{description}

\begin{proof}
I\ will show that for any sequence $\left\{  a_{k}\right\}  _{k=-\infty
}^{\infty}\in\mathcal{A}_{\delta}$ satisfying $a_{k+2}-a_{k}\leq\delta$ for
some integer $k,$ one can construct $\left\{  \widetilde{a}_{k}\right\}
_{k=-\infty}^{\infty}\in\mathcal{A}_{\delta}$ with $\widetilde{a}%
_{k+2}-\widetilde{a}_{k}>\delta$ for each integer $k$ and$\ $
\[
\sum\limits_{k=-\infty}^{\infty}\max\left\{  F_{1}\left(  a_{k+1}\right)
-F_{0}\left(  a_{k}\right)  ,0\right\}  \leq\sum\limits_{k=-\infty}^{\infty
}\max\left\{  F_{1}\left(  \widetilde{a}_{k+1}\right)  -F_{1}\left(
\widetilde{a}_{k}\right)  ,0\right\}  .
\]
Suppose that there exists an integer $l$ s.t. $a_{l+2}-a_{l}\leq\delta.$ Let
\begin{align*}
\widetilde{a}_{k}  & =a_{k}\text{ for }k\leq l,\\
\widetilde{a}_{k}  & =a_{k+1}\text{ for }k\geq l+1.
\end{align*}
Then
\begin{align*}
& \sum\limits_{k=-\infty}^{\infty}\max\left\{  F_{1}\left(  a_{k+1}\right)
-F_{0}\left(  a_{k}\right)  ,0\right\} \\
& =\sum\limits_{k=-\infty}^{l-1}\max\left\{  F_{1}\left(  a_{k+1}\right)
-F_{0}\left(  a_{k}\right)  ,0\right\}  +\max\left\{  F_{1}\left(
a_{l+1}\right)  -F_{0}\left(  a_{l}\right)  ,0\right\} \\
& +\max\left\{  F_{1}\left(  a_{l+2}\right)  -F_{0}\left(  a_{l+1}\right)
,0\right\}  +\sum\limits_{k=l+2}^{\infty}\max\left\{  F_{1}\left(
a_{k+1}\right)  -F_{0}\left(  a_{k}\right)  ,0\right\} \\
& \leq\sum\limits_{k=-\infty}^{l-1}\max\left\{  F_{1}\left(  a_{k+1}\right)
-F_{0}\left(  a_{k}\right)  ,0\right\}  +\max\left\{  F_{1}\left(
a_{l+2}\right)  -F_{0}\left(  a_{l}\right)  ,0\right\} \\
& +\sum\limits_{k=l+2}^{\infty}\max\left\{  F_{1}\left(  a_{k+1}\right)
-F_{0}\left(  a_{k}\right)  ,0\right\} \\
& =\sum\limits_{k=-\infty}^{\infty}\max\left\{  F_{1}\left(  \widetilde
{a}_{k+1}\right)  -F_{0}\left(  \widetilde{a}_{k}\right)  ,0\right\}  .
\end{align*}
The inequality in the fourth line holds because MTR implies stochastic
dominance of $Y_{1}$ over $Y_{0}$. This is illustrated in Figure A.3(a) and
(b), where the sum of the lower bound on each triangle is equal to
$\sum\limits_{k=-\infty}^{\infty}\max\left\{  F_{1}\left(  a_{k+1}\right)
-F_{0}\left(  a_{k}\right)  ,0\right\}  $ and $\sum\limits_{k=-\infty}%
^{\infty}\max\left\{  F_{1}\left(  \widetilde{a}_{k+1}\right)  -F_{0}\left(
\widetilde{a}_{k}\right)  ,0\right\}  ,$ respectively.\newline%
\begin{tabular}
[c]{ll}%
$%
{\includegraphics[
natheight=2.885000in,
natwidth=3.510300in,
height=2.047in,
width=2.4855in
]%
{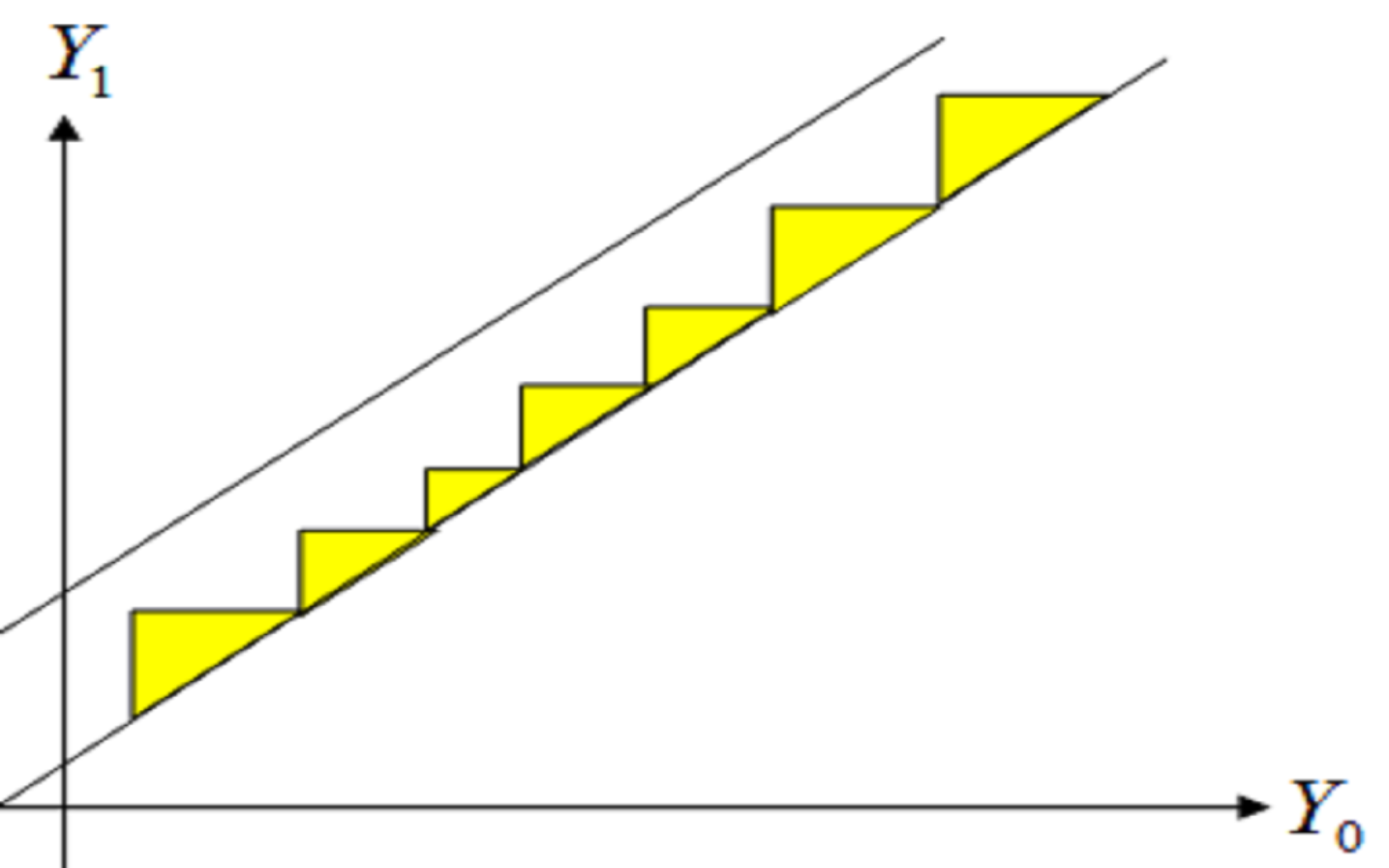}%
}
$ & $%
{\includegraphics[
natheight=2.958500in,
natwidth=3.551800in,
height=2.0989in,
width=2.514in
]%
{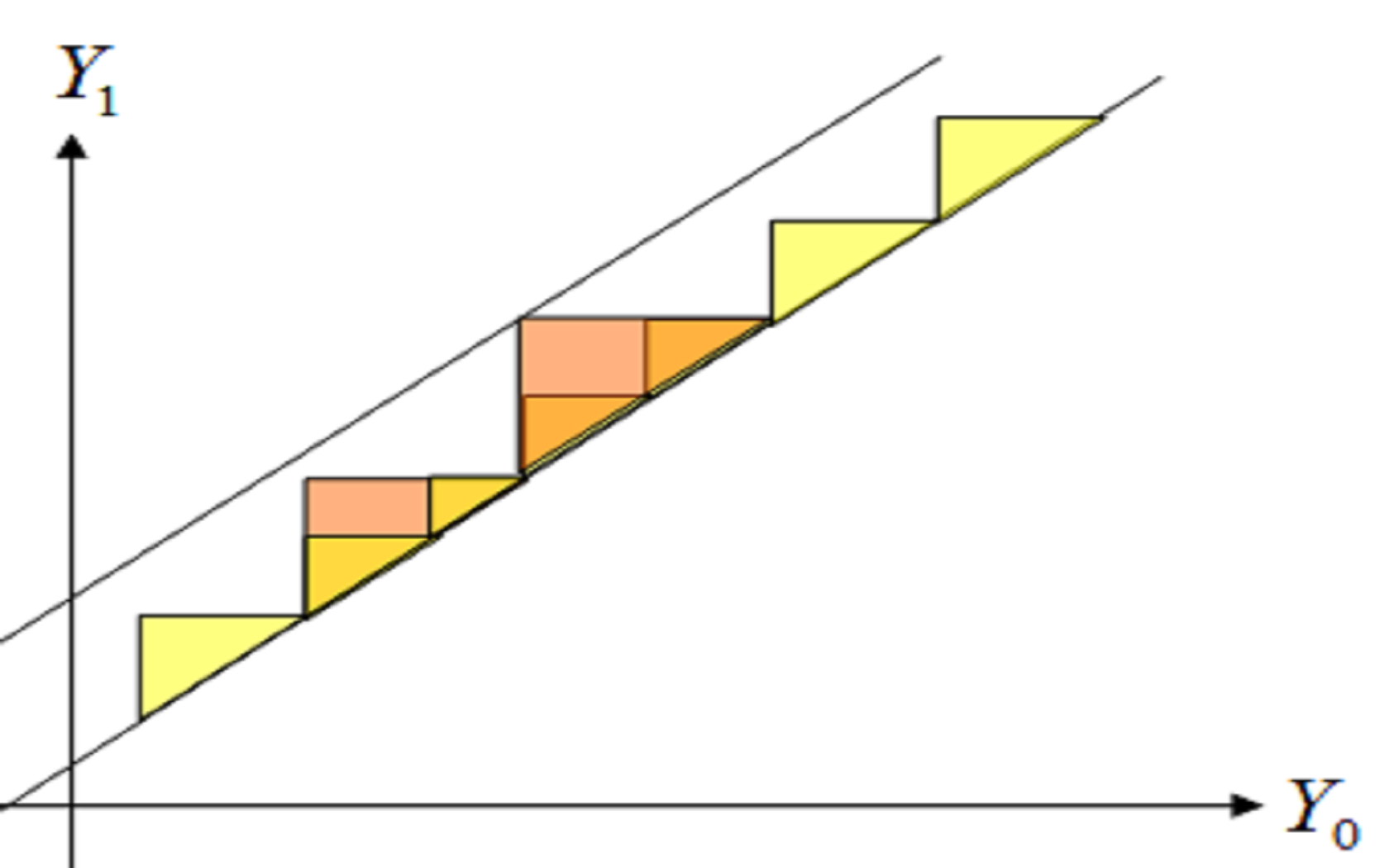}%
}
$\\
\multicolumn{1}{c}{(a)} & \multicolumn{1}{c}{(b)}\\
\multicolumn{2}{c}{Figure B.1: $a_{k+2}-a_{k}>\delta$ at the optimum}%
\end{tabular}
\newline Therefore, it is innocuous to assume $a_{k+2}-a_{k}>\delta$ for every
integer $k$ at the optimum.
\end{proof}

Now I\ present the constrained optimization procedure to compute the sharp
lower bound under MTR. I\ pay particular attention to the special case where
$a_{k+1}-a_{k}=\delta$ for each integer $k$ at the optimum. In this case, the
lower bound reduces to
\begin{equation}
\underset{0\leq y\leq\delta}{\sup}\sum_{k=-\infty}^{\infty}\max\left(
F_{1}\left(  y+\left(  k+1\right)  \delta\right)  -F_{0}\left(  y+k\delta
\right)  ,0\right)  ,\label{b.1}%
\end{equation}
and computation of (B.1) poses a simple one-dimensional optimization problem.

Let
\[
V\left(  \delta\right)  =\underset{0\leq y\leq\delta}{\sup}\sum_{k=-\infty
}^{\infty}\max\left(  F_{1}\left(  y+\left(  k+1\right)  \delta\right)
-F_{0}\left(  y+k\delta\right)  ,0\right)  ,
\]
and%
\[
V_{K}\left(  \delta\right)  =\underset{y\in\left\{  y^{\ast}+k\delta\right\}
_{k=-\infty}^{\infty}}{\max}\sum_{k=-K}^{K}\max\left(  F_{1}\left(  y+\left(
k+1\right)  \delta\right)  -F_{0}\left(  y+k\delta\right)  ,0\right)  ,
\]
where $y^{\ast}\in\underset{0\leq y\leq\delta}{\arg\max}\sum_{k=-\infty
}^{\infty}\max\left(  F_{1}\left(  y+\left(  k+1\right)  \delta\right)
-F_{0}\left(  y+k\delta\right)  ,0\right)  $ and $K$ is a nonnegative integer.

\begin{description}
\item[Step 1.] Compute $V\left(  \delta\right)  .$

\item[Step 2.] To further reduce computational costs, set $K$ to be a
nonnegative integer satisfying $\left\vert V\left(  \delta\right)
-V_{K}\left(  \delta\right)  \right\vert <\varepsilon$ for small
$\varepsilon>0.\footnote{I\ put $\varepsilon=10^{-5}$ for the implementation
in Section 4 and Section 5.}$

\item[Step 3.] For $J=K$, solve the following optimization problem:%
\begin{equation}
\underset{\left\{  a_{k}\right\}  _{k=-J}^{J}\in\mathcal{S}_{\delta}%
^{J,K}\left(  \widehat{y}\right)  }{\sup}\sum\limits_{k=-J}^{J}\max\left\{
F_{1}\left(  a_{k+1}\right)  -F_{0}\left(  a_{k}\right)  ,0\right\}
,\label{b.2}%
\end{equation}
where%
\begin{align*}
\mathcal{S}_{\delta}^{J,K}\left(  y\right)   & =\left\{
\begin{array}
[c]{c}%
\left\{  a_{k}\right\}  _{k=-J}^{J};a_{J}\leq y+K\delta,a_{-J}\geq
y-K\delta,\text{ }0\leq a_{k+1}-a_{k}\leq\delta,\\
\delta<a_{k+2}-a_{k}\text{ for each integer }k
\end{array}
\right\}  ,\\
\widehat{y}  & =\underset{y\in\left\{  y^{\ast}+k\delta\right\}  _{k=-\infty
}^{\infty}}{\arg\max}\sum_{k=-K}^{K}\max\left(  F_{1}\left(  y+\left(
k+1\right)  \delta\right)  -F_{0}\left(  y+k\delta\right)  ,0\right)  .
\end{align*}

\item[Step 4.] Repeat Step 3 for $J=K+1,\ldots,2K.\footnote{By Lemma B.1,
I$\ $considered $J=K,$ $K+1,\ldots,2K$ for the sequence $\left\{
a_{k}\right\}  _{k=-J}^{J}$ and compared the values of local maxima achieved
by $\left\{  a_{k}\right\}  _{k=-J}^{J}$ with $V_{K}\left(  \delta\right)  $}$
\end{description}

It is not straightforward to solve the problem (B.2) numerically in Step
3; the function $\max\{x,0\}$ is nondifferentiable. Furthermore in practice,
marginal distribution functions are often estimated in a complicated form to
compute their Jacobian and Hessian. To overcome this problem, I\ approximate
the nondifferentiable function $\max\{x,0\}$ with a smooth function $\frac
{x}{1+\exp\left(  -x/h\right)  }$ for small $h>0$ and marginal distribution
functions with finite normal mixtures $\sum\limits_{i}a_{i}\Phi\left(
\frac{x-\mu_{i}}{\sigma_{i}}\right)  ,$ which makes it substantially simple to
evaluate the Jacobian and Hessian of the objective function at any
point.\footnote{I\ used the Kolmogorov-Smirnov test to determine the number of
components in the mixture model. I\ increased the order of the mixture model
from one until the test does not reject the null that the two distribution
functions are identical. In the numerical example, I\ used one to three
components for $9$ different pairs of $(k_{1},k_{2})$ considered in Section 4
and I\ used three for the empirical application. For each mixture model that
I\ used to approximate the marginal distributions, the null hypothesis that
two distribution functions are identical was not rejected with pvalue$>0.99$.}%

\begin{tabular}
[c]{ll}%
{\includegraphics[
natheight=5.833200in,
natwidth=7.778100in,
height=1.6613in,
width=2.7501in
]%
{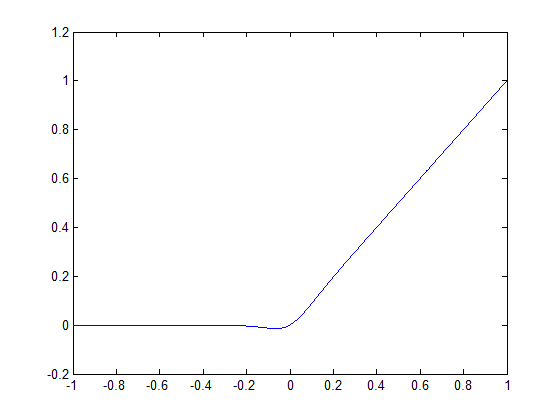}%
}
&
{\includegraphics[
natheight=5.833200in,
natwidth=7.778100in,
height=1.6613in,
width=2.7501in
]%
{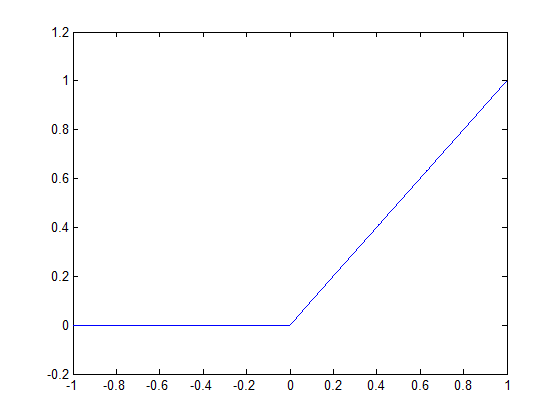}%
}
\\
\multicolumn{1}{c}{(a) $h=0.05$} & \multicolumn{1}{c}{(b) $h=0.01$}\\
\multicolumn{2}{c}{Figure B.2: Approximation of $\max\{x,0\}$ and $\frac
{x}{1+\exp\left(  -x/h\right)  }$}%
\end{tabular}

I\ used Knitro to solve the optimization problem using the smoothed functions.
Knitro is a constrained nonlinear optimization software.\footnote{Recently
Knitro has been often used to solve large-dimensional constrained optimization
problems in the literature including Conlon (2012), Dub\'{e} et al. (2012) and
Galichon and Salani\'{e} (2012). See Byrd et al. (2006) for details.} In
optimization, I\ considered the constraints that $0\leq a_{k+1}-a_{k}%
\leq\delta$ and $\delta<a_{k+2}-a_{k}$\ for each integer $k,$and I\ fed the
Jacobian and the Hessian of the Lagrangian into Knitro. Since the objective
function in the optimization is not convex,\ it is likely to have multiple
local maxima. I\ randomly generated initial values 90-200 times using the
"multistart" feature in Knitro.

The numerical optimization results substantially depend on the initial values,
which is the evidence of multiple local maxima and surprisingly, the values of
the objective function at all these local maxima were lower than $V_{K}\left(
\delta\right)  $ in both Section 4 and Section 5. Based on the numerical
evidence, it appears that the global maximum for both Section 4 and Section 5
is achieved or well approximated when $a_{k+1}-a_{k}=\delta$ for each integer
$k$. It remains to show under which conditions on the joint distribution or
marginal distributions the sharp lower bound is indeed achieved when
$a_{k+1}-a_{k}=\delta$\ for each integer $k$.

\setcounter{figure}{0} \renewcommand{\thefigure}{C.\arabic{figure}}
\setcounter{table}{0} \renewcommand{\thetable}{C.\arabic{table}}
\setcounter{equation}{0}
\renewcommand{\theequation}{C.\arabic{equation}}\clearpage

\section*{Appendix C}

Appendix C reports the empirical results which are not documented in Section
5. I report the regression tables for the estimation results for the equations
(15), (16) and (17).

\begin{center}
Table C.1: Probit estimation of the first stage regression%

\begin{tabular}
[c]{lc|lc}\hline\hline
\multicolumn{4}{l}{{\small Dependent Variable: nonsmoking indicator }$D$%
}\\\hline
{\small Tax increase in 1st trimester} & {\small 0.0331} & {\small Age 41+} &
{\small 0.0424}\\
& {\small (0.0013)} &  & {\small (0.0080)}\\
{\small Married} & {\small 0.1270} & {\small High school grad} &
{\small 0.0602}\\
& {\small (0.0019)} &  & {\small (0.0021)}\\
{\small Hispanic} & {\small 0.1214} & {\small Some college} & {\small 0.1361}%
\\
& {\small (0.0027)} &  & {\small (0.0024)}\\
{\small Black} & {\small 0.1551} & {\small College grad.} & {\small 0.2571}\\
& {\small (0.0026)} &  & {\small (0.0029)}\\
{\small Age 2125} & {\small -0.0483} & {\small Post grad.} & {\small 0.2870}\\
& {\small (0.0025)} &  & {\small (0.0035)}\\
{\small Age 2630} & {\small -0.3484} & {\small Adequate care} &
{\small 0.0375}\\
& {\small (0.0027)} &  & {\small (0.0039)}\\
{\small Age 3135} & {\small -0.0174} & {\small Intermediate care} &
\multicolumn{1}{l}{{\small 0.0188}}\\
& {\small (0.0030)} &  & \multicolumn{1}{l}{{\small (0.0041)}}\\
{\small Age 3640} & {\small 0.0078} &  & \multicolumn{1}{l}{}\\
& {\small (0.0037)} &  & \multicolumn{1}{l}{}\\\hline\hline
\end{tabular}

\end{center}

{\scriptsize Note: The table reports the change in the probit
response function due to a change in the indicator variable, with the rest of
the covariates evaluated at the mean. The specification also includes
indicators for birth orders, weight gains and medical risk factors. Robust
standard errors are reported in parentheses.}

\clearpage

\begin{center}
Table C.2: Series estimation of the second stage regression%

\begin{tabular}
[c]{lc|ll|lc}\hline\hline
\multicolumn{6}{l}{{\small Dependent Variable: birth weight (grams)}}\\\hline
$\widehat{p}$ & {\small 1106.07} & $\widehat{p}{\small \times}$%
{\small intermediate care} & {\small -289.97} & {\small Married} &
{\small 46.77}\\
& {\small (168.72)} &  & {\small (135.93)} &  & {\small (6.55)}\\
$\widehat{p}^{2}$ & {\small -647.97} & $\widehat{p}^{2}\times$%
{\small Hispanic} & {\small 295.44} & {\small Hispanic} & {\small -135.88}\\
& {\small (128.59)} &  & {\small (104.16)} &  & {\small (50.80)}\\
$\widehat{p}\times${\small Hispanic} & {\small -209.42} & $\widehat{p}%
^{2}\times${\small black} & {\small 253.63} & {\small Black} &
{\small -294.97}\\
& {\small (145.39)} &  & {\small (84.89)} &  & {\small (40.84)}\\
$\widehat{p}\times${\small black} & {\small -58.92} & $\widehat{p}^{2}\times
${\small age 2125} & {\small 206.27} & {\small Age 2125} & {\small 39.82}\\
& {\small (117.00)} &  & {\small (76.73)} &  & {\small (32.52)}\\
$\widehat{p}\times${\small age 2125} & {\small -179.04} & $\widehat{p}%
^{2}\times${\small age 2630} & {\small 280.49} & {\small Age 2630} &
{\small 25.60}\\
& {\small (100.02)} &  & {\small (77.35)} &  & {\small (32.72)}\\
$\widehat{p}\times${\small age 2630} & {\small -217.70} & $\widehat{p}%
^{2}\times${\small age 3135} & {\small 389.70} & {\small Age 3135} &
{\small 31.38}\\
& {\small (100.81)} &  & {\small (82.29)} &  & {\small (34.66)}\\
$\widehat{p}\times${\small age 3135} & {\small -327.82} & $\widehat{p}%
^{2}\times${\small age 3640} & {\small 311.94} & {\small Age 3640} &
{\small -11.43}\\
& {\small (107.27)} &  & {\small (108.41)} &  & {\small (46.52)}\\
$\widehat{p}\times${\small age 3640} & {\small -230.64} & $\widehat{p}%
^{2}\times${\small age 41+} & {\small -18.82} & {\small Age 41+} &
{\small -139.83}\\
& {\small (144.42)} &  & {\small (265.15)} &  & {\small (119.87)}\\
$\widehat{p}\times${\small age 41+} & {\small 198.31} & $\widehat{p}^{2}%
\times${\small high school grad.} & {\small -155.27} & {\small High school
grad.} & {\small 45.12}\\
& {\small (366.20)} &  & {\small (64.56)} &  & {\small (24.77)}\\
$\widehat{p}\times${\small high school grad.} & {\small 81.92} & $\widehat
{p}^{2}{\small \times}${\small some college} & {\small -197.64} & {\small Some
college} & {\small 90.66}\\
& {\small (79.87)} &  & {\small (81.64)} &  & {\small (30.41)}\\
$\widehat{p}{\small \times}${\small some college} & {\small 82.72} &
$\widehat{p}^{2}{\small \times}${\small college grad.} & {\small -16.43} &
{\small College grad.} & {\small 198.00}\\
& {\small (99.82)} &  & {\small (177.45)} &  & {\small (68.02)}\\
$\widehat{p}{\small \times}${\small college grad.} & {\small -174.91} &
$\widehat{p}^{2}{\small \times}${\small post grad.} & {\small -410.44} &
{\small Post grad.} & {\small 0.74}\\
& {\small (233.59)} &  & {\small (265.75)} &  & {\small (118.50)}\\
$\widehat{p}{\small \times}${\small post grad.} & {\small 392.50} &
$\widehat{p}^{2}{\small \times}${\small adequate care} & {\small 357.39} &
{\small Adequate care} & {\small 237.31}\\
& {\small (373.54)} &  & {\small (105.89)} &  & {\small (37.10)}\\
$\widehat{p}{\small \times}${\small adequate care} & {\small -520.47} &
$\widehat{p}^{2}{\small \times}${\small intermediate care} & {\small 198.06} &
{\small Intermediate care} & {\small 123.90}\\
& {\small (127.37)} &  & {\small (112.30)} &  & {\small (39.95)}\\\hline\hline
\end{tabular}

\end{center}

{\scriptsize Note :The table reports the second stage resgression estimates
for the effect of smoking cessation on infant birth weight. }$\widehat{p}%
${\scriptsize \ denotes the propensity score estimate in the first stage
probit regression. The specification also includes indicators for birth
orders, weight gains and medical risk factors. Robust standard errors are
reported in parentheses.}

\clearpage  

\begin{center}
Table C.3: Quantile regression

Dependent variable: birth weight (grams)%

\begin{tabular}
[c]{lccccc}\hline\hline
& \multicolumn{5}{c}{{\small Quantile}}\\\cline{2-6}
& {\small .15} & {\small .25} & {\small .50} & {\small .75} & {\small .85}%
\\\hline
{\small D (nonsmoking)} & {\small 444.87} & {\small 462.54} & {\small 673.80}
& {\small 259.75} & {\small 365.19}\\
& ({\small 3.32)} & ({\small 2.76)} & ({\small 2.87)} & ({\small 2.77)} &
({\small 3.29)}\\
{\small D*Hispanic} & {\small -72.78} & {\small -195.84} & {\small -502.80} &
{\small -357.08} & {\small -416.70}\\
& ({\small 0.17)} & ({\small 0.35)} & ({\small 0.25)} & ({\small 0.56)} &
(0{\small .29)}\\
{\small D*black} & {\small 259.65} & {\small 474.71} & {\small 46.40} &
{\small -244.92} & {\small -256.61}\\
& ({\small 0.64)} & ({\small 0.25)} & {\small (0.53)} & ({\small 2.14)} &
(0{\small .34)}\\
{\small D*high school grad} & {\small 158.91} & {\small 317.26} &
{\small 203.20} & {\small 34.25} & {\small 36.17}\\
& ({\small 0.24)} & ({\small 0.17)} & (0{\small .30)} & (0{\small .24)} &
(0{\small .38)}\\
{\small D*some college} & {\small 208.87} & {\small 365.21} & {\small 347.40}
& {\small 149.00} & {\small 85.70}\\
& ({\small 0.45)} & ({\small 0.35)} & {\small (0.52)} & (0{\small .38)} &
(0{\small .46)}\\
{\small D*college graduate} & {\small -34.09} & {\small 87.05} &
{\small 305.20} & {\small 542.25} & {\small 324.17}\\
& ({\small 0.82)} & ({\small 0.78)} & {\small (0.83)} & ({\small 0.93)} &
({\small 1.56)}\\
{\small D*post graduate} & {\small 97.57} & {\small 233.63} & {\small 260.60}
& {\small 29.1667} & {\small -209.25}\\
& ({\small 2.40)} & ({\small 1.32)} & ({\small 1.16)} & ({\small 1.08)} &
({\small 2.14)}\\
{\small D*age 2125} & {\small 276.78} & {\small 65.91} & {\small 296.00} &
{\small 22.25} & {\small 107.33}\\
& ({\small 0.24)} & ({\small 0.24)} & {\small (0.32)} & {\small (0.31)} &
{\small (0.39)}\\
{\small D*age 2630} & {\small -71.96} & {\small -224.44} & {\small -332.20} &
{\small -205.25} & {\small -199.94}\\
& {\small (0.40)} & {\small (0.32)} & {\small (0.47)} & {\small (0.32)} &
{\small (0.39)}\\
{\small D*age 3135} & {\small -35.91} & {\small -320.73} & {\small -325.40} &
{\small -304.25} & {\small 39.17}\\
& {\small (0.58)} & ({\small 0.62)} & {\small (0.64)} & {\small (0.52)} &
{\small (0.77)}\\
{\small D*age 3640} & {\small 16.78} & {\small -9.80} & {\small 245.00} &
{\small 293.58} & {\small 661.41}\\
& {\small (0.19)} & ({\small 0.22)} & {\small (0.16)} & ({\small 0.17)} &
{\small (0.24)}\\
{\small D*age 41+} & {\small 411.30} & {\small -238.42} & {\small -403.80} &
{\small 22.08} & {\small -117.84}\\
& ({\small 1.15)} & ({\small 2.99)} & ({\small 1.56)} & ({\small 4.52)} &
{\small (0.64)}\\
{\small Married} & {\small 109.26} & {\small 44.59} & {\small 40.54} &
{\small -34.00} & {\small -255.35}\\
& {\small (0.10)} & {\small (0.09)} & {\small (0.07)} & {\small (0.10)} &
{\small (0.10)}\\
{\small High school grad.} & {\small 160.78} & {\small 140.34} &
{\small 53.97} & {\small -28.00} & {\small -118.73}\\
& {\small (0.31)} & {\small (0.30)} & {\small (0.21)} & {\small (0.30)} &
{\small (0.30)}\\\hline
\end{tabular}

\clearpage Table C.3 - continued from previous page%

\begin{tabular}
[c]{lccccc}\hline\hline
& \multicolumn{5}{c}{{\small Quantile}}\\\cline{2-6}
& {\small .15} & {\small .25} & {\small .50} & {\small .75} & {\small .85}%
\\\hline
{\small Some college} & {\small 326.38} & {\small 196.12} & {\small 134.85} &
{\small 81.00} & {\small 37.58}\\
& {\small (0.64)} & {\small (0.48)} & {\small (0.31)} & {\small (0.69)} &
{\small (0.59)}\\
{\small College graduate} & {\small 412.73} & {\small 228.74} &
{\small 303.27} & {\small -25.00} & {\small 301.14}\\
& {\small (1.13)} & {\small (1.26)} & {\small (1.06)} & {\small (1.18)} &
{\small (1.75)}\\
{\small Post graduate} & {\small 400.21} & {\small 329.48} & {\small 169.92} &
{\small -113.00} & {\small 161.44}\\
& {\small (1.33)} & {\small (0.96)} & {\small (1.47)} & {\small (1.50)} &
{\small (1.89)}\\
{\small Hispanic} & {\small \ 9.68} & {\small 36.67} & {\small -131.40} &
{\small -98.00} & {\small -59.02}\\
& {\small (0.40)} & {\small (0.36)} & {\small (0.33)} & {\small (0.43)} &
{\small (0.41)}\\
{\small Black} & {\small -198.94} & {\small -242.27} & {\small -362.94} &
{\small -316.00} & {\small -201.26}\\
& {\small (1.25)} & {\small (0.30)} & {\small (0.58)} & {\small (0.32)} &
{\small (0.38)}\\
{\small Age 2125} & {\small -152.78} & {\small -4.30} & {\small -71.33} &
{\small 173.00} & {\small 140.92}\\
& {\small (.33)} & {\small (0.32)} & {\small (0.26)} & {\small (0.35)} &
{\small (0.31)}\\
{\small Age 2630 } & {\small -386.94} & {\small -100.38} & {\small -65.79} &
{\small 173.00} & {\small 194.59}\\
& {\small (0.55)} & {\small (0.49)} & {\small (0.31)} & {\small (0.64)} &
{\small (1.13)}\\
{\small Age 3135} & {\small -419.94} & {\small -158.99} & {\small -348.74} &
{\small 122.00} & {\small 132.61}\\
& {\small (0.83)} & {\small (0.74)} & {\small (0.56)} & {\small (0.86)} &
{\small (0.63)}\\
{\small Age 3640} & {\small -326.36} & {\small -295.01} & {\small -238.45} &
{\small -51.00} & {\small 322.54}\\
& {\small (1.12)} & {\small (0.51)} & {\small (0.26)} & {\small (0.68)} &
{\small (0.53)}\\
{\small Age 41+} & {\small -464.89} & {\small -184.90} & {\small -60.71} &
{\small 77.00} & {\small 183.04}\\
& {\small (1.37)} & {\small (4.96)} & {\small (2.76)} & {\small (0.99)} &
{\small \ (0.91)}\\
{\small Adequte care} & {\small -63.38} & {\small 193.99} & {\small -19.37} &
{\small 240.00} & {\small 253.58}\\
& {\small (0.61)} & {\small (0.63)} & {\small (0.59)} & {\small (0.47)} &
{\small (0.36)}\\
{\small Intermediate care} & {\small -188.28} & {\small 12.91} &
{\small -82.46} & {\small -1.00} & {\small 90.96}\\
& {\small (0.70)} & {\small (0.70)} & {\small (0.60)} & {\small (.53)} &
{\small (0.44)}\\\hline\hline
\end{tabular}

\end{center}

{\scriptsize Note: The table reports quantile regression estimates for the
effect of smoking on the quantiles of infant birth weight for compliers. The
tax increase is used as an instrument for smoking. The specification also
includes indicators for birth orders, weight gains and medical risk factors.
Robust standard errors are reported in parentheses.} \newline\newline

\end{document}